\newsavebox{\fminipagebox}
\NewDocumentEnvironment{fminipage}{m O{\fboxsep}}
 {\par\kern#2\noindent\begin{lrbox}{\fminipagebox}
  \begin{minipage}{#1}\ignorespaces}
 {\end{minipage}\end{lrbox}%
  \makebox[#1]{%
    \kern\dimexpr-\fboxsep-\fboxrule\relax
    \fbox{\usebox{\fminipagebox}}%
    \kern\dimexpr-\fboxsep-\fboxrule\relax
  }\par\kern#2
 }
\newcommand{\dom}{\mathrm{dom}}
\newcommand{\sparsenftwo}{NF$^2$\xspace}
\newcommand{\partitioning}{Partitioning\xspace}
\newcommand{\banana}[1]{\llparenthesis #1 \rrparenthesis}
\newcommand{\Spec}{\mathit{Spec}}
\newcommand{\Conc}{\mathit{Conc}}
\newcommand{\Spain}{\mathit{Spain}}
\newcommand{\tri}{\triangleright}
\newcommand{\BB}{\mathbb{B}}
\newcommand{\RR}{\mathbb{R}}
\newcommand{\sysname}{\textsc{Eris}\xspace}
\newcommand{\symbolic}{\circledS\xspace}
\newtheorem{theorem}{Theorem}[section]
\newtheorem{lemma}{Lemma}[section]
\theoremstyle{definition}
\newtheorem{example}{Example}[section]
\newtheorem{definition}{Definition}[section]
\newcommand\xqed{%
  \leavevmode\unskip\penalty9999 \hbox{}\nobreak\hfill
  \quad\hbox{$\Diamond$}}
\begin{document}

\title{\sysname: Efficiently measuRing dIscord in multidimensional Sources}

\author{Alberto Abell{\'o} \and
        James Cheney
        }



\maketitle

\begin{abstract}
Data integration is a classical problem in databases, typically decomposed into schema matching, entity matching and data fusion. To solve the latter, it is mostly assumed that ground truth can be determined.
However, in general, the data gathering processes in the different sources are imperfect and cannot provide an accurate merging of values.
Thus, in the absence of ways to determine ground truth, it is important to at least quantify how far from being internally consistent a dataset is.
Hence, we propose definitions of \emph{concordant} data and define a \emph{discordance} metric as a way of measuring disagreement to improve decision making based on trustworthiness.

We define the \emph{discord measurement} problem of numerical attributes in which given a set of uncertain raw observations or aggregate results (such as case/hospitalization/death data relevant to COVID-19) and information on the alignment of different conceptualizations of the same reality (e.g., granularities or units), we wish to assess whether the different sources are concordant, or if not, use the discordance metric to quantify how discordant they are. We also define a set of algebraic operators to describe the alignments of different data sources with correctness guarantees, together with two alternative relational database implementations that reduce the problem to linear or quadratic programming. These are evaluated against both COVID-19 and synthetic data, and our experimental results show that discordance measurement can be performed efficiently in realistic situations.
\end{abstract}

\section{Introduction}\label{sec:intro}

The focus of this work is decision making environments performing complex OLAP-like multidimensional queries \cite{abello.encyclopedia} that extensively use numerical aggregation and involve multiple data sources requiring integration.
Data integration traditionally has three steps \cite{DBLP:journals/jiis/CanalleSL21}: (1) Schema matching and alignment, which overcomes semantic and structural heterogeneity between attributes and entities from different sources; (2) Entity matching (a.k.a. Record linkage), which detects records that correspond to the same real-world entity, and (3) Data fusion (a.k.a. Record merging), which aims to identify the correct one among conflicting values. Thus,
Data fusion refers to the combination of data from different, heterogeneous sources in order to provide a more precise understanding of reality than offered by those sources separately \cite{DBLP:journals/inffus/GutierrezRPLS22}. The quality of its result is clearly affected by the disparity of the involved sources.
For instance, in data warehousing environments, consistent and well known data from different sources go through a well structured cleaning and integration process.
In the wild, sources are typically incomplete and not well aligned, and such data cleaning and integration processes are far from trivial, resulting in imperfect comparisons.
Like in the parable of the blind men describing an elephant after touching different parts of its body (i.e., touching the trunk, it is like a thick snake; the leg, like a tree stump; the ear, like a sheath of leather; the tail tip, like a furry mouse; etc.), in many areas like epidemiology, social sensing or information extraction, different data sources reflect the same reality in slightly different and partial ways, and there is not any ground truth available, requiring truth discovery processes \cite{DBLP:journals/sigkdd/LiGMLSZFH15}.
For example, during the COVID-19 pandemic, it was problematic to have reliable information on number of cases and deaths, since many different actors were independently gathering data that had later to be integrated to make them globally meaningful. Evaluating the reliability of each source was crucial for decision making, and we develop \sysname a tool that facilitates doing this.

In such a complex context, where estimating the source reliability and inferring true information is necessary, we require a tool to measure discrepancies for available data.
Typically, consistency is used to measure to which degree a dataset is free of contradictions \cite{DBLP:journals/inffus/GutierrezRPLS22}, which is done most of the time by simply counting differences in the sources \cite{CaseStudy}, or maybe something more elaborate like using the Shapley value to weight primary key violations as in \cite{DBLP:journals/lmcs/LivshitsK22}, or based on the number of necessary repairs like \cite{DBLP:conf/lpnmr/Bertossi19}. A more complete overview and classification of these kinds  of metrics can be found in \cite{DBLP:conf/ecai/ParisiG20}.
Nevertheless, we contend that better
measures than counting exist for many scenarios in the case of coincidence of numerical attributes, whose distance can be precisely quantified.
Hence, in this case, a binary metric aiming at full consistency does not look realistic and we require a more precise one showing how far values are from each other.

\paragraph{Problem.}
Thus, having in mind that we often analyse data by placing different indicators/features/measures (e.g., number of patients or deaths) in a multidimensional space (e.g., geography and time), the problem we approach in this work is the measurement of numerical disagreement between different  sources. To solve this problem we have to face several difficulties: (a) sources need to firstly go through a difficult and error prone alignment to make them comparable, (b) there can be many alternative metrics, and (c) given the amount of data in today's scenarios, these must be computed efficiently.
Hence, we aim at defining a (a) \textbf{declarative}, (b) \textbf{flexible} and (c) \textbf{scalable} method to quantify discrepancies in the different numerical attributes.

Relational Database Management System (RDBMS) are scalable and provide a declarative query language, together with a flexible mechanism to deal with uncertain and incomplete information by using NULL values.
However, they do not provide any guarantees on alignments and it is well known that NULLs are overloaded with different meanings such as unknown, nonexisting or no-information \cite{abiteboul.foundations}.

\paragraph{Approach.}
First, we restrict the use of NULL only for nonexisting or no-information, and propose to enrich the data model with \emph{symbolic variables} that allow to represent the partial knowledge we might have about uncertain numerical values, and integrate this in an RDBMS whose query results are processed in a solver to generate the desired metric.
Our approach is (a) declarative in that it provides a high-level language (based on standard relational query languages) for expressing the intended alignments among sources.  This high-level language can be translated down to linear or quadratic programming problems that can be solved efficiently.  The translation is proved correct and users do not need to carry it out themselves or be concerned with the low-level details of the encoding.  It is (b) flexible because users can firstly use different distance metrics, and also make changes to alignments (e.g., to accommodate changes to source data formats) using the high-level language rather than manually changing a low-level system of equations.  It is (c) scalable in that despite the NP-completeness of general quadratic programming problems, our approach can find optimal solutions measuring the discord (i.e., distance away from being consistent) in a dataset quickly using off-the-shelf solvers.
This is so, because we only allow linear expressions in the characterization of uncertainty of values and use convex functions in the discord measurement.

\paragraph{Contributions.}
In this paper, we consider problems we call \emph{concordance checking} and \emph{discordance measurement}.  Concordance is the problem of determining whether disparate data sources we wish to integrate are consistent with each other according to some specification (of how they should be related).  Discordance measurement is the problem of determining how close or distant the observed data are from being concordant.
We define a flexible setting, that can be instantiated with different distance metrics (see \cite{DBLP:journals/corr/abs-2208-01644} for alternatives), for the evaluation of the trustworthiness of different sources of multidimensional data based on their concordancy/discordancy using standard linear or quadratic programming solvers.
Moreover, since besides errors and conflicts in data, different conceptualizations are also a problem \cite{DBLP:journals/csur/MountantonakisT19}, we define an algebra that allows to easily describe alignments between sources, and guarantees the correctness of their symbolic evaluation.
While using symbolic variables for NULLs is not a new idea, introduced for example in classical models for incomplete information such as c-tables and v-tables~\cite{imielinski84jacm} and used more recently in data cleaning systems such as LLUNATIC~\cite{geerts.llunatic}, our approach generalizes unknowns to be arbitrary (linear) expressions.

To our knowledge, there is not any system that can automatically generate the measurement of discordance, and even less in the presence of semantic heterogeneities between the sources.
More concretely, in this paper, we contribute:

\begin{enumerate}
    \item A definition and formalization of the problem of \textbf{discord measurement} of databases under some merging processes, independently of the concrete distance metric being used.
    \item A set of algebraic operations to describe high-level \textbf{alignment specifications} that allow to describe merging processes of multidimensional tables with symbolic numerical expressions.
    \item An automatic translation from such specifications to low-level linear or quadratic programs with accompanying proofs of  correctness.
    \item A novel \emph{coalescing} operator that automatically generates \textbf{concordancy constraints} over symbolic tables, that can be efficiently checked with off-the-shelf software, as exemplified in our prototype.
    \item A prototype, \sysname,\footnote{Eris is the Greek goddess of discord.} and an experimental comparison of two alternative implementations using an RDBMS (PostgreSQL) and a  quadratic programming solver (OSQP~\cite{osqp}). For the sake of prototyping, we assume data resides in a single DBMS, but the same approach applies to virtual data integration given the corresponding wrappers.
    \item An analysis of discordances in the epidemiological surveillance systems of six European countries during the COVID-19 pandemic based on EuroStats and Johns Hopkins University (JHU) data.
\end{enumerate}
\paragraph{Organization.}
Section~\ref{sec:example} presents a motivational example that helps to identify the problem formally defined in Section~\ref{sec:problem}, whose solution based on an algebraic query language for symbolic evaluation is presented in Section~\ref{sec:solution}. Section~\ref{sec:implementation} details two alternative relational implementations, which are evaluated in Section~\ref{sec:experiments}. Our experimental results show that both approaches provide acceptable performance, and illustrate the value of our approach for assessing the discordance of COVID-19 epidemiological surveillance data at different times and countries between March 2020 and February 2021.
The paper concludes with related work and conclusions in Sections~\ref{sec:relatedwork} and~\ref{sec:conclusions}.  A preliminary short paper (presenting the problem statement informally via examples and excluding the main technical content in sections \ref{sec:problem}--\ref{sec:experiments}) has been published in \cite{AlbertoJames.DOLAP}.

\section{Motivating example}\label{sec:example}
We used COVID-19 data in our experiments and examples, which are widely available and of varying quality, making them a good candidate for discordance evaluation.
We consider that a network of actors (i.e., governmental institutions) take \emph{primary} measurements of COVID-19 cases and \emph{derive} some aggregates from those.
In an idealized setting, we would expect to know all the relationships and have consistent measurements for each primary attribute, and each derived result would be computed exactly with no error.
However, some relationships are unknown and both primary and derived attributes are noisy, biased, unknown or imperfect.
We illustrate now how to model it using database schemas and views, and describe the different problems we need to solve in this scenario.\footnote{We assume some familiarity with relational model, queries, views, SQL, etc.~\cite{abiteboul.foundations} as well as with the multidimensional data model~\cite{abello.encyclopedia}.}

\begin{figure}
\begin{center}
\includegraphics[width=10cm, clip]{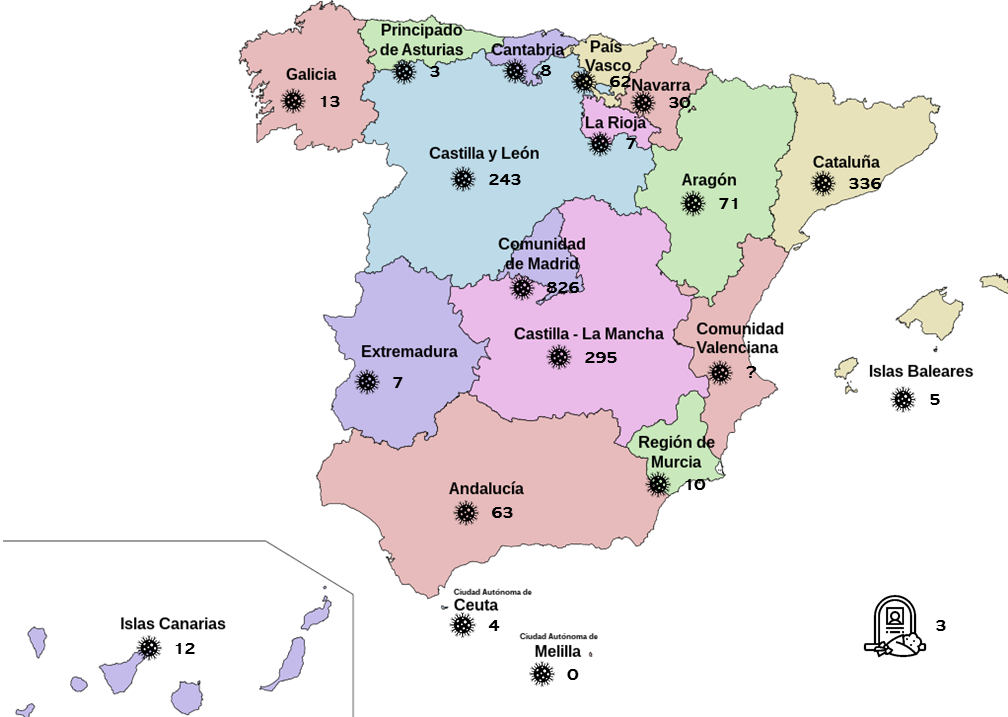}
\end{center}
\caption{\label{fig:PanemMap} $\Spain$'s map $[$source: Wikipedia$]$ with epidemiological information reported on week 2020W24 (total of 2,142 cases and three deaths in the whole country)}
\end{figure}
\begin{example}\label{label:exampleIntro}
$\Spain$, as depicted in Fig.~\ref{fig:PanemMap}, comprises nineteen regions ($R_\text{I},$ $\ldots ,R_{\text{XIX}}$).
In each region, there are several hospitals and a person living in $R_i$ is monitored by at most one hospital.
Hospitals report their number of cases of COVID-19 to their regional governments, and each regional government reports to the Ministry of Health (MoH).

Given their management autonomy, the different regions in $\Spain$ use different and imperfect monitoring mechanisms and report separately the COVID-19 cases they detect every week.
Suppose that despite being gathered daily at health facilities, $\Spain$ is only reporting weekly to the European Centre for Disease Prevention and Control (CDC) partial information at the region level and the overall information of the country.
We can model this using relational tables with the weekly region and country information, and try to use SQL to \textbf{measure discord} between them.

{\small
\begin{lstlisting}[language=sql,escapechar=\%]
ReportedRegion(%\underline{region, week}%, cases)
ReportedCountry(%\underline{week}%, cases)
\end{lstlisting}
}
\vspace{-0.5cm}
\xqed\end{example}

The first thing that must be done before measuring discrepancies is to overcome semantic and schematic heterogeneity. Thus, in terms of SQL, we can align the schemas through named queries (a.k.a. views).

\begin{example}\label{example.view}
Before making any measurement, we need to align the two sources by describing the \textbf{merging process}.
In this case, the following view aggregates the regional data for each week, which ought to coincide with the values per country:

{\small\normalfont
\begin{lstlisting}[language=sql,escapechar=\%]
CREATE VIEW AggReported AS
SELECT week, SUM(cases) AS cases
FROM ReportedRegion GROUP BY week;
\end{lstlisting}
}
\vspace{-0.5cm}
\xqed\end{example}

Once we know that quantities in the attributes are using the same units, scales, etc., and assuming that we already have properly identified the different entities, we can simply count coincidences in the attribute values.

\begin{example}\label{example.assertions}
Ideally, if all COVID-19 cases were detected and properly reported, the week should unambiguously determine the number of cases (i.e., information derived from reported cases, both at region and country levels, must coincide). In terms of SQL, as in~\cite{CaseStudy}, this could be checked with the assertion of a \textbf{concordancy constraint} in the form of a simple query like the following.

{\small\normalfont
\begin{lstlisting}[language=sql,escapechar=\%]
SELECT count(*)
FROM ReportedCountry
NATURAL JOIN AggReported;
\end{lstlisting}
}
\vspace{-0.5cm}
\xqed\end{example}

Nevertheless, as explained above, achieving exact consistency seems unlikely in any real setting.
Pure coincidence (or even string similarity metrics such as Levenshtein distance~\cite{DBLP:journals/fcsc/YuLDF16}) does not give an idea of the magnitude of discrepancies in numeric data.
For example, in the case of European countries, and according to the real data used in the experiments in Section~\ref{section.CaseStudy}, we can see that the reported cases at country and region level only coincide for one country (DE) in week 24.
If we use Levenshtein distance with a threshold of 20\% of the overall number of digits, we get three more coincidences for UK and one more for IT (still none for ES, NL and SE).
Thus, using existing techniques (i.e., assertions) it is possible to check full consistency (i.e., value coincidence) among data sources when there is no uncertainty, but it is not straightforward to quantify to which extent the various data sources are consistent with the expected relationships, in the presence of unknown values or suspected errors in reporting.

\begin{example}\label{example.data}
We can see that the following database is not consistent with the view specification above, in part because the cases of one of the nineteen Spanish regions (i.e., $R_{\text{XIX}}$, which in reality corresponds to \emph{Comunidad Valenciana}) are not declared, but also because the sum of cases per region (1,995) do not add up to the overall amount in the country (2,142).
Thus, it is not enough to say that the database is inconsistent, but we can see that there is a discrepancy of $2142-1995=147$ cases.

{\small\normalfont
\begin{lstlisting}[language=sql,escapechar=\%]
ReportedRegion("I","2020W24",13)
%$\dots$%
ReportedRegion("XVIII","2020W24",12)
ReportedCountry("2020W24",2142)
AggReported("2020W24",1995)
\end{lstlisting}
}

First of all, it is important to realize that the simplistic approach of using NULL just worsens the problem, because replacing any value with it would only result in a loss of information. Instead, we can assign an error factor $\epsilon_i$ to every value $v_i$ in the database, and measure the average of squared difference from each number of weekly cases $v_i$ to the midpoint $m$ (a.k.a. average) so that $v_i\cdot(1+\epsilon_i)= m$ with the following query. According to \cite{DBLP:journals/corr/abs-2208-01644}, one of the most common goodness of fit measures is least squares error.

{\small\normalfont
\begin{lstlisting}[language=sql,escapechar=\%]
SELECT week,
  (((r.cases+a.cases)/(2.0*r.cases)-1)^2
  +((r.cases+a.cases)/(2.0*a.cases)-1)^2)/2
FROM ReportedCountry r JOIN AggReported a
  ON r.week=a.week;
\end{lstlisting}
}
It is important to notice, firstly the dependence of the query on the distance being used, and also how its complexity grows with the number of variables and their corresponding alignments.
\xqed\end{example}

\begin{figure}
    \begin{center}
    \includegraphics[width=10cm clip]{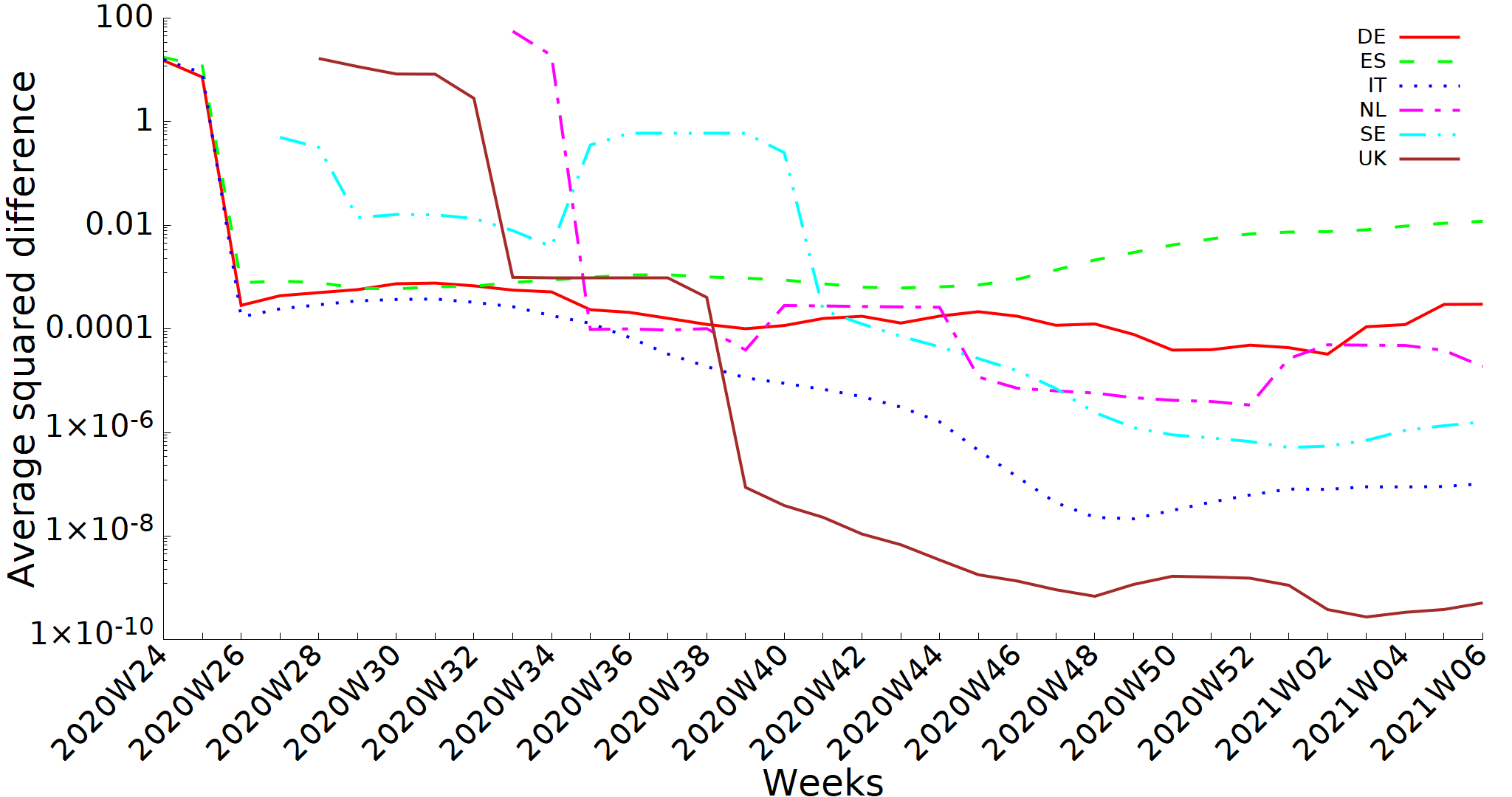}
    \end{center}
    \caption{Average squared error per week (against average)}\label{fig:CasesErrorSQL}
\end{figure}

\begin{example}\label{example.CasesErrorSQL}
We now consider the same $\Spain$ scenario, but using the real data of six European countries discussed in detail in Section~\ref{section.CaseStudy}. Taking as reference value simply the average of the two reported, we apply a similar query, obtain the value for all 39 weeks in our case study, and get the line chart in Fig.~\ref{fig:CasesErrorSQL}, which shows the average discordance of each dataset along time in a running average of five weeks.
We need to use logarithmic scale because the distance (measured as the average squared difference) between values can vary from one week to another in several orders of magnitude.
\xqed\end{example}
Even though, as seen in the previous example, we can go beyond simply counting discrepancies (a.k.a. voting) with only SQL, we contend that we require some specific mechanism to properly and flexibly quantify discordancy between different sources. From the query generation point of view, it is easy to realize that having more than two sources, or simply considering potential error variables in all tuples (e.g., those nineteen actually reported at region level) substantially complicates the SQL code. Indeed, if we think of manually generating the formula for any potential alignment (overcoming any semantic heterogeneity) between multiple sources, it is clear that it is not only error prone, but simply unfeasible.\footnote{Section~\ref{section.CaseStudy} presents, in the same COVID-19 case study, a more realistic and complex alignment of 35 algebraic operations on multiple sources (Fig.~\ref{fig:COVIDquery}) showing all the potential of our discordance quantification technique. For this case, \sysname automatically generates, with all the correctness guarantees, two SQL queries of 145 and 225 operators in the PostgreSQL access plan, and a Python file of 8\,538 characters.}
Moreover, the mechanism should be flexible enough to facilitate changing the definition of distance (e.g., from relative to absolute values), or the function (e.g., from sum of squares to simple sum of values), or even assigning weights to the distances depending on their sources.
To our knowledge, there is not any other tool that allows to do this.

\section{Problem formulation}\label{sec:problem}

Given an \emph{idealized} scenario (specified by its schema and views) and a collection of actual observations (both primary and derived),  we can still consider two  problems:
\begin{enumerate}[(A)]
\item\label{problem:estimate}\emph{Value estimation}: Estimate the values of numerical attributes of interest (e.g., the number of COVID-19 cases across $\Spain$) that make the system consistent, a special case of Truth Discovery \cite{DBLP:journals/sigkdd/LiGMLSZFH15}.
\item\label{problem:evaluate}\emph{Discord evaluation}: Evaluate how far is the actual, discordant dataset from an idealized concordant one.
\end{enumerate}

Problem~\ref{problem:estimate} is the well-studied statistical estimation problem through numerical fusion operators \cite{DBLP:journals/tsmc/Bloch96},
so this  can be very difficult, 
especially where the interrelationships among different data sources are complex (see \cite{DBLP:journals/csur/BleiholderN08} for a survey of existing systems).  Instead, we consider problem~\ref{problem:evaluate}, which is roughly analogous to computing the distance function used in truth discovery approaches \cite{DBLP:journals/sigkdd/LiGMLSZFH15}, as an indication of source reliability.  However, unlike conventional truth discovery, which considers homogeneous datasets from different sources that might have different reliability but follow a common format, we consider situations where there are heterogeneous data sources providing complementary views of the real phenomenon, but where the available data sources have nontrivial relationships.  Traditional truth discovery seeks to identify consensus values that minimize the distance from these to the observations, whereas in our setting, we have a single, heterogeneous dataset and we want to measure how far it is away from being consistent with our expectations.  That is, we wish to measure the distance from the observed data to the nearest self-consistent dataset (of which there may be infinitely many), not just a finite number of distances between homogeneous datasets.

Given a (probably incomplete but overlapping) set of instances, we assume only a merging process specification in the form of \emph{expectations} about their alignment, expressed using database queries and views, and try to answer the following questions.
Considering on the one hand the queries and views specifying the expected behavior, and on the other the data corresponding to observations of some of the inputs, intermediate results, or (expected) outputs, is the observed numeric data complete and concordant considering the alignment specification?  If there is missing data, can the existing datasets be extended to some complete instance that is concordant?  Finally, how far from being fully consistent is the numerical data?

Consequently, we aim at extending DBMS functionalities for generic concordance evaluation as a way to quantify how far away the data are from being consistent.
Although our goal in this paper is not to find a realistic estimate of the true values of unknown or uncertain data, but instead  to quantify how close the data are to our expectations under the given alignment, we need to make some assumption on this.
As in Example~\ref{example.data} and~\ref{example.CasesErrorSQL}, taking the average of multiple points is always possible, but over-simplistic.
Thus, we contend that using the value minimizing the errors of all sources, although more complex, is more principled (e.g., in our case, it gives a more comparable measure and avoids the need of using logarithmic scale, as in Fig.~\ref{fig:CasesErrorSQL}).
It is important to clarify that while the approach produces estimates for the uncertain values as a side-effect, they may not have any statistical validity unless additional work is done to statistically characterize the sources of uncertainty, which we see as a separate problem.

\paragraph{Notation.}
We assume some familiarity with foundations of relational databases, as covered for example by textbooks like Abiteboul et al.~\cite{abiteboul.foundations}.
We use the following notational conventions for tuples and relations: a tuple $t$ is a finite map from some set of attribute names $\{A_1,\ldots,A_n\}$ to data values $d \in D$.
We use letters such as $K,U,V,W$, etc. to denote sets of attribute names, and sometimes write $U,V$ to indicate the union of disjoint attribute sets (i.e., $U \cup V$ when $U$ and $V$ are disjoint) or $U,A$ to indicate addition of new attribute $A$ to attribute set $U$ (i.e., $U \cup \{A\}$ provided $A \notin U$). We also write $U \backslash V$ for the difference of attribute sets.  Data values $D$ include real numbers $\RR$, and (as discussed below) value attributes are restricted to be real-valued.  The domain of a tuple $\dom(t)$ is the set of attribute names it maps to some value.  We write $t.A$ for the value associated with $A$ by $t$, and $t[U]$ for the tuple $t$ restricted to domain $U$.  We write $t.A := d$ to indicate the tuple obtained by mapping $A$ to $d$ and mapping all other attributes $B \in \dom(t)$ to $t.B$.  Note that $\dom(t.A := d) = \dom(t) \cup \{A\}$ and this operation is defined even if $A$ is not already mapped by $t$.  Furthermore, if $V = \{B_1,\ldots,B_n\}$ is an attribute set and $u$ is a tuple with domain $U\supseteq V$, then we write $t.V := u$ as an abbreviation for $(\cdots(t.B_1 := u.B_1).B_2:= u.B_2\cdots).B_n := u.B_n$, that is for the result of (re)defining $t$ to match $u$ on the attributes from $V$.  Finally, when the range of $t,u$ happens to be $\RR$, that is, $t$ and $u$ are real-valued vectors, we write $t + u$ for the vector sum, $\alpha \cdot t$ for scalar multiplication by $\alpha \in \RR$,  and when $Z$ is a finite multiset of such real-valued vectors, we write $\sum Z$ for their vector sum.

Relational databases generally have schemas that describe the field names and types of each relation in the database, as well as integrity constraints such as key and foreign key constraints.  Our approach assumes data adhering to a simple  multidimensional data model; specifically this means we consider the fields of each relation to be split into two sets, \emph{keys} which identify a particular row uniquely and may be either numeric or descriptive (e.g., geographical location, date) and \emph{values} that must be numeric and give quantitative indicators associated with keys.  Relations of this form are essentially partial finite maps from the keys to the values. We define schemas as follows to ensure that they have this form.

\begin{definition}[Finite Map Signature]
A finite map signature is a relational schema with a primary key annotation, which is written $K \tri V$, where $K$ and $V$ are disjoint attribute names.
A relation matches such signature if its attributes are $K \cup V$, the key-fields $K$ are elements of the data domain $D$, the value-fields $V$ are real numbers $\RR$, and it satisfies the Functional Dependency (FD) $K \to V$ (i.e., any two tuples in the relation that match on the fields in $K$ also match on the fields in $V$), that is we write $R : K \tri V$ for what is conventionally written as $R(\underline{A_1,\ldots,A_m},B_1,\ldots,B_m)$ when $K = A_1,\ldots,A_m$ and $V = B_1,\ldots,B_m$.
\end{definition}

\begin{definition}[Finite Map Schema]
A \emph{finite map schema} $\Sigma$ is an assignment of relation names $R_1,\ldots,R_n$ to \emph{finite map signatures}.  A database instance $I$ matches $\Sigma$ if each relation $R$ of $I$ matches the corresponding signature $\Sigma(R)$.  We write   $Inst(\Sigma)$ for the set of all instances of a schema $\Sigma$.
\end{definition}

\begin{example}\label{example.eschemas}
Thus, our example in page \pageref{label:exampleIntro} contains tables
\[
\begin{array}{ll}ReportedRegion:\{region,week\}\tri\{cases\} & \text{ and }\\
ReportedCountry:\{week\}\tri\{cases\}\,.
\end{array}\]
We also introduce here (for later use) a new table
\[DeclaredDeaths:\{week\}\tri\{deaths\}\]
reporting the total number of deaths across the whole country, with the following data:

{\small\normalfont
\begin{lstlisting}[language=sql,escapechar=\%]
DeclaredDeaths("2020W24",3)
\end{lstlisting}
}
\vspace{-0.5cm}
\xqed\end{example}

Given that our goal is to define and measure the degree of discordance of different data sources with complementary multidimensional information (e.g., $Repor\-tedRegion$, $ReportedCountry$, $DeclaredDeaths$) under a data fusion process, it is important to notice that discrepancies occur when they assign different values to the same key.
Consequenly, we will work in a setting where not only the source tables, but also query results (and view schemas) need to satisfy such finite map constraints, indicating that different coincident sources quantify features of the same object or event.
We introduce a specialized query language and type system to maintain these constraints while dealing with uncertainty, which arises from completely unknown/missing values, or reported measurements that have some unknown error.

Specifically, we define a high-level alignment definition language and a flexibly configurable metric (see \cite{DBLP:journals/corr/abs-2208-01644} for alternatives) that can be efficiently computed with off-the-shelf software.
Indeed, the key contribution of this paper is that both checking concordance and measuring discord can be done by augmenting the data model with \emph{symbolic expressions}, and this in turn can be done consistently and efficiently in an RDBMS with the right set of algebraic operations guaranteeing correctness.  We formalize this intuition next.

\section{Proposed solution}\label{sec:solution}
In the following, we introduce the three mechanisms that constitute our technique, and how to use them together to tackle the problem at hand.
Firstly, in Section~\ref{sec:algebra}, we define a variant of relational algebra for queries over tables that are finite maps. This guarantees that the result of any query is still a finite map.
Then, in Section~\ref{section.symbolicEvaluation}, the concept of \emph{symbolic tables} representing uncertainty is defined and the effect of each operator over them formally established and exemplified, together with their correctness proof.
Afterwards, in Section~\ref{section.fusion}, a new abstract operator (a.k.a. \emph{fusion}) is introduced to establish the behaviour in finding coincident instances in the presence of an \emph{alignment specification} of different sources, and show how it can be implemented by reduction to linear or quadratic programming.
Finally, in Section~\ref{sec:alltogether}, using the previous toolset, we formally define and exemplify the concordance and discordance problems.

\subsection{Restricted algebra}\label{sec:algebra}

We consider a variation of relational algebra over finite maps, whose type system ensures that the finite map property is preserved in any query result.
\[\begin{array}{rcl}
c &\mathrel{::=} & A = B \mid A < B \mid \neg c \mid c \wedge c'\\
e &::=& \alpha \in \RR \mid A \mid e + e' \mid e - e' \mid e \cdot e' \mid e / e'\\
q &\mathrel{::=} &R \mid \sigma_c(q) \mid \hat{\pi}_{W}(q) \mid q \Join q' \mid q \uplus_B q' \mid q \backslash q' \\
 & & \mid \rho_{B \mapsto B'}(q) \mid \varepsilon_{B:=e}(q) \mid \gamma_{K;V}(q)
\end{array}\]

Conditions $c$ and expressions $e$ are typical sublanguages containing Boolean combinations of (in)equalities among attributes, and real-valued arithmetic operations over attributes, respectively.
Queries $q$ are loosely based on the standard relational algebra with extensions for grouping/aggregation and expression evaluation.  They include several standard forms such as relation names $R$, selections $\sigma_c(-)$, projections $\hat{\pi}_W(-)$, set difference $\backslash$, renaming $\rho_{B \mapsto B'}$, and joins ($\Join$), as well as discriminated union $\uplus_B$, expression evaluation $\varepsilon_{B:=e}(-)$, and aggregation $\gamma_{K;V}(q)$.
\begin{figure}[tb]
\begin{mathpar}\small
\inferrule*{R : K \tri V \in \Sigma}
{\Sigma \vdash R : K \tri V}
\and
\inferrule*{
\Sigma \vdash q : K \tri V \\
K \vdash c : \BB
}{
\Sigma \vdash \sigma_c(q) : K \tri V}
\and
\inferrule*{
\Sigma \vdash q : K \tri V \\
W \subseteq V
}{
\Sigma \vdash \hat{\pi}_{W}(q) : K \tri V\backslash W}
\and
\inferrule*{\Sigma \vdash q_1 : K_1 \tri V_1\\
\Sigma \vdash q_2 : K_2 \tri V_2\\
V_1 \cap V_2 = \emptyset}
{\Sigma \vdash q_1 \Join q_2 : K_1 \cup K_2 \tri V_1 , V_2}
\and
\inferrule*{\Sigma \vdash q : K \tri V\\
\Sigma \vdash q' : K \tri V}
{\Sigma \vdash q \uplus_B q' : K,B \tri V}
\and
\inferrule*{\Sigma \vdash q : K \tri V}
{\Sigma \vdash \rho_{A \mapsto B}(q) : K[A \mapsto B] \tri V[A \mapsto B]}
\and
\inferrule*{\Sigma \vdash q : K \tri V \\
\Sigma \vdash q' : K \tri \emptyset}
{\Sigma \vdash q \backslash q' : K \tri V}
\and
\inferrule*{\Sigma \vdash q : K \tri V \\
K,V \vdash e : \RR}
{\Sigma \vdash \varepsilon_{A = e}(q) : K \tri V,A}
\and
\inferrule*{\Sigma \vdash q : K \tri V\\
K'\subseteq K \\ V' \subseteq V}
{\Sigma \vdash \gamma_{K';V'}(q) : K'\tri V'}
\end{mathpar}
\caption{Well-formed queries}\label{fig:wf}
\end{figure}
Fig.~\ref{fig:wf} defines the well-formedness relation for queries.  The rules in Fig.~\ref{fig:wf} define the relations $\Sigma \vdash q : K \tri V$ inductively as the least relation satisfying the rules, where each rule is interpreted as an implication ``if the hypotheses (shown above the line) hold then the conclusion (shown below the line) holds.'' For more background on type systems, inference rules and inductive reasoning about them, see a standard textbook such as Pierce~\cite{pierce02tapl}.  The judgment $\Sigma \vdash q : K \tri V$ states that in schema $\Sigma$, query $q$ is well-formed and has type $K \tri V$.  Intuitively, this means that if $q$ is run on an instance of $\Sigma$, then it produces a result relation that is a finite map $ R: K \tri V$.  We make use of additional judgments for well-formedness of selection conditions $c$ ($U \vdash c : \mathbb{B}$) and expressions $e$ ($U \vdash e : \mathbb{R}$) which are standard and omitted.
Later on, in the next section, a type system is defined that specification queries are required to adhere to, which is illustrated in Figure~\ref{fig:spec-wf}, page \pageref{fig:spec-wf}.

\begin{description}
\item[Selection $\sigma_c(R)$] behaves as in relational algebra; the selection criterion $c$ is evaluated on each row in the input table and those rows satisfying $c$ are retained in the output, while the rest are discarded.  The type system restricts selection criteria to consider only predicates over key values  that evaluate to boolean $\BB = \{\mathit{true},\mathit{false}\}$.  
\item[Projection-away $\hat{\pi}_U(R)$] projects the fields in $U$ away (that is, removes them from the rows of its argument).  The type system  only allows projection-away of value-fields, that is, requires $U \subseteq V$; this ensures the results of these operations are still finite maps.  
\item[Join $R\Join S$] takes two finite maps, whose key fields may overlap, and returns tuples formed by fusing all pairs of tuples whose common fields have the same values.  Unlike in general relational algebra, the arguments to a join may only have overlapping key-fields and not shared value-fields.  
\item[Discriminated union $R\uplus_B S$] combines two finite maps, adding a new field $B$ whose value will differentiate the tuples coming from the first or respectively second argument.  We do not allow arbitrary unions because the union of two finite maps is not a finite map if the domains overlap.  
\item[Difference $R \backslash S$] removes the keys in $S$ from $R$, where $S$ has no value fields (i.e., is just a set of keys).
\item[Renaming $\rho_{B \mapsto B'}(R)$] that changes the name of field $B$ to $B'$. 
\item[Derivation $\varepsilon_{B:=e}(R)$] performs arithmetic calculations by adding a new value-field $B$ which is initialized by evaluating expression $e$ using the field values in each row.
\item[Grouping/aggregation $\gamma_{K;V}(R)$] performs grouping on key-fields $K$ and aggregation by summing the value-fields $V$. The constraint that grouping can only be performed on key-fields and aggregation on value-fields ensures that the results are still finite maps.
\end{description}

\begin{example}\label{example.queries}
We now illustrate the query language above on the running example scenario introduced in page \pageref{label:exampleIntro}.
Thus, we can get the sum of all cases reported by region in a given week using our query language as $\gamma_{week;cases}(ReportedRegion)$.  As discussed earlier, this results in a sum of 1,995 cases (not the 2,142 one would expect).  We also expect that the number of deaths is 0.015 times the number of reported cases, which would be written as $\varepsilon_{deaths=0.015*cases}(ReportedCountry)$, but again does not coincide with the three declared deaths.  
\xqed\end{example}

Some of these restrictions help ensure that the query operations preserve the finite map property described by the typing rules.  Others, though not necessary for this purpose, are nevertheless helpful later when we generalize the queries to evaluate over symbolic values.  In particular forbidding selections or joins that involve comparing value fields will help avoid the need for some of the complexities  encountered in c-tables~\cite{imielinski84jacm} or work on provenance for aggregate queries~\cite{amsterdamer.aggregates}.  These restrictions have not posed problems in part because, especially in the multidimensional model, it is usually not necessary or desirable to perform exact comparisons on (continuous-valued) value fields when describing how different sources are aligned. Instead, we often do want to express that different sources should be close together, but we can do this by introducing symbolic variables that represent unknown errors distorting the true value, and imposing equational constraints using fusion and coalescing operators, as explained later.

To allow for better understanding of how well or badly the data conforms to our expectations, expressed using queries, we next consider symbolic evaluation of queries over tables in which some values can be variables (or more generally, expressions).

\subsection{Symbolic evaluation}\label{section.symbolicEvaluation}

The basic idea is to represent \emph{unknown} real values with variables $x \in X$.  Variables can occur multiple times in a table, or in different tables, representing the same unknown value, and more generally unknown values can be represented by (linear) expressions.
However, key values $d \in D$ used in key-fields are required to be known.  This reflects the assumption that the source database is \emph{partially closed} \cite{fan.relative}, that is, we assume the existence of master data for the keys (i.e., all potential keys are coincident and known).

\begin{definition}[Symbolic Expression]
Let $X$ be some fixed set of variables.  A \emph{symbolic expression  $e$ over $X$} is a real-valued expression in $\RR[X]$, the set of polynomials in $\RR$ with variables from $X$.  We normally consider only linear expressions (e.g., $a_0+a_1x_1+\dots+a_nx_n$).
\end{definition}

\begin{definition}[Symbolic Table]
A \emph{symbolic table}, or \emph{s-table (over $X$)} $R : K \tri V$ is a table (with the name prepended with $\symbolic$) in which attributes from $K$ are mapped to discrete non-null values in $D$ (as before) and value attributes in $V$ are mapped to symbolic expressions in $\RR[X]$.
\end{definition}

We define the \emph{domain} of an s-table $dom(R)$ to be the set of values of its key attributes.
We say that an s-table is \emph{linear} if each value attribute is a linear expression in variables $X$.

An s-table is \emph{ground} if all of the value entries are actually constants, i.e. if it is an ordinary database table (though still satisfying the functional dependency $K \to V$).  The restrictions we have placed on symbolic tables are sufficient to ensure that symbolic evaluation is correct with respect to evaluation over ground tables, as we explain below.

We now clarify how real-world uncertain data (e.g., containing NULLs for unknown values instead of variables, or containing values that might be wrong) can be mapped to s-tables. 

Suppose we are given an ordinary database instance $I\in Inst(\Sigma)$, which may have missing values (a.k.a., NULLs) and uncertain values (i.e., reported values which we do not believe to be exactly correct).  To use the s-table framework, we need to translate such instances to s-instances $I'$ that represent the set of possible complete database instances that match observed data $I$.
In doing this and to allow for the possibility that some reported values present in the data might be inaccurate, we replace such values with symbolic expressions containing variables from $X$. We restrict ourselves to linear expressions for the sake of efficiency, but still our approach allows to do this in many ways, with different justifications based on the application domain, and consequently different quantifications of discordance that should be interpreted accordingly. To exemplify it, in this paper, we replace uncertain values $v$ with $v\cdot(1+x)$ (or simply $x$ if $v=0$) where $x$ is an error variable.  The reason for doing this and not simply $v+x$ is that the weight we associate in our experiments with an error variable $x$ is $x^2$, so the cost of errors is scaled to some extent by the magnitude of the value (e.g., it should be easier to turn $1{,}000{,}000$ into $2{,}000{,}000$ than $1$ into $10$).
On the other hand, a natural way to handle missing values in s-tables is to replace each one in the original relational table with a distinct variable with no associated cost.
In particular scenarios, we might instead prefer to replace each NULL with an expression $c\cdot(1+x)$ where $c$ is an educated guess as to the likely value, but here we consider only the simple approach of a NULL mapped to a variable.
In general, we can assign to attributes any linear expression, with any number of variables, and reuse these variables in any number of attributes or tables.

\begin{example}\label{example.stables}
It is easy to see that, in our example on page \pageref{label:exampleIntro}, there are many possibilities of assigning cases of COVID-19 to the different regions of $\Spain$ that add up to 2,142 in the studied week, and consequently improve the consistency of our database, which may be easily represented by replacing constants by symbolic expressions ``$v_i(1+x_i)$'', where $v_i$ is the corresponding value and $x_i$ is an error parameter representing that cases may be missed or overreported in every region. The cases for region $R_{\text{XIX}}$, that were not reported at all, could then be simply represented by a variable $x_{\text{XIX}}$.
Nevertheless, this may not completely explain the mismatch between cases reported at the country and region levels, and there might also be some doubly-counted or hidden cases in $\Spain$ (for example, in the \emph{Ciudad Autonoma de Melilla}, which this week declared not to have any cases), which we represent by variable $(1+y)$.
On the other hand, we can also consider census data and try to attribute all the excess deaths to COVID-19, which clearly involves some imprecision, too.
So we should apply some error term $(1+z)$ to the declared number of deaths coming from the census, as well.
Therefore, s-tables $\symbolic ReportedRegion:\{region,week\}\tri\{cases\}$, $\symbolic ReportedCountry:\{week\}\tri\{cases\}$ and $\symbolic Declared\-Deaths:\{week\}\tri\{deaths\}$ would contain:

{\small\normalfont
\begin{lstlisting}[language=sql,escapechar=\%]
%$\symbolic$%ReportedRegion("I","2020W24",%$13*(1+x_\text{I})$%)
%$\dots$%
%$\symbolic$%ReportedRegion("XVIII","2020W24",%$12*(1+x_{\text{XVIII}})$%)
%$\symbolic$%ReportedRegion("XIX","2020W24",%$x_{\text{XIX}}$%)
%$\symbolic$%ReportedCountry("2020W24",%$2142*(1+y)$%)
%$\symbolic$%DeclaredDeaths("2020W24",%$3*(1+z)$%)
\end{lstlisting}
}
\vspace{-0.5cm}
\xqed\end{example}

We now make the semantics of our query operations over s-tables precise in Fig.~\ref{fig:evaluation}. An essential property of this semantics is that (for both ground and symbolic tables) the typing rules ensure that a well-formed query evaluated on a valid instance of the input schema yields a valid result table, preserving the desired properties, that is, ensuring that the resulting tables are valid s-tables, and moreover ensuring that the semantics of query operations applied to s-tables is consistent with their behavior on ground tables.  Moreover, symbolic evaluation preserves linearity, which is critical for ensuring that the constrained optimization problems arising from symbolic evaluation fit standard frameworks and can be efficiently solved.

\begin{figure}[tb]
\begin{eqnarray*}
\sigma_c(R) &=& \{t \mid t \in R, c(t) = \mathit{true}\}\\
\hat{\pi}_W(R) &=& \{t[K,V\backslash W] \mid t\in R\}\\
R \Join S &=& \{t[K_R\cup K_S,V_R,V_S] \\
& & \qquad \mid t[K_R,V_R] \in R, t[K_S,V_S] \in S\}\\
R \uplus_D S &=& \{t.D := 0 \mid t \in R\} \cup \{t.D := 1 \mid t \in S\}\\
R \backslash S &=& \{t \mid t \in R, t[K] \notin S\}\\
\rho_{B\mapsto B'}(R) &=& \{t[K\backslash B,V\backslash B].B':=t.B \mid t \in R\}\\
\varepsilon_{B := e}(R) &=& \{t.B := e(t) \mid t \in R\}\\
\gamma_{K';V'}(R) &=& \{t[K'].V' := sum(R,t,K',V')
 \mid t \in R\}
 \\
sum(R,t,K,V) &=& \sum ( t'[V] \mid t' \in R, t[K] = t'[K])
\end{eqnarray*}
\caption{Symbolic Evaluation}
\label{fig:evaluation}
\end{figure}

The following paragraphs describe and motivate the behavior of each operator, and informally explain and justify the correctness of the well-formedness rules ensuring that the result of (symbolic) evaluation is a valid (symbolic) table.
\begin{compactitem}
\item Selection ($\sigma_c(R) : K \tri V$ where $R : K \tri V$). We permit $\sigma_c(R)$  when $c$ is a Boolean formula referring only to fields $A,B,\ldots \in K$.  If comparisons involving symbolic values were allowed, then the existence of some rows in the output could depend on unknown variable values, so would not be representable just using s-tables.
\item Projection-away ($\hat{\pi}_{W}(R) : K \tri V\backslash W$ where $R : K \tri V$ and $W \subseteq V$).  The projection operator projects-away only value-fields. Discarding key-fields could break the finite map property by leaving behind tuples with the same keys and different values.
\item Join ($R \Join S : K_1 \cup K_2 \tri V_1 \cup V_2$ where $R:K_1 \tri V_1$, $S : K_2 \tri V_2$ and $V_1 \cap V_2 = \emptyset$).  Joins can only overlap on key-fields, for the same reason that selection predicates can only select on keys: if we allowed joins on value-fields, then the result of a join would not be representable as an s-table.
\item Discriminated union ($R \uplus_D S : K,B \tri V$ where $R : K \tri V$ and $S: K \tri V$).  The union of two finite maps may not satisfy the functional dependency from keys to values. We instead provide a discriminated union that tags the tuples in $R$ and $S$ with a new key-field $B$  to distinguish the origin.  
\item Renaming ($\rho_{B \mapsto B'}(R) : K[B \mapsto B'] \tri V[B \mapsto B']$ where $R : K \tri V$). Note that since $K$ and $V$ are disjoint, the renaming applies to either a key-field or a value-field, but not both.  In any case, this clearly preserves the finite map property.
\item Difference ($R \backslash S$ where $R : K \tri V$ and $S : K \tri \emptyset$).  The difference of two maps discards from $R$ all tuples whose key-fields are present in $S$.  The result is a subset of $R$ hence still a valid finite map. We assume $S$ has no value components; if not, this can be arranged by projecting them away in advance.
\item Derivation  ($\varepsilon_{B := e}(R) : K \tri V,B$ where $R : K \tri V$ and $e$ is a linear expression over value-fields $V$). 
No new keys are introduced so the finite map property still holds.
\item Aggregation ($\gamma_{K';V'}(R)$ where $R : K \tri V$ and $K'\subseteq K$ and $V' \subseteq V$).  We allow grouping on key-fields and aggregation of value-fields (possibly discarding some of each).  We consider SUM as the only primitive form of aggregation; COUNT and AVERAGE can be easily defined from it.
\end{compactitem}

\begin{example}\label{example.algebra}
Given $\symbolic ReportedRegion:\{region,week\}\tri\{cases\}$, and $\symbolic ReportedCountry:\{week\}\tri\{cases\}$, we can define views:

{\footnotesize
\begin{align*}
\symbolic AggReported &:=\gamma_{\{week\};\{cases\}}(\symbolic ReportedRegion)\\
\symbolic InferredDeaths &:= \hat{\pi}_{cases}( \\
& \varepsilon_{deaths:=0.015*cases}(\symbolic ReportedCountry))
\end{align*}
}

The first one corresponds to the SQL in Example~\ref{example.view}, while the second assumes an average Case-Fatality Ratio (CFR) of $1.5\%$ to estimate the number of deaths based on the number of reported COVID-19 cases in the country. Notice the projection is necessary, because the expression evaluation operation adds a new field, so we must get rid of the $cases$  for the resulting table's signature to match that of $DeclaredDeaths$.
Regarding CFR, we use a single value for simplicity, but it could be declared in an auxiliary table containing different ones per week, as long as these do not contain any variable, which would break the linearity of the expression.
\xqed\end{example}

The above discussion gives a high-level argument that if the input tables are finite maps (satisfying their respective functional dependencies as specified by the schema) then the result table will also be a finite map that satisfies its specified functional dependency.  More importantly, linearity is preserved: if the s-table inputs to an operation are linear, and all expressions in the operation itself are linear, then the resulting s-table is also linear.

\paragraph{Correctness}
We interpret s-tables as mappings from valuations to ground tables, obtained by evaluating all symbolic expressions in them with respect to a global valuation $h : \RR^X$.

\begin{definition}[Valuation]
A \emph{valuation} is a function $h : \RR^X$ assigning constant values to variables.  Given a symbolic expression $e$, we write $h(e)$ for the result of evaluating $e$ with variables  $x$ replaced with $h(x)$.  We then write $h(t)$ for the tuple obtained by replacing each symbolic expression $e$ in $t$ with $h(e)$ and write $h(R)$ to indicate the result of evaluating the expressions in $R$ to their values according to $h$, that is, $h(R) = \{h(t) \mid t \in R\}$.  Likewise for an instance $I$, we write $h(I)$ for the ground instance obtained by replacing each $R$ with $h(R)$.  An s-table $R$ \emph{represents} the set $\banana{R} = \{h(R) \mid h : \RR^X\}$ of ground tables obtained by applying all possible $h$ to $R$.  We write $\banana{I}$ for the set of all ground instances obtainable from an s-instance $I$ by some $h\in \RR^X$, defined as $\banana{I} = \{h(I) \mid h \in \RR^X\}$, and we write $q\banana{I}$ for $\{q(I') \mid I' \in \banana{I}\}$, the set of all possible results of evaluating $q$ on a ground table represented by $I$.
\end{definition}

Given a query expression in our algebra, we can  evaluate it on a ground instance since every ground instance is an s-instance and s-table operations do not introduce variables that were not present in the input.
Further, given a set of ground instances, we can (conceptually) evaluate a query on each element of the set.

\begin{theorem}\label{thm:commutation}
Let $q$ be a well-formed query mapping instances of $\Sigma$ to relations matching $K \tri V$, and let $I$ be an s-instance of $\Sigma$. Then $q\banana{I} = \banana{q(I)}$.
\end{theorem}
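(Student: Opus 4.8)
The plan is to reduce the statement to a single commutation property and then establish that by structural induction on $q$. The key fact is that for every valuation $h : \RR^X$ and every well-formed query $q$, symbolic evaluation commutes with instantiation:
\[ q(h(I)) = h(q(I)). \]
Granting this, the theorem is immediate. Unfolding the definitions, $q\banana{I} = \{q(I') \mid I' \in \banana{I}\} = \{q(h(I)) \mid h \in \RR^X\}$ and $\banana{q(I)} = \{h(q(I)) \mid h \in \RR^X\}$, so the two sets coincide term by term once we know $q(h(I)) = h(q(I))$ for each $h$. Here I am using only that every ground instance in $\banana{I}$ is of the form $h(I)$, which is the definition of $\banana{I}$, and that query evaluation on ground tables is a special case of symbolic evaluation.

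To prove the commutation lemma I would induct on the structure of $q$. The base case $q = R$ holds because $(h(I))(R) = h(I(R))$ by the definition of $h$ on instances. The cases for the purely structural constructs — $\hat{\pi}_W$, $\uplus_B$, and $\rho_{B\mapsto B'}$ — are routine, since $h$ acts tuple-by-tuple and these operators only rearrange, tag, or rename components. For $\varepsilon_{B:=e}$ and for the summation inside $\gamma_{K';V'}$ I would invoke that $h$ is a homomorphism on expressions: $h(e(t)) = e(h(t))$ for any linear expression $e$, and $h(\sum Z) = \sum h(Z)$ for a multiset $Z$ of value-vectors, exactly because $h$ is evaluation of the variables and the operations involved are linear arithmetic. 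Thus the derived or aggregated column computed symbolically and then instantiated agrees with the one computed after instantiation.

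The crux of the argument — and the reason the type system is set up as it is — lies in the operators whose output \emph{shape} could a priori depend on the values, namely $\sigma_c$, $\Join$, $\backslash$, and the grouping performed by $\gamma_{K';V'}$. For these I would argue that every decision determining which tuples survive, and how they are grouped, is made solely on key-fields. Since key-fields carry constants from $D$ that $h$ leaves untouched, the predicate $c(t)$, the matching of tuples on their shared key-fields in a join, the membership test $t[K]\notin S$, and the partition of rows sharing a key are all evaluated identically on $R$ and on $h(R)$. Hence the collections of surviving or grouped tuples are in bijection, and $h$ need only be pushed through the value columns, where the previous paragraph applies. A final point to verify is that $h$ does not collapse distinct rows: because each valid s-table satisfies the functional dependency $K \to V$ with $K$ constant, distinct rows differ on some key-field, and since $h$ fixes key-fields it is injective on the rows of any valid s-table; thus $\lvert h(R)\rvert = \lvert R\rvert$ and set cardinalities are preserved throughout.

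The step I expect to be most delicate is aggregation, where one must simultaneously track that the grouping is governed by the constant key-fields while the summed value-vectors are symbolic. Concretely, one has to check that the multiset $\{t'[V'] \mid t'\in R,\ t[K']=t'[K']\}$ that is summed is literally the $h$-image of the corresponding symbolic multiset, which reduces to invariance of the key-based grouping under $h$ together with linearity of $h$ on sums. Establishing the lemma cleanly therefore hinges on stating precisely, for each operator, the separation between \emph{key-driven control} (unaffected by $h$, thanks to the well-formedness restrictions) and \emph{value-driven computation} (a linear homomorphism under $h$); once that separation is made explicit, the individual inductive cases are short.
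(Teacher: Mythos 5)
Your proposal is correct and follows essentially the same route as the paper: it isolates the commutation property $h(q(I)) = q(h(I))$, proves it by induction on $q$ via per-operator commutation with valuations (relying on selections, joins, difference and grouping being driven solely by key-fields, and on $h$ acting as a linear homomorphism on value expressions and sums), and then derives $q\banana{I} = \banana{q(I)}$ by unfolding the definition of $\banana{\cdot}$. The paper's proof is organized identically (a per-operator lemma followed by the induction), so no further comparison is needed.
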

The proof is in Appendix~\ref{appendix.correctness}, and is similar to correctness proofs for c-tables~\cite{imielinski84jacm}; the main complication is that in s-instances, variables occurring in different tables are intended to refer to the same unknown values, whereas in c-tables such variables are scoped only at the table level.

\subsection{Fusion, alignment, and coalescing}\label{section.fusion}
We now consider how s-tables and symbolic evaluation can be used to reduce concordance checking and discordance measurement to linear programming and quadratic programming problems, respectively, when we find more than one tuple with the same key and multiple (symbolic) values for the same attribute.
We first consider an abstract \emph{fusion} operator:
\begin{definition}[Fusion]\label{definition.fusion}
Given two ground relations  $R,S : K \tri V$, their \emph{fusion} $R \sqcup S$ is defined as $R \cup S$, provided it satisfies the functional dependency $K \to V$, otherwise the fusion is undefined.
\end{definition}

Thus, our goal is to fuse different sources. However, since they are independent, their concrete values can come in a variety of formats, being expressed in different units, scales or even computation stages (e.g., benefit vs income and expenses). As explained in \cite{DBLP:journals/inffus/MotroA06}, until the conflicts at the representation level have been resolved, those at the data level cannot be resolved (or even measured), either. Therefore,
we represent the expected relationships between source and derived data using a generalization of view specifications called \emph{alignment specifications}.
The alignments must always be defined by the user, considering the domain knowledge, but our set of algebraic operators guarantees the correctness from a computational point of view (i.e., identity of tuples is preserved and expressions are guaranteed to be linear). We generalize fusion to many sources, and actually allow
alignment specifications to define derived tables as the fusion of multiple views.

\begin{definition}[Alignment Specification]
Let $\Sigma$ and $\Omega$ be finite map schemas with disjoint table names $R_1,\ldots,R_n$ and $T_1,\ldots,T_m$, respectively.  Let $\Delta$ be a sequence of view definitions, one for each $T_i$, of the form $T_i := q_1 \sqcup \cdots \sqcup q_k$, where each $q_i$ is a query over finite maps, that refers only to table names in $\Sigma$ and $T_1,\ldots,T_{i-1}$.
The triple $\Spec = [\Sigma,\Omega,\Delta]$ is called an \emph{alignment specification}.
\end{definition}

\begin{figure}[tb]
\begin{mathpar}\small
\inferrule*{\strut}{\Sigma \vdash \cdot : \cdot}
\and
\inferrule*{\
\Sigma \vdash q_1 : K \tri V \\
\cdots\\
\Sigma \vdash q_n : K \tri V \\
\Sigma,S : K \tri V \vdash \Delta : \Omega}
{\Sigma \vdash S := q_1 \sqcup \cdots \sqcup q_n,\Delta : (S{:}K\tri V,\Omega)}
\end{mathpar}
\caption{Well-formed alignment specifications $[\Sigma,\Omega,\Delta]$}\label{fig:spec-wf}
\end{figure}
Alignment specifications are considered well-formed according to the rules in Figure~\ref{fig:spec-wf}.  The first rule handles the base case where the $\Delta$ and $\Omega$ parts of the specification are empty.  The second rule says that a specification $[\Sigma,(S:K \tri V,\Omega),(S:= q_1 \sqcup \cdots \sqcup q_n,\Delta)]$ is well-formed provided each $q_i$ is a query producing an output matching $K \tri V$, and provided the rest of $\Delta$ is well-formed with respect to $\Omega$ if the type of $S$ is added to $\Sigma$.  The purpose of adding $S$ to $\Sigma$ here is to ensure later view definitions may refer both to the tables initially in $\Sigma$ and to earlier view definitions, but view definitions cannot be cyclic (for example $S$ cannot refer to itself or to a view defined later).

\begin{example}\label{example.fusion}
Given the s-tables in Example~\ref{example.stables} and views in Example~\ref{example.algebra}, we can specify the alignment of COVID-19 cases (corresponding to Example~\ref{example.assertions}) and deaths by the following views:

{\footnotesize
\begin{eqnarray*}
\symbolic SumOfCases &:=& \symbolic ReportedCountry \sqcup \symbolic AggReported\\
\symbolic NumberOfDeaths &:=& \symbolic DeclaredDeaths \sqcup \symbolic InferredDeaths
\end{eqnarray*}
}
These view definitions express our intention that in concordant data, the country-level reported data would correspond to the sum of the region-level reports and the deaths according to the census data would be 1.5\% of reported cases, as shown in the equations in Example~\ref{example.queries}.  However, importantly, the fusion operator is not restricted to two arguments, it can express simultaneous coincidence among multiple inputs.
\xqed\end{example}

We implement the abstract fusion operation on s-tables by first making the discriminated union of the input relations ($R \uplus_B S$), and then using a unary operation, called \emph{coalescing}, whose behavior on sets $\mathcal{R}$ of ground tables $R : K,B \tri V$ is
$\kappa_B(\mathcal{R}) = \{\hat{\pi}_B(R) \in \mathcal{R} \mid \text{$\hat{\pi}_B(R)$ satisfies $K \to V$}\}$.
Intuitively, coalescing of a set of tables $R \in \mathcal{R}$ applies a projection $\hat{\pi}_B$ to each $R$, and returns those projected tables that still satisfy the FD $K \to V$.  These are the rows where the values associated with the corresponding keys are consistent across all inputs in which the keys are present.
To represent the result of coalescing using s-tables, we augment them with constraints.
A constraint $\phi$ is simply a conjunction of linear equations;
a constrained s-table is a pair $(R,\phi)$ that represents the set of possible ground tables $\banana{R,\phi} = \{h(R) \mid h : \RR^X, h \models \phi\}$; finally a constrained s-instance $(I,\phi)$ likewise represents a set of ground instances of $I$ obtained by valuations satisfying $\phi$.
We can implement coalescing as an operation on constrained s-tables as follows:
\[\small\begin{array}{rcl}
\kappa_D(R,\phi)&=&(T,\phi\wedge \psi)\\
R^+ &=& \{t\in R \mid \exists t'\in R, t[K\backslash D]=t'[K\backslash D]\wedge t[D]\neq t'[D]\}\\
T&=&\{t[K\backslash D,B_1:=L_{t[K\backslash D],B_1},\dots]\mid t\in R^+\}\\
&\cup&\{t[K\backslash D,V]\mid t\in R \backslash R^+\}\\
\displaystyle\psi &\equiv& \bigwedge_{t\in R^+}\bigwedge_{B_i\in V} L_{t[K\backslash D],B_i}=t[B_i]
\end{array}\]

That is, let $R^+$ be the set of tuples in $R$ for which there exists another tuple that has the same values on $K\backslash D$ but differs on $D$, and let $R \backslash R^+$ be the remaining tuples (for which there are no such sibling tuples).  Thus, $R^+$ is the set of tuples of $R$ potentially violating the FD $K\backslash D \rightarrow V$, and $R \backslash R^+$ is the largest subset of $R$ that satisfies this FD.  Then $\kappa_D(R)$ consists of table $T$ obtained by filling in new variables $L_{t[K\backslash D],B_i}$.  
$L$-values are used only where there may be disagreement, and we use the value from $R$ otherwise.  The constraint $\psi$ consists of equations between the observed values of each attribute and the corresponding $L$-value.  No constraints are introduced for tuples in $R \backslash R^+$, where there is no possibility of disagreement.

\begin{example}\label{example.coalescing}
Thus, $\symbolic SumOfCases$ is implemented in terms of our algebra as $$\kappa_D(\symbolic ReportedCountry \uplus_D \symbolic AggReported),$$ and $\symbolic NumberOfDeaths$ as $$\kappa_D(\symbolic DeclaredDeaths \uplus_D \symbolic InferredDeaths).$$ As a result, we introduce two new variables (i.e., respectively $L_{2020W24,cases}$ and $L_{2020W24,deaths}$), and would obtain the following constrained s-tables:

{\small\normalfont
\begin{lstlisting}[language=sql,escapechar=\%]
%$\symbolic$%SumOfCases("2020W24",%$L_{2020W24,cases}$%)
\end{lstlisting}
\[\small\begin{array}{rcl}
\psi_{SumOfCases} &\equiv& L_{2020W24,cases}=2142(1+y) \\
     &\wedge& L_{2020W24,cases}=13(1+x_\text{I})+\dots \\
     & &+12(1+x_{\text{XVIII}})+x_{\text{XIX}}
\end{array}\]
\begin{lstlisting}[language=sql,escapechar=\%]
%$\symbolic$%NumberOfDeaths("2020W24",%$L_{2020W24,deaths}$%)
\end{lstlisting}
\[\small\begin{array}{rcl}
\psi_{NumberOfDeaths} &\equiv& L_{2020W24,deaths}=0.015*1995(1+y) \\
       &\wedge& L_{2020W24,deaths}=3*(1+z)
\end{array}\]
}

Notice that if another source had provided data at region level, two possible alignments would appear: (a) aggregating also the new source (as done before with $\symbolic ReportedRegion$) and making the correspondence also through the same existing fusion $\symbolic SumOf\-Cases$ (i.e., $L_{2020W24,cases}$ would be simply reused with another conjunct clause in $\psi$), or (b) creating a new independent fusion of this new source and $\symbolic ReportedRe\-gion$ which would generate thirteen new $L$-values (one per region) and the corresponding thirteen logic clauses to be added to $\psi$.
Obviously, (b) is more restrictive than (a), but the choice would depend on the user defining the most appropriate alignment to the use case.
\xqed\end{example}

\subsection{Discord measurement}\label{sec:alltogether}

Having introduced (constrained) s-tables, and evaluation for query operations and coalescing over them, we finally show how these technical devices allow us to define concord and measure discord.

\begin{definition}[Concordant Instance]
Given a specification $\Spec = [\Sigma,\Omega,\Delta]$, an instance $I$ of schema $\Sigma$ is \emph{concordant} if there exists an instance $J$ of $\Omega$ such that for each view definition $T_i := q_1 \sqcup \cdots \sqcup q_n$ in $\Delta$, we have $J(T_i) = q_1(I,J) \sqcup \cdots \sqcup q_n(I,J)$ where $q(I,J)$ is the result of evaluating $q$ on the combined instance $I,J$ and all of the fusion operations involved are defined.  The concordant instances of $\Sigma$ with respect to $\Spec$ are written $\Conc(\Spec)$.
\end{definition}

\begin{definition}[Concordance]
Given $[\Sigma,\Omega,\Delta]$, let $I$ be an s-instance.
We say $I$ is \emph{concordant} if there exists a concordant instance $C \in \banana{I}$.
\end{definition}

Given an alignment specification $\Spec = [\Sigma,\Omega,\Delta]$ and an s-instance $I$, we can check concordance by symbolically evaluating $\Delta$ on $I$ to obtain an s-instance $J$ as follows.  For each view definition $T_i := q_1\sqcup \cdots \sqcup q_n$ in $\Delta$ in order, evaluate $q_1,\ldots,q_n$ to their s-table results and fuse them using the coalescing operator (repeatedly if $n>2$).  This produces a new s-table $T_i'$ and a constraint $\phi_i$.  Add $T_i := T'_i$ to $J$ and continue until all of the table definitions in $\Delta$ have been symbolically evaluated (i.e., $J = [T_1 := T_1',\ldots,T_m:=T_m']$). Thus, the constrained s-instance $(I',\phi)$ where $\phi = \bigwedge_{i=1}^m \phi_i$ characterizes the set of possible concordant instances based on $I'$, and in particular $I$ is concordant if $\phi$ is satisfiable.

\begin{example}\label{example.solutions}
From the constrained s-tables in Example~\ref{example.coalescing}, obtained by the corresponding coalescing operation, we get the next intertwined system of equations:
\begin{eqnarray*}
       2142(1+y) &=& 13(1+x_\text{I})+\dots+12(1+x_{\text{XVIII}}) \\
       & & +x_{\text{XIX}} \\
       0.015*1995(1+y) &=& 3*(1+z)
\end{eqnarray*}

Obviously,  this system has many solutions.
One solution $S_1$ consists of taking all $x_i$ to be zero, $y=-0.07$ and $z=8.975$.  This corresponds to assuming there is no error in the eighteen regions' reports and there are no cases in Region XIX.  Another solution $S_2$ sets $x_\text{I} = ... = x_{\text{XVIII}} = 0$ and $x_{\text{XIX}} = 147$, then $y = 0$ and $z = 9.71$ which corresponds to assuming $R_{\text{XIX}}$ had all of the missing COVID-19 cases in $\Spain$ that week.  Of course, whether $S_1$ or $S_2$ (or some other) is more plausible depends on domain-specific knowledge.
\xqed\end{example}

Given a \emph{cost function} $\delta$ assigning a cost to each solution, we can compare different solutions in terms of how much correction is needed (or discord exists).
Thus, the discord is, intuitively, the shortest  $\delta$-distance between the actual observed, uncertain data (represented as a set of possible instances) and a hypothetical concordant database instance that is consistent with the alignment specification.
The more distant from any such concordant instance, the more discordant our data are.

\begin{definition}[Discordance]
Given $\Spec = [\Sigma,\Omega,\Delta]$, let $\delta : Inst(\Sigma) \times Inst(\Sigma) \to \RR$ be a measure of distance between pairs of elements of $Inst(\Sigma)$.  The \emph{discordance} of a (constrained) s-instance $I$ is $$\inf_{J \in \banana{I}, C \in \Conc(\Spec)}\delta(J,C)$$
\end{definition}

Then, the degree of discordance of $I$ given the alignment $\Delta$ and according to $\delta$ (i.e., $disc_\delta(I,\phi)$) equals the solution to the quadratic programming problem formed by minimizing $\delta$ subject to $\phi$.
Depending on the choices of metrics, this leads to well-understood constrained optimization problems such as linear programming or least squares optimization.
Linear programming has polynomial time complexity, and the popular simplex algorithm is worst-case exponential but has good behavior in practice~\cite{spielman01stoc}.  Quadratic programming is NP-Complete in general, but with good computational behaviour in many practical applications~\cite{osqp}.  Moreover in the \emph{convex} case when the cost function is based on a \emph{positive semidefinite} matrix, quadratic programming is very well-behaved, having a single global solution that can be found in polynomial time~\cite{vavasis01optimization}.  In our experiments, we used the convex metric
$\delta$ defined as the sum of squares of the error variables in $I$.

Like the alignment, the cost function evaluating
the discordance is also strongly domain knowledge dependent, but our system is flexible enough to consider different alternatives (e.g., different weights; if the weights are non-negative then the problem remains convex). Nevertheless, in the rest of this paper, we will only use simplistic (and convex) cost functions like the sum of squares suggested in \cite{DBLP:journals/sigkdd/LiGMLSZFH15}.   This guarantees that the problems we generate are solvable in polynomial time; we also establish feasibility in practice for datasets of moderate size. Although the underlying quadratic solver we use can accommodate more general quadratic functions of the symbolic variables, we leave exploration of more sophisticated cost functions to future work.

\begin{example}
Considering simply the sum of the squares of the variables:
\[c_1(\vec{x},y,z) = (\sum_{i \in \{\text{I},\ldots,\text{XIX}\}} x_i^2) + y^2 + z^2\]
For the two solutions in Example~\ref{example.solutions}, $S_1$ has cost $\approx 80.56$, while $S_2$ has cost 21,703.28, so with the above cost function the first one is much closer to being concordant, because a large change to $x_{\text{XIX}}$ is not needed.  Alternatively, we might give the unknown number of cases in $R_{\text{XIX}}$ no weight, reflecting that we have no knowledge about what it might be, corresponding to the cost function
\[c_2(\vec{x},y,z) = (\sum_{i \in \{\text{I},\ldots,\text{XVIII}\}} x_i^2) + y^2 + z^2\]
that assigns the same cost to $S_1$ but assigns cost $94.28$ to $S_2$, indicating that if we are free to assign all unaccounted cases to $x_{\text{XIX}}$ then the second solution is closer to concordance.

Furthermore, we could also weight variables considering the reliability of the different regions as well as the central government, and the historical information of the census, but it is important to notice that such weights would depend on knowledge of the domain.
\xqed\end{example}

It is important to mention also that as a side-product, the instanciations of the $L$-values introduced by coalescing could be used to obtain a concordant instance, but this instance is not guaranteed to provide the true values of the uncertain indicators, it is only an estimate minimizing the distance metric.  Thus domain-specific knowledge or statistical techniques also need to be applied to characterize the quality of these estimates.

\begin{figure*}
\hspace{0.15\linewidth}
\begin{minipage}[c]{0.35\linewidth}
{\tiny\center
\textbf{ReportedRegion}\\
\begin{tabular}{|l|r|r|c|}\hline
\multicolumn{1}{|c|}{\textbf{dist.}} & \multicolumn{1}{c|}{\textbf{week}} & \multicolumn{1}{c|}{\textbf{cases}} & \multicolumn{1}{c|}{\textbf{deaths}} \\ \hline \hline
I & 2020W24 & $13x_{\text{I}}+13$ & $\cdots$ \\ \hline
II & 2020W24 & $3x_{\text{II}}+3$ & $\cdots$ \\ \hline
III & 2020W24 & $8x_{\text{III}}+8$ & $\cdots$ \\ \hline
IV & 2020W24 & $62x_{\text{IV}}+62$ & $\cdots$ \\ \hline
V & 2020W24 & $30x_{\text{V}}+30$ & $\cdots$ \\ \hline
VI & 2020W24 & $7x_{\text{VI}}+7$ & $\cdots$ \\ \hline
VII & 2020W24 & $71x_{\text{VII}}+71$ & $\cdots$ \\ \hline
VIII & 2020W24 & $826x_{\text{VIII}}+826$ & $\cdots$ \\ \hline
IX & 2020W24 & $243x_{\text{IX}}+243$ & $\cdots$ \\ \hline
X & 2020W24 & $295x_{\text{X}}+295$ & $\cdots$ \\ \hline
XI & 2020W24 & $7x_{\text{XI}}+7$ & $\cdots$ \\ \hline
XII & 2020W24 & $336x_{\text{XII}}+336$ & $\cdots$ \\ \hline
XIII & 2020W24 & $5x_{\text{XIII}}+5$ & $\cdots$ \\ \hline
XIV & 2020W24 & $63x_{\text{XIV}}+63$ & $\cdots$ \\ \hline
XV & 2020W24 & $10x_{\text{XV}}+10$ & $\cdots$ \\ \hline
XVI & 2020W24 & $4x_{\text{XVI}}+4$ & $\cdots$ \\ \hline
XVII & 2020W24 & $x_{\text{XVII}}$ & $\cdots$ \\ \hline
XVIII & 2020W24 & $12x_{\text{XVIII}}+12$ & $\cdots$ \\ \hline
\end{tabular}
}
\end{minipage}
\begin{minipage}[c]{0.06\linewidth}
{\huge $\Rightarrow$}
\end{minipage}
\begin{minipage}[c]{0.35\linewidth}
\textbf{Non-First Normal Form (\sparsenftwo)}\\
\fbox{
\begin{minipage}[c]{\linewidth}
{\tiny\center
\textbf{ReportedRegion}\\
\begin{tabular}{|l|r|r|r|}\hline
\multicolumn{1}{|c|}{\textbf{dist.}} & \multicolumn{1}{c|}{\textbf{week}} & \multicolumn{1}{c|}{\textbf{cases}} & \multicolumn{1}{c|}{\textbf{deaths}}  \\ \hline \hline
I & 2020W24 & $\{\{13,x_{\text{I}}\},13\}$ & $\cdots$ \\ \hline
$\cdots$ & $\cdots$ & $\cdots$ & $\cdots$  \\ \hline
XVII & 2020W24 & $\{\{1,x_{\text{XVII}}\},0\}$ & $\cdots$ \\ \hline
XVIII & 2020W24 & $\{\{12,x_{\text{XVIII}}\},12\}$ & $\cdots$ \\ \hline
XIX & 2020W24 & $\{\{1,x_{\text{XIX}}\},0\}$ & $\cdots$ \\ \hline
\end{tabular}
\\
}
\end{minipage}
}
\vspace{0.1cm} \\
\textbf{Normalized (Partitioning)}\\
\fbox{
\begin{minipage}[c]{\linewidth}
{\tiny\center
\textbf{ReportedRegion}\\
\begin{tabular}{|l|r|r|r|}\hline
\multicolumn{1}{|c|}{\textbf{dist.}} & \multicolumn{1}{c|}{\textbf{week}} & \multicolumn{1}{c|}{\textbf{cases}} & \multicolumn{1}{c|}{\textbf{deaths}}  \\ \hline \hline
I & 2020W24 & $13$ & $\cdots$  \\ \hline
$\cdots$ & $\cdots$ & $\cdots$ & $\cdots$  \\ \hline
XVII & 2020W24 & $0$ & $\cdots$ \\ \hline
XVIII & 2020W24 & $12$ & $\cdots$ \\ \hline
XIX & 2020W24 & $0$ & $\cdots$  \\ \hline
\end{tabular}
\\
\vspace{0.20cm}
\textbf{ReportedRegion\_cases}\\
\begin{tabular}{|l|r|l|r|}\hline
\multicolumn{1}{|c|}{\textbf{dist.}} & \multicolumn{1}{c|}{\textbf{week}} & \multicolumn{1}{c|}{\textbf{var.}} & \multicolumn{1}{c|}{\textbf{coeff.}}  \\ \hline \hline
I & 2020W24 & $x_{\text{I}}$ & $13$  \\ \hline
$\cdots$ & $\cdots$ & $\cdots$ & $\cdots$  \\ \hline
XVII & 2020W24 & $x_{\text{XVII}}$ & $1$ \\ \hline
XVIII & 2020W24 & $x_{\text{XVIII}}$ & $12$ \\ \hline
XIX & 2020W24 & $x_{\text{XIX}}$ & $1$  \\ \hline
\end{tabular}
\\
\vspace{0.20cm}
\textbf{ReportedRegion\_deaths}\\
\begin{tabular}{|l|r|l|r|}\hline
\multicolumn{1}{|c|}{\textbf{dist.}} & \multicolumn{1}{c|}{\textbf{week}} & \multicolumn{1}{c|}{\textbf{var.}} & \multicolumn{1}{c|}{\textbf{coeff.}}  \\ \hline \hline
$\cdots$ & $\cdots$ & $\cdots$ & $\cdots$  \\ \hline
\end{tabular}
\\
}
\end{minipage}
}

\end{minipage}
\caption{\label{fig:map} Implemented encodings of s-tables}
\end{figure*}

\section{Implementation}\label{sec:implementation}

We now describe the techniques employed in \sysname, an implementation of our approach.  The systems of equations resulting from constraints generated on coalescing tables or instances are linear, so they can be solved using linear algebra solvers.  However, it may not be immediately obvious how to evaluate queries over s-tables to obtain the resulting systems of equations efficiently.
One strategy would simply be to load all of the data from the database into main memory and evaluate the s-table query operations in-memory.  While straightforward, this may result in poor performance or duplication of effort, as many query operations which are efficiently executed within the database (such as joins) may need to be hand-coded in order to obtain acceptable performance.
Instead, we propose a relational representation of s-table queries such that the s-table operations can be implemented as ordinary (extended) relational algebra queries over the representation. Thus,
we consider two in-database representations of s-tables, illustrated in Fig.~\ref{fig:map}: A denormalized sparse vector representation using nested user-defined data types (\sparsenftwo), and a normalized representation using multiple flat relations (\partitioning).
Hence, whichever representation we choose, we need to transform the original ground tables into s-tables with variables. This could be done by creating simply a copy, but we found this would materialize a great deal of intermediate data that is not ultimately needed. Instead, the most efficient approach we found is defining the s-tables as \emph{views} over the sources.
These views are straightforward, except for the generation of the variables (using SQL:1999 features such as \texttt{ROW\_NUMBER} to create unique ids for each of them), and the consideration of special cases (i.e., NULL and zero), that is done by means of standard SQL \texttt{CASE} clauses to distinguish them.
Notice that this approach based on views and \texttt{CASE} clauses avoids the manual definition of expressions for every value, and allows the definitions of general rules based on the table name, attribute name, or even concrete attribute values to do it. For example, a rule could indicate the usage of a given expression for some concrete region or change the expression from a given point in time on.

In the \sparsenftwo approach, we add a user-defined type for the symbolic (linear) expression fields.  There are several ways of doing this, like for example using arrays to represent vectors of coefficients for a (small) fixed number of variables, or using a sparse representation that can deal with arbitrary numbers of variables efficiently when most coefficients are zero.  Having experimented with several options, we chose a representation in which symbolic expressions $\sum_i a_i \cdot x_i + b$ are represented as sparse vectors, specifically as a pair of the value of $b$ and an array of pairs $[(a_i, x_i),\ldots]$ of coefficients and variable names. Thus,   $\symbolic ReportedRegion$ \sparsenftwo implementation would correspond to the following SQL view.

{\small\normalfont
\begin{lstlisting}[language=sql,escapechar=\%]
CREATE VIEW ReportedRegion_NF2 AS
SELECT region, week,
 ROW(ARRAY[ROW(
  CASE
   WHEN cases IS NULL OR cases = 0 THEN 1
   ELSE cases
   END, (
  CASE
   WHEN cases IS NULL THEN 'MarkNULL_'
   ELSE ''
   END||'ReportedRegion_cases_')||row_number()
    OVER (ORDER BY region, week))::term],
  CASE
   WHEN cases IS NULL THEN 0.0
   ELSE cases
   END)::sparsevec AS cases
 FROM ReportedRegion;
\end{lstlisting}
}

The keys remain unchanged, and for the value we define a row of type \texttt{sparsevec}, which is actually an array of terms, represented by pairs ``(coefficient,variable)''.
The coefficient is ``1'' if the reported value was either NULL or zero, or the reported value otherwise.
Regarding the variable, we create a different one for every tuple using the ``row\_number'', and for the sake of traceability we concatenate to it both the table and attribute names. Moreover, we also add a special mark in case the reported value was NULL, so this can be considered in the cost function.
We implemented addition, scalar multiplication, and aggregation (sum) of linear expressions using PostgreSQL's user-defined function facility.
With this encoding, the SQL queries corresponding to our algebra are straightforward by applying the standard translation and inserting user-defined functions.  We therefore do not present this translation in detail.

Though many RDBMSs support similar mechanisms to create user-defined functions and aggregates, they are not standard and so \sparsenftwo is not very portable.  Thus, the alternative approach we present, \partitioning, relies only on standard SQL:1999. In this approach, we represent an $s$-table $R : K \tri V$ with $n$ symbolic value-fields $B_1,\ldots,B_n$ using $n+1$ relational tables, as follows:

\begin{itemize}
    \item $R_0 : K \tri V$ is a ground table with all constant terms.
    \item For each symbolic field $B_i \in V$, $R_{B_i} : K,X \tri C$ is a ground table mapping keys $K$ and an additional key $X$ (corresponding to the variables) to a real value-field $C$, so that $(k,x,c) \in R_{B_i}$ when $c$ is the coefficient of $x$ in the $B_i$-attribute of $R(k)$.
\end{itemize}

We consider the relations corresponding to the symbolic value-fields to be collected in a record $\vec{R} = (B_1=R_{B_1},\ldots)$, and we write $(R_0,\vec{R})$ for the full representation.
This representation admits relatively simple translations of each of the s-table query operations, as shown in Figure~\ref{fig:translation}.

\begin{figure*}[tb]
\[
\begin{array}{rcl}
\sigma_P(R_0,\vec{R}) &=& (\sigma_P(R_0), \sigma_P(\vec{R}))\\
\hat{\pi}_W(R_0,\vec{R}) &=& (\hat{\pi}_W(R_0), \vec{R}[V\backslash W])\\
\rho_{[B \mapsto B']}(R_0,\vec{R}) &=& (\rho_{[B \mapsto B']}(R_0),\vec{R}[B \mapsto B']))
\\
(R_0,\vec{R}) \uplus_D (S_0,\vec{S}) &=& (R_0 \uplus_D S_0, (B:=R.B \uplus_D S.B)_{B\in V})\\
\varepsilon_{B:=c}(R_0,\vec{R}) &=& (\varepsilon_{B:=c}(R_0),(\vec{R},B:=\emptyset))\\
\varepsilon_{B:=B_i+B_j}(R_0,\vec{R}) &=& (\varepsilon_{B:=B_i+B_j}(R_0), (\vec{R},B:=\tilde{\gamma}_{K,X;C}(R.B_i \uplus_D R.B_j)))\\
\varepsilon_{B:=\alpha\cdot B_i}(R_0,\vec{R}) &=& (\varepsilon_{B:=\alpha\cdot B_i}(R_0),(\vec{R},B:=\tilde{\varepsilon}_{C:=\alpha\cdot C}(R_{B_i})))\\
(R_0,\vec{R}) \backslash (S_0,()) &=& (R_0 \Join ( \pi_K(R_0)\backslash S_0),(B=R.B \Join \pi_K(R_0)\backslash S_0)_{B\in V})\\
(R_0,\vec{R}) \Join (S_0,\vec{S}) &=& (R_0\Join S_0, (B:=R.B_R \Join (\pi_{K_S}(S_0)),
 B':=S.B_S\Join(\pi_{K_R}(R_0)))_{B_R\in V_R, B_S\in V_S})\\
\gamma_{K';V'}(R_0,\vec{R}) &=& (\gamma_{K';V'}(R_0),(B:=\tilde{\gamma}_{K',X;C}(R.B))_{B \in V'})
\end{array}
\]
\caption{Translation of queries to Partitioning implementation}\label{fig:translation}
\end{figure*}

The operations $\tilde{\gamma}$ and $\tilde{\varepsilon}$ are zero-filtering versions of aggregation and derivation respectively, which remove any rows whose $C$-value is zero.  Filtering out zero coefficients is not essential but avoids retaining unnecessary records since absent coefficients are assumed to be zero.
In the rule for selection, recall predicate $P$ only mentions key attributes; we write $\sigma_P(\vec{R})$ for the result of applying the selection to each table in $\vec{R}$.  In the rule for projection-away, we assume $R : K \tri V$ and $W \subseteq V$, and $\vec{R}[V\backslash W]$ is the record resulting from discarding the fields corresponding to attributes in $W$.  Likewise in the rule for renaming, $\vec{R}[B \mapsto B']$ stands for renaming field $B$ of $\vec{R}$ to $B'$ if present, otherwise applying $\rho_{[B \mapsto B']}(-)$ to each table in $\vec{R}$.
In the rule for addition, we introduce a dummy discriminant in the union, and just use zero-filtering aggregation $\tilde{\gamma}_{K,X;C}$  to sum coefficients grouped by key and variable name (i.e., getting rid of the dummy discriminant).  Likewise, in the case for scalar multiplication, $\tilde{\varepsilon}_{C:=\alpha\cdot C}(R_{B_0})$ operation does an in-place update and finally filters out any zero coefficients. Note that for the sake of understanding, we provide separate rules for assigning a field a constant value, adding two fields, and scalar multiplication of a field, while the query language given earlier allows derivations to assign a new field the result of any linear expression.  Arbitrary linear expressions can be handled by introducing intermediate fields and projecting them away at the end, for example $\varepsilon_{B := C+42}(q) = \hat{\pi}_D(\varepsilon_{B := C + E}(\varepsilon_{E:=42}(q))$.
The rule for difference is slightly tricky because since $S_0$ does not have value attributes in the key, so just subtracting it from each of the $R.B$ would not work.  Instead, we compute the set of keys present after the difference and restrict each $R.B$ to that set of keys (using a join).  The rule for join is likewise a little more involved: given $R : K_R \tri V_R$ and $S : K_S \tri V_S$, since $V_R$ and $V_S$ are disjoint, it suffices for each field $B_R$ of $V_R$ to join the corresponding table $R.B_R$ with the keys of $S_0$, i.e. $\pi_{K_S}(S_0)$, and symmetrically for $S$'s value-fields $B_S$.  Finally, for aggregation we assume $R: K \tri V$ with $K' \subseteq K$ and $V'\subseteq V$, and again  use $\tilde{\gamma}$.

Finally, we comment on constraint generation performed by the coalescing operator.
It simply detects repeated values after projecting out the discriminant and generates the corresponding constraints as an extra query (i.e., a query over an s-table is actually a pair of queries: one retrieving the data $I'$ without repetitions in the key and fresh $L$-values in the values, and another independent query creating the constraints $\phi$ over those $L$-values).  With this approach, coalescing (hence fusion) becomes  first-class and can be freely composed with the other operations, so we can convert a specification into a single composed query (referring to the views that define the s-tables) whose translation generates the equations directly.  This is the approach we have evaluated, which significantly improves the naive materialization approach, because it avoids the need to load and scan numerous intermediate s-tables.

\section{Evaluation}\label{sec:experiments}

\begin{figure}
\begin{center}
\includegraphics[width=10cm]{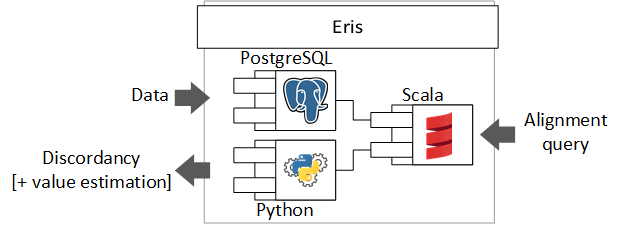}
\end{center}
\caption{\label{fig:ErisArchitecture} Eris architecture}
\end{figure}

\sysname (whose components are depicted in Figure~\ref{fig:ErisArchitecture}) was implemented\footnote{\url{https://github.com/dtim-upc/Eris}} in approximately 4000 lines of Scala code, with approximately 100 lines of SQL defining auxiliary operations, user-defined types, and functions involving sparse vectors.

Before launching anything, all the user data needs to be uploaded into regular PostgreSQL tables.
Then, on choosing the preferred representation of s-tables (either \sparsenftwo or \partitioning), the corresponding views are created to virtually generate the variables.
Once this is done, the input of the system is any alignment specification expressed in our algebra.
Our Scala code transforms such a specification into regular SQL that returns the requested data from user ground tables.
As soon as some of the s-tables are coalesced and some potential violations of the corresponding FDs appear, $L$-values are automatically created, and this triggers another SQL query for the generation of the constraints and the specific Python code to find the values of the involved variables from the retrieved information.
The corresponding linear and quadratic programming subproblems are solved using version 0.6.1 of OSQP~\cite{osqp}, called as a Python library with the default configuration and no parameter tuning. 

Since, to our knowledge, there is not any other system that can automatically generate a (configurable) measurement of discordance in the presence of semantic heterogeneities between the sources, we cannot make any meaningful comparison to show that this is faster or can do things that the others can not.
Instead, we show its scalability in terms of query performance, and the expressive power and usefulness by means of a use case. We do not try to justify the goodness of the sum of squares as an indicator of discordance, because this is absolutely configurable in \sysname, hence, justifying its use is out of the scope of this work

Experiments were run on a workstation equipped with an Intel Xeon E5-1650 with 6 cores, 32 GB RAM, running Ubuntu 16.10, and using a standard installation of PostgreSQL 9.5. They evaluate \sysname from the perspective of both performance and usefulness.

\subsection{Performance microbenchmarks}

\begin{figure*}
\includegraphics[scale=0.06]{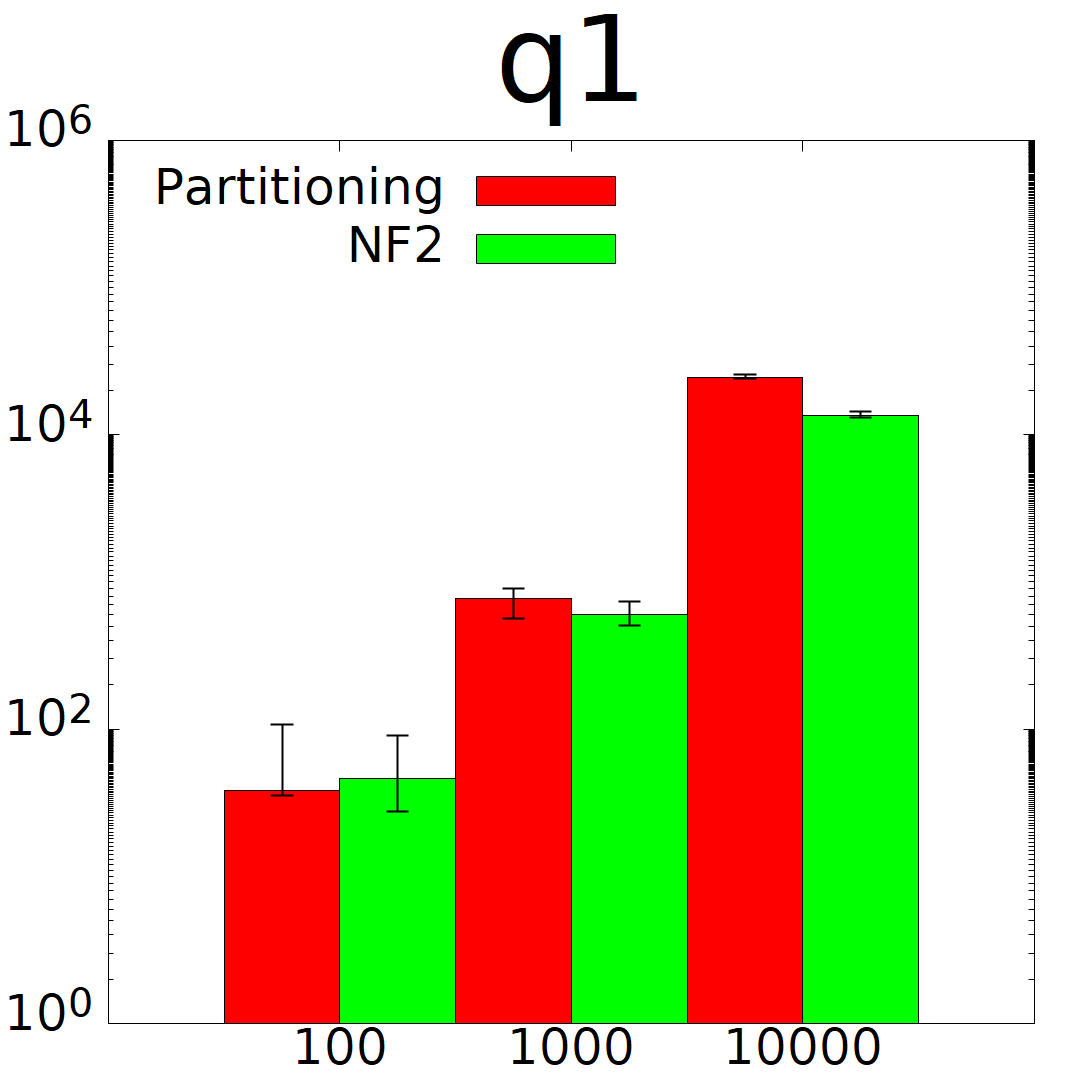}
\includegraphics[scale=0.06]{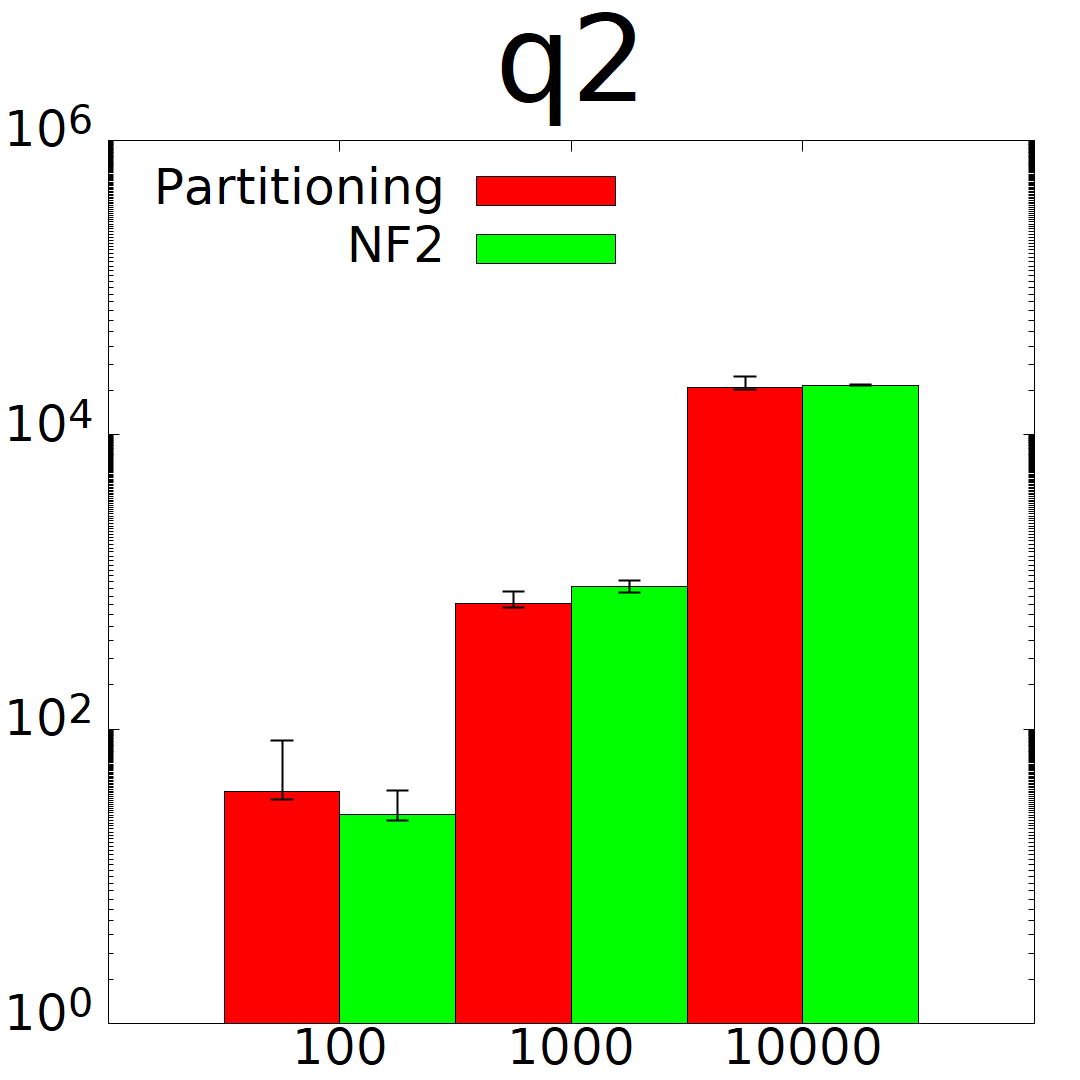}
\includegraphics[scale=0.06]{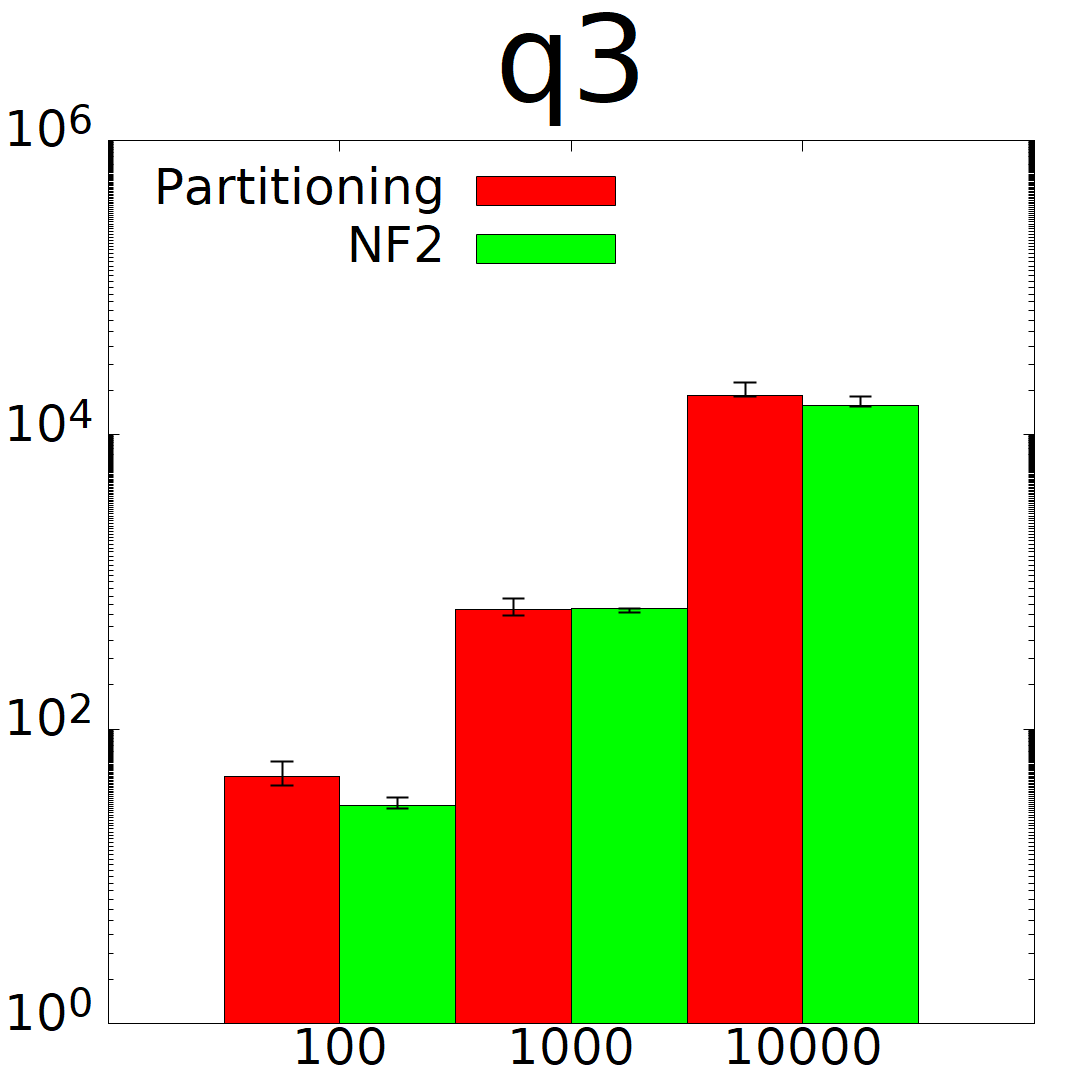}
\includegraphics[scale=0.06]{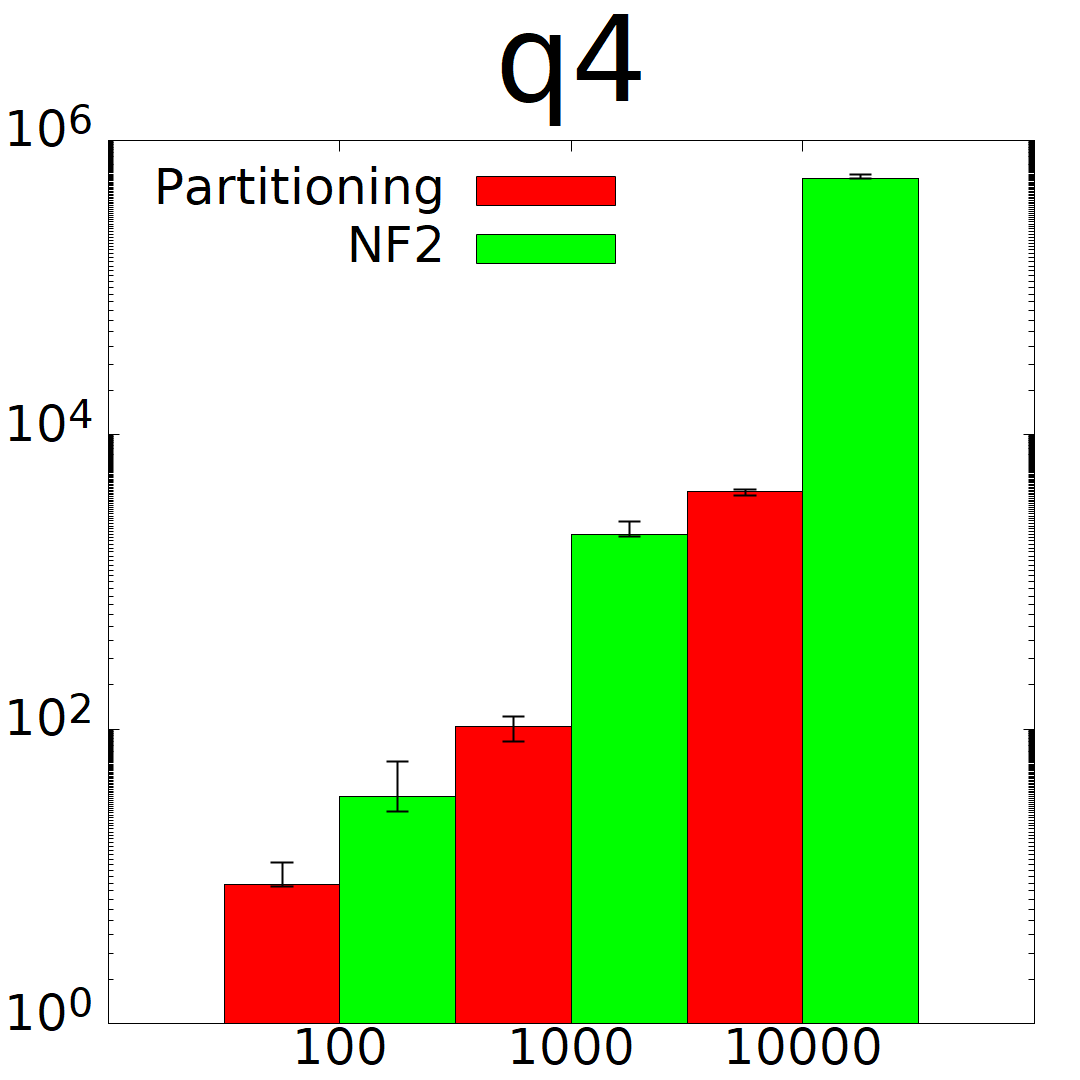}
\includegraphics[scale=0.06]{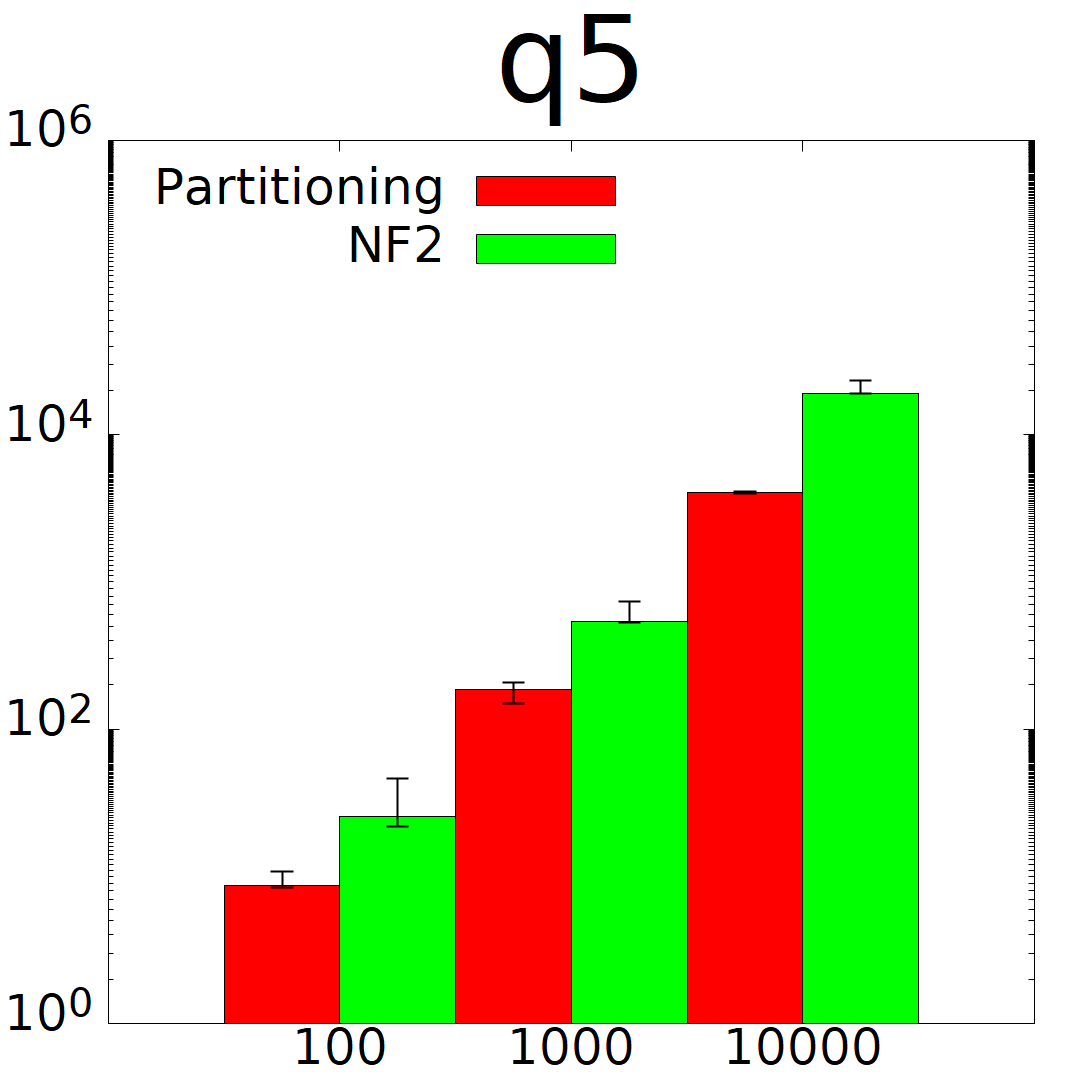}
\includegraphics[scale=0.06]{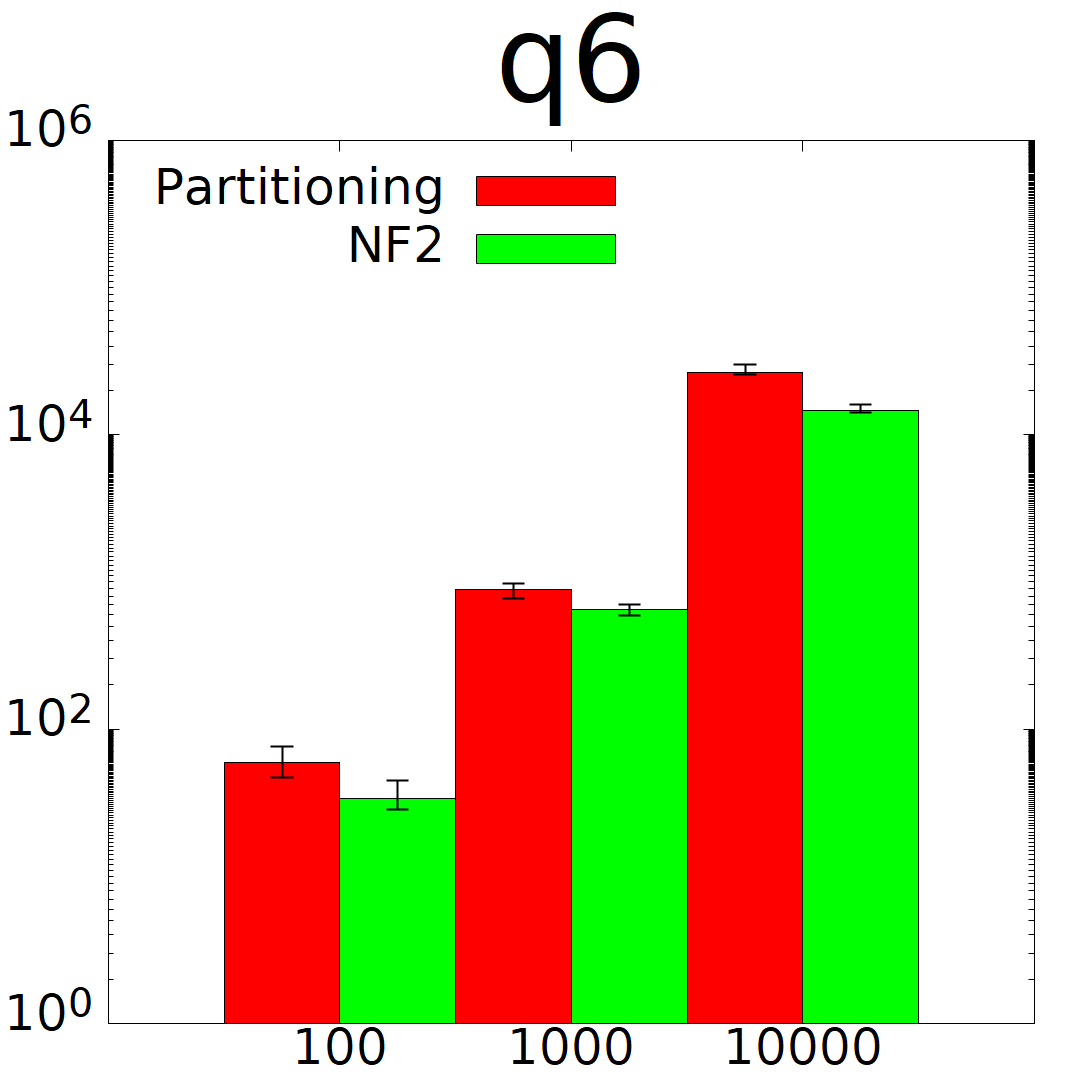}
\includegraphics[scale=0.06]{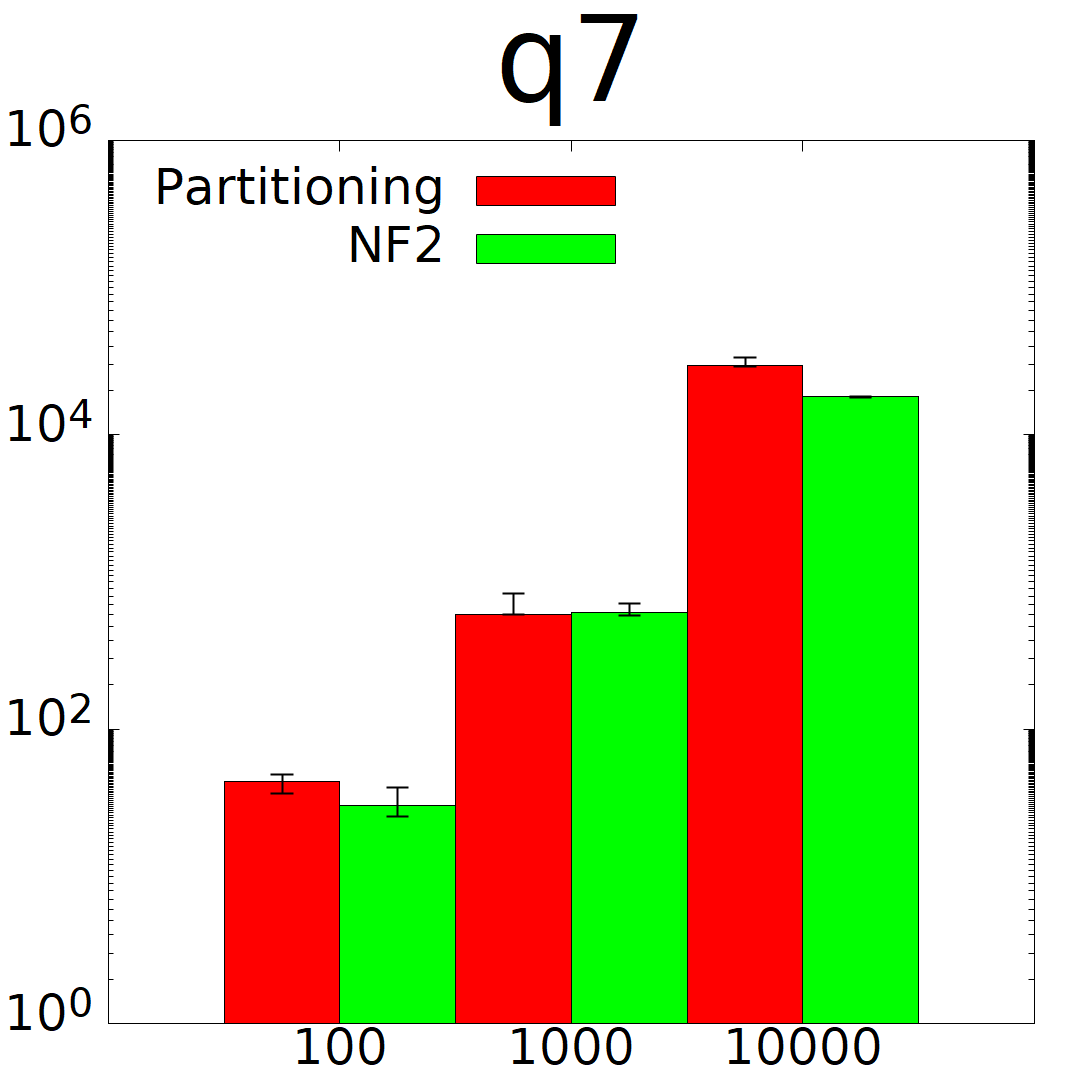}
\caption{Query evaluation performance (milliseconds)}\label{fig:eval-results}
\begin{minipage}[t]{12cm}
        \includegraphics[scale=0.06]{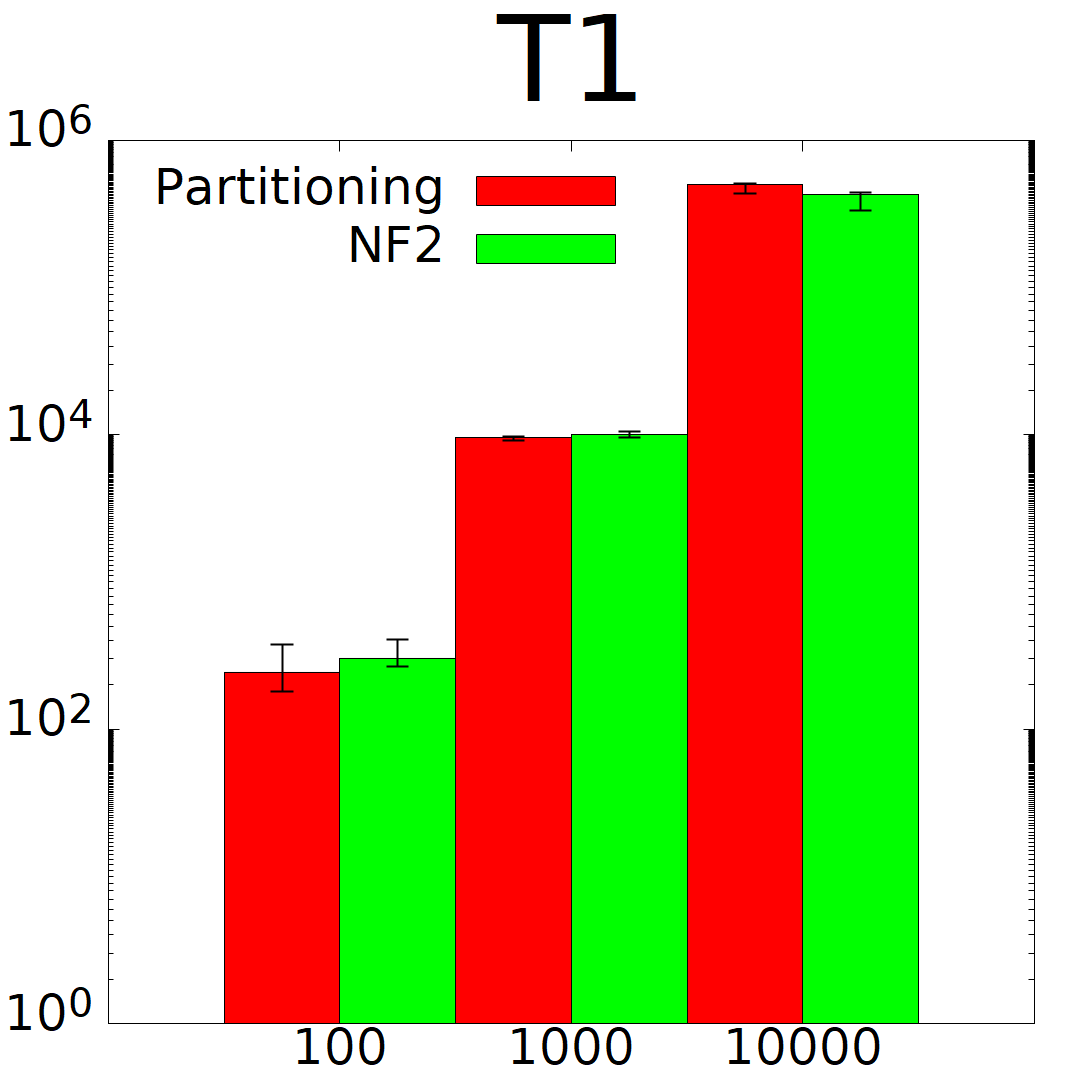}
        \includegraphics[scale=0.06]{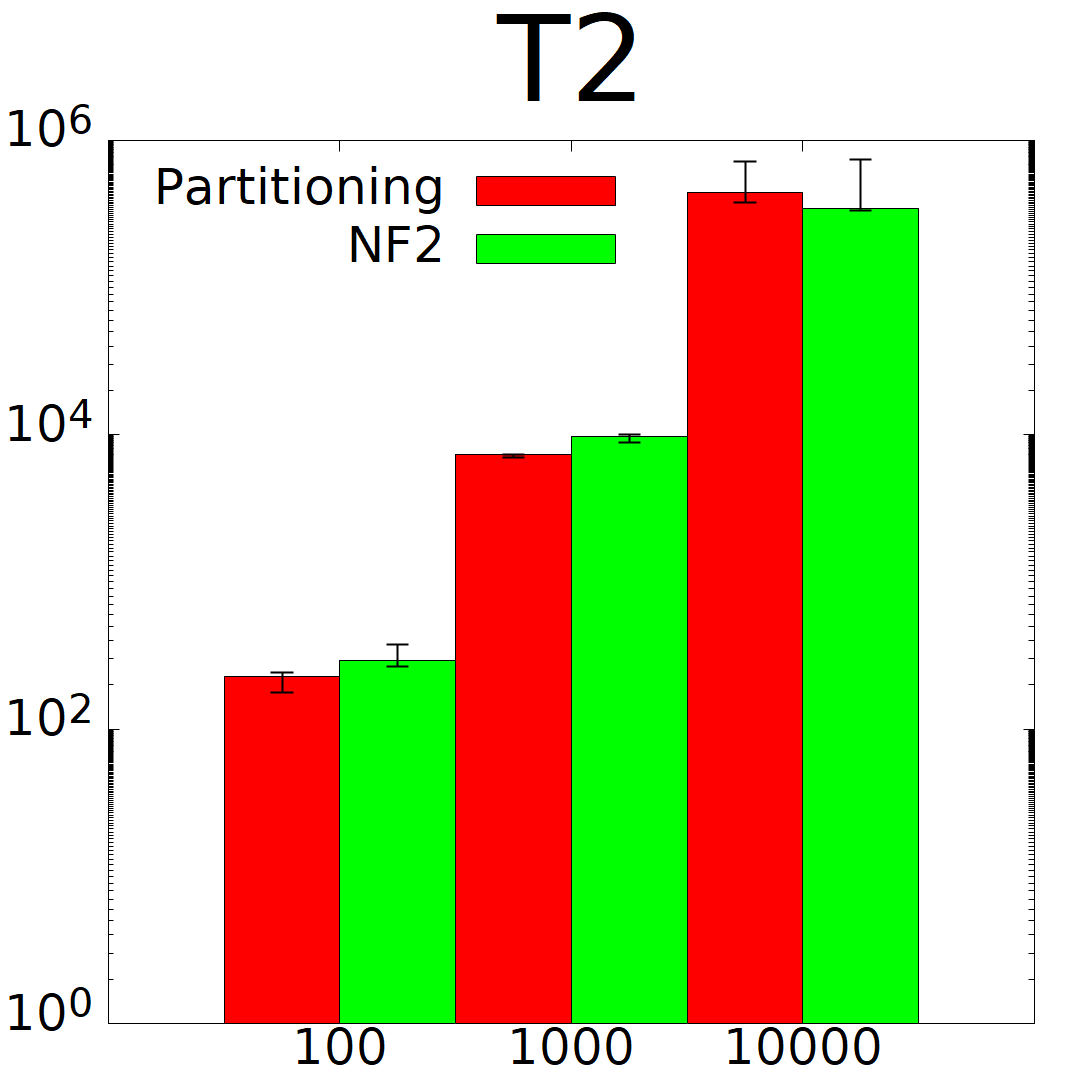}
        \includegraphics[scale=0.06]{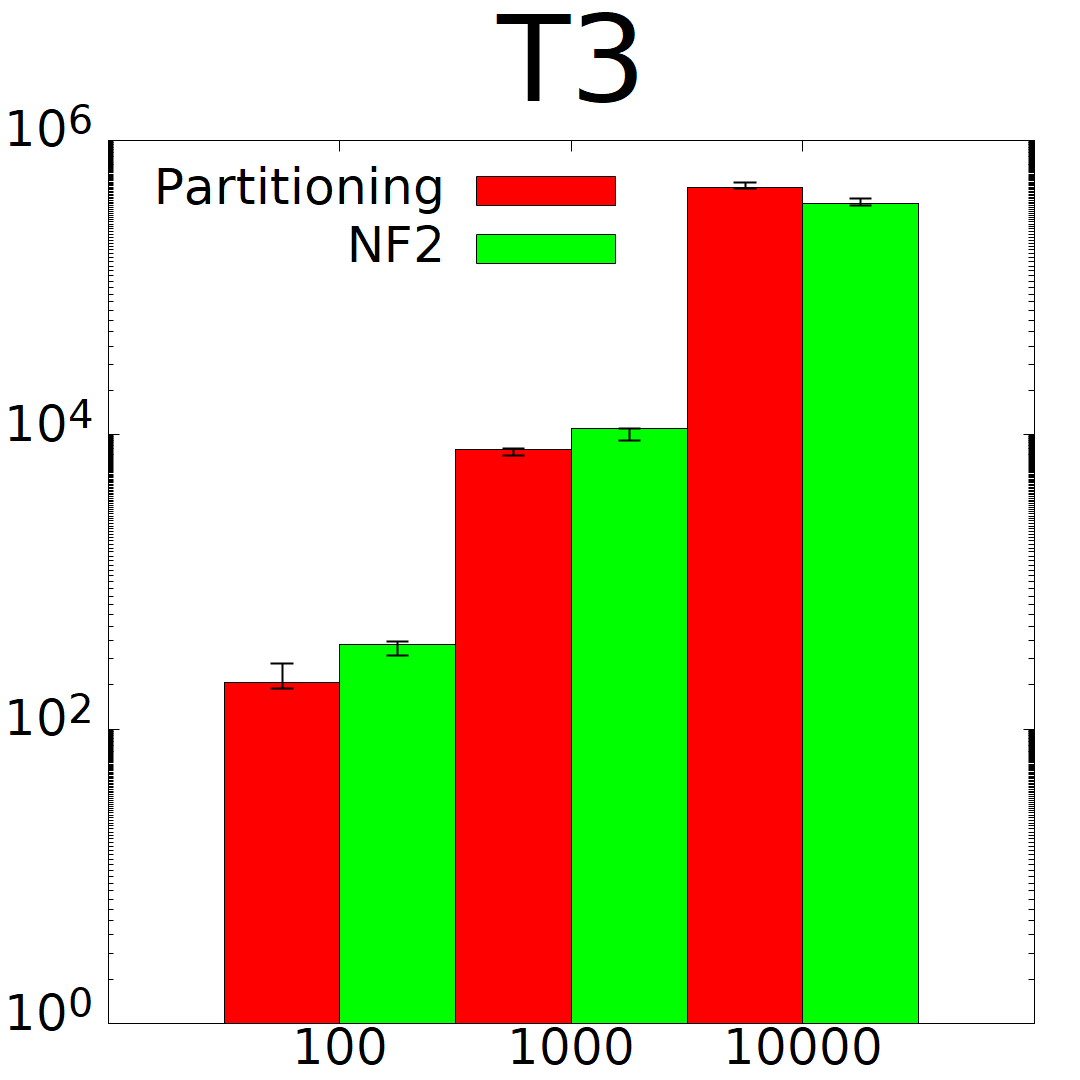}
        \includegraphics[scale=0.06]{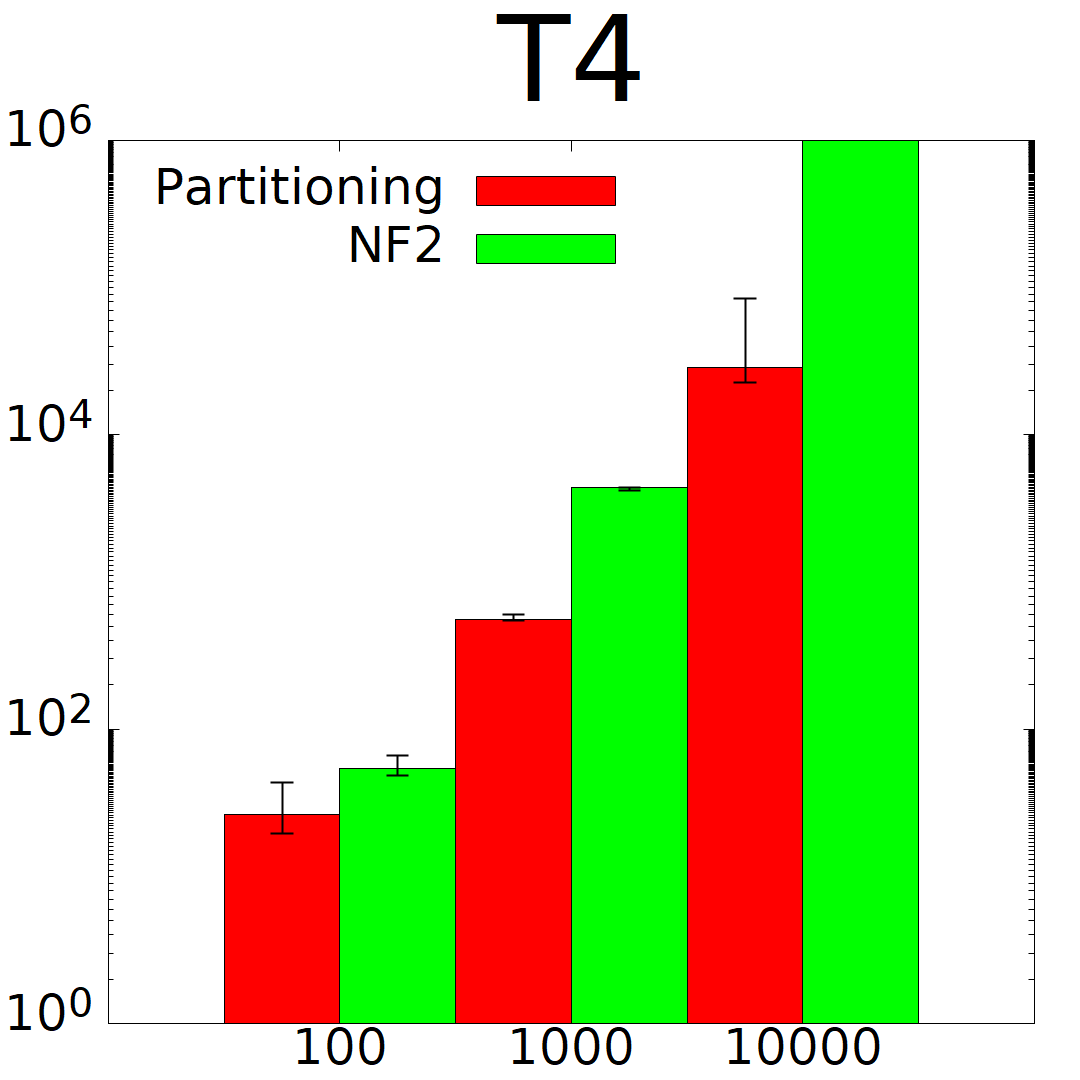}
        \includegraphics[scale=0.06]{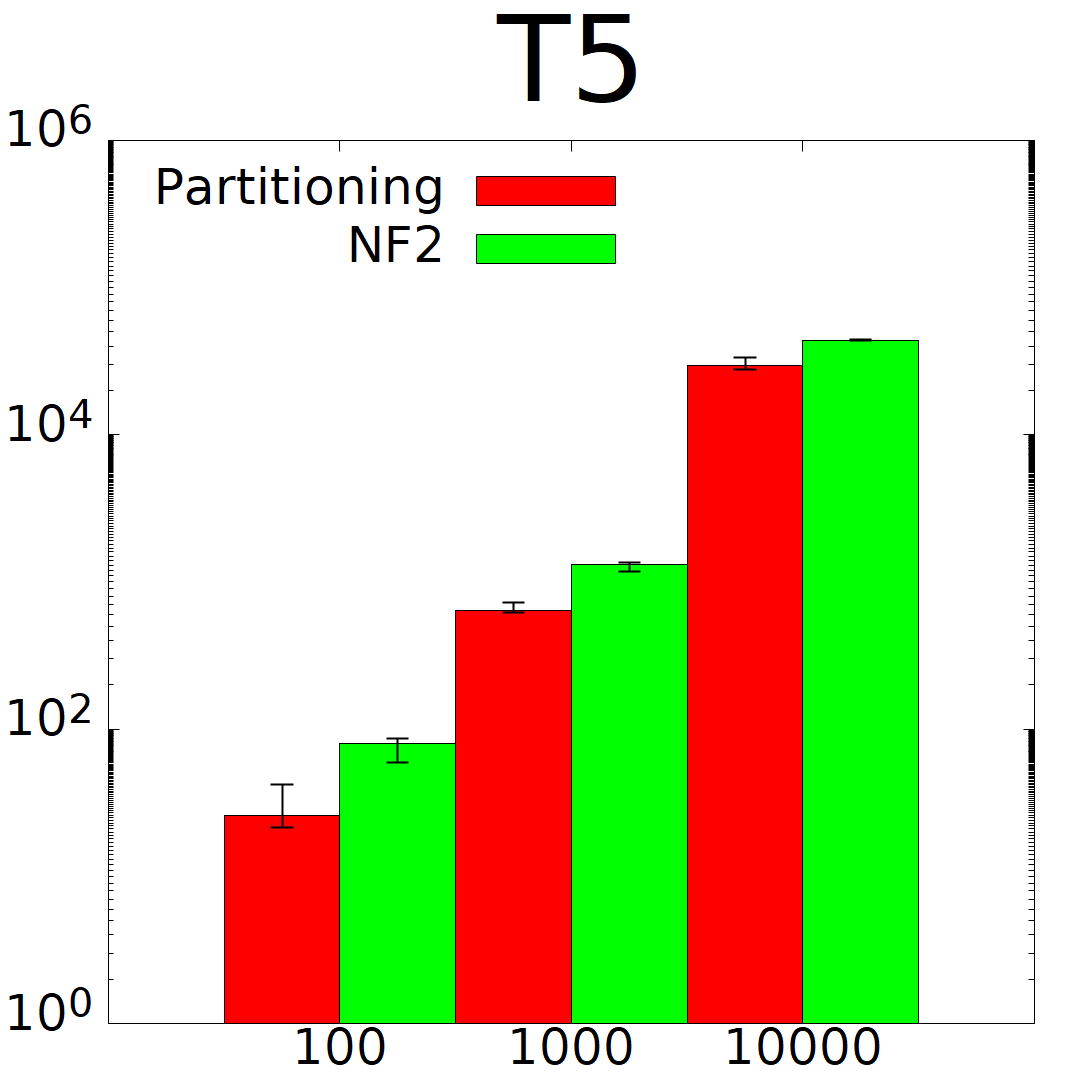}
        \subcaption{Equation generation performance (milliseconds)}
        \label{fig:eqgen-results}
    \end{minipage}
    \begin{minipage}[t]{5cm}
         \includegraphics[scale=0.06]{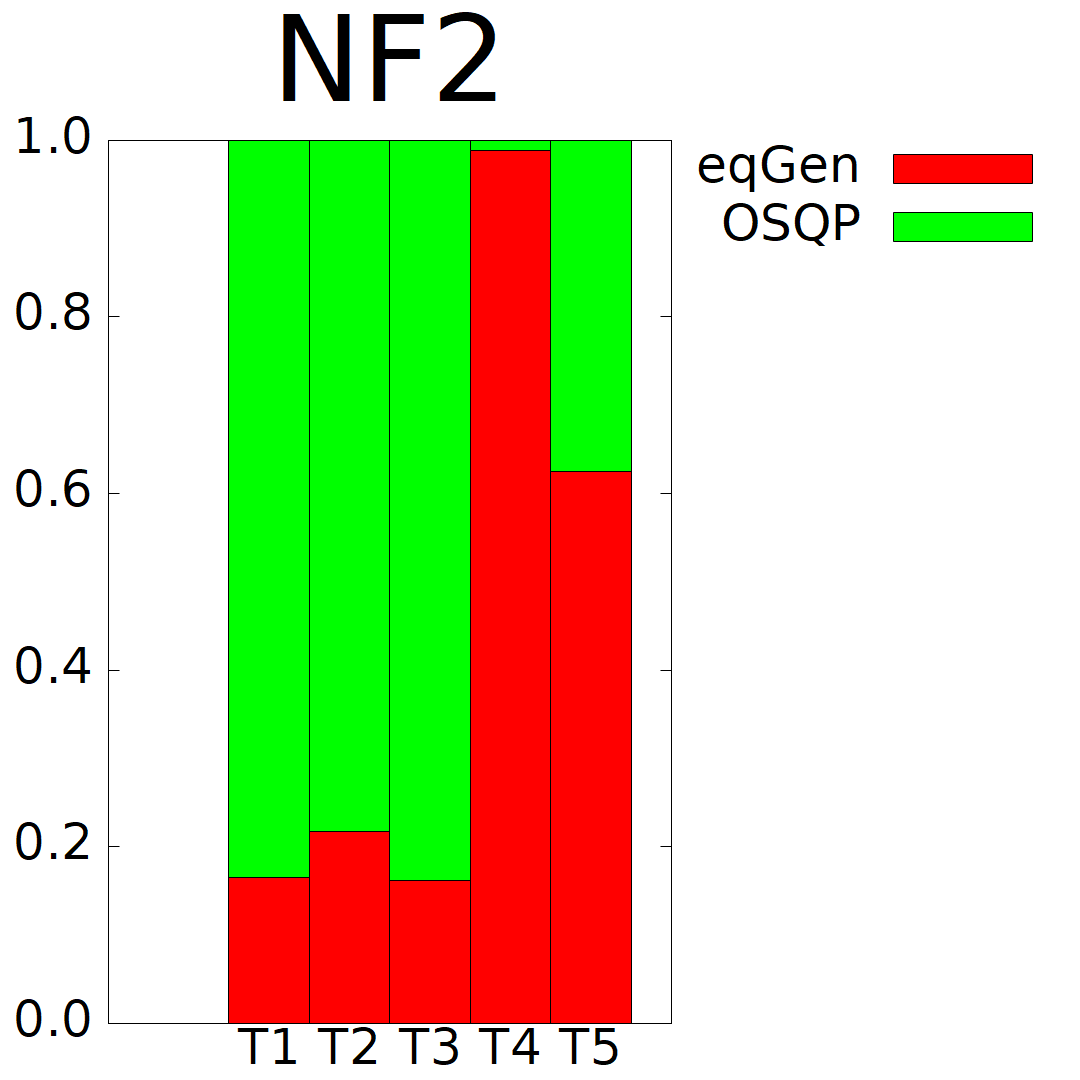}
         \includegraphics[scale=0.06]{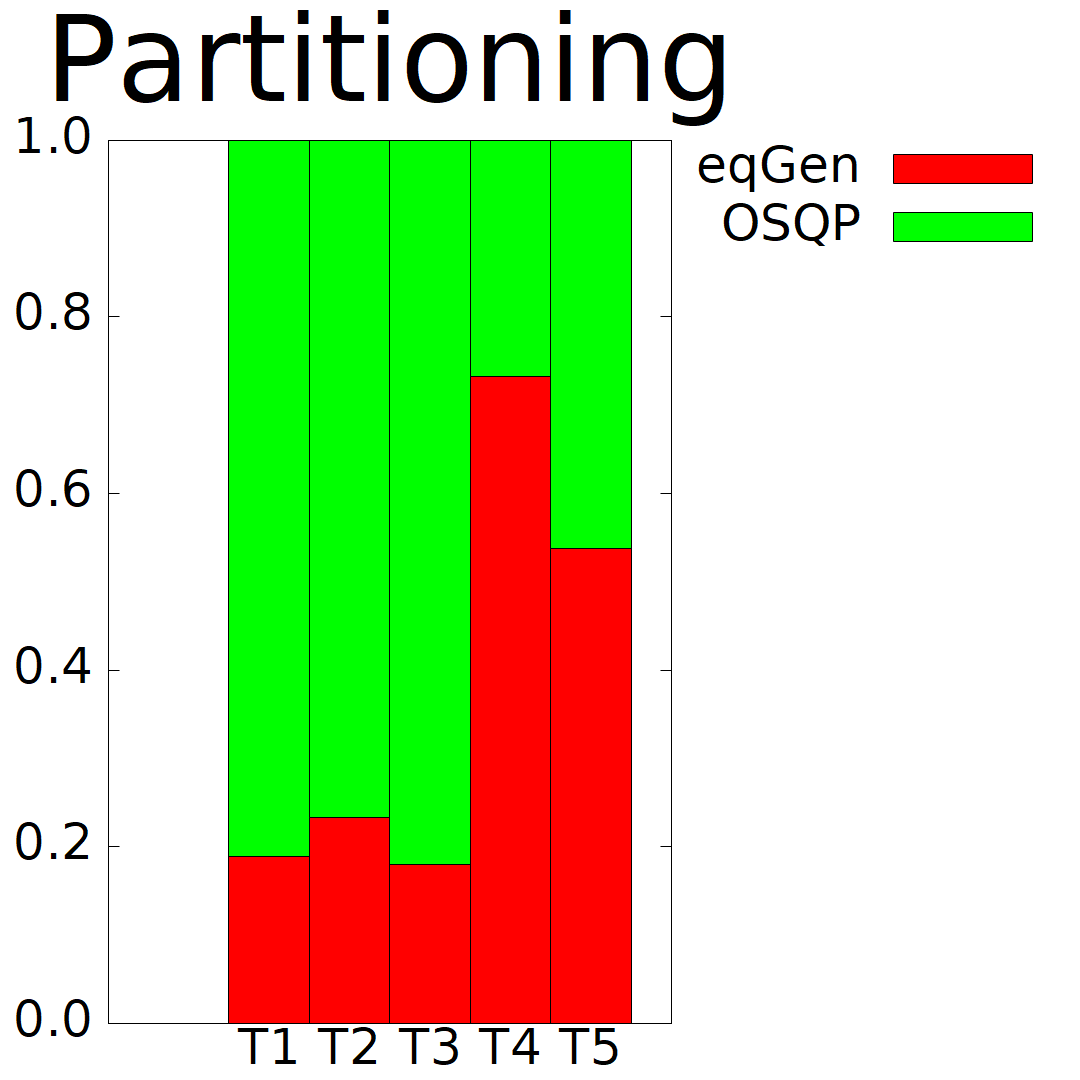}
        \subcaption{Percentages of time spent
  generating equations and solving}
        \label{fig:solve-percentage-results}
    \end{minipage}
 \caption{Equation generation and solving}
    \label{fig:solve-results}
\begin{center}
\includegraphics[scale=0.085]{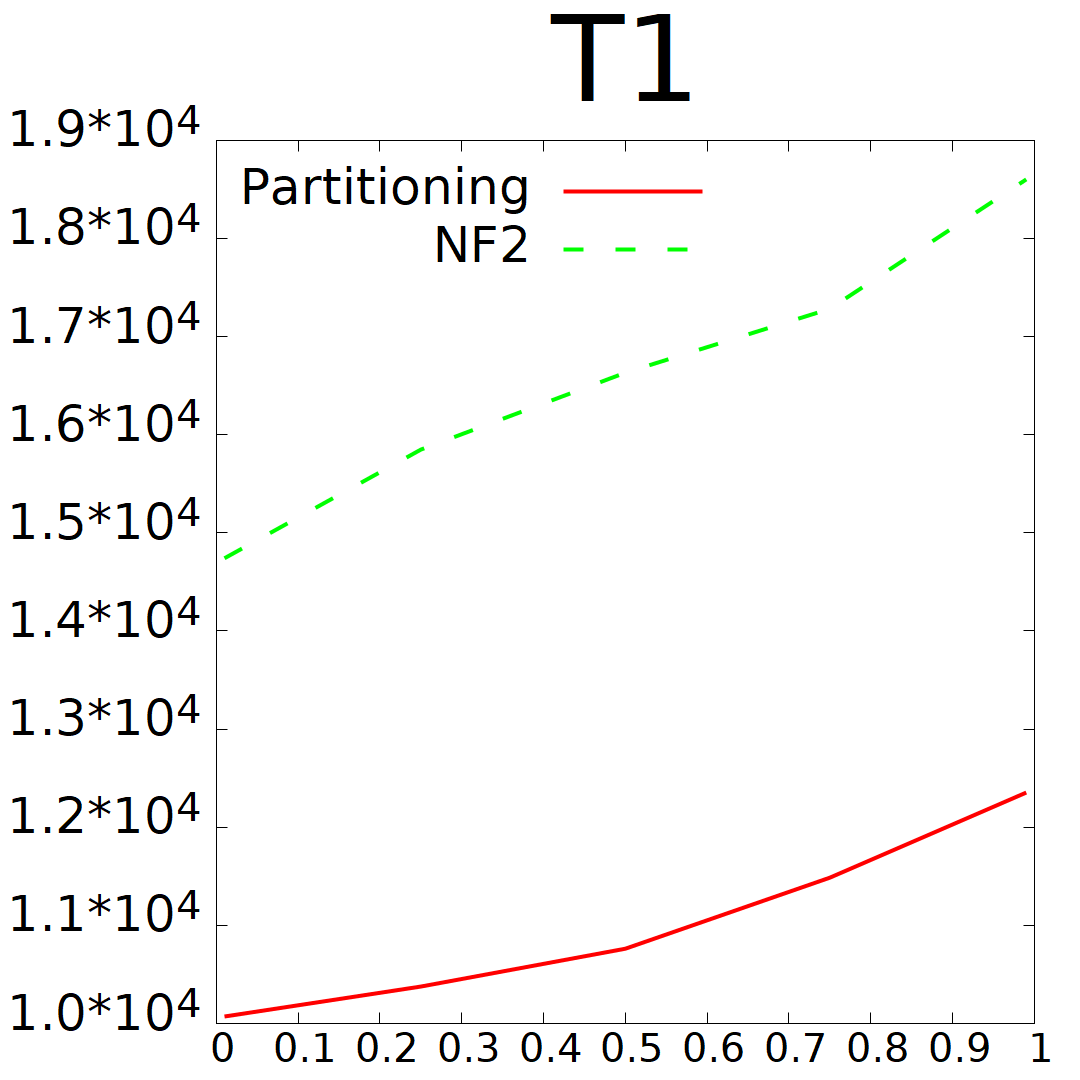}
\includegraphics[scale=0.085]{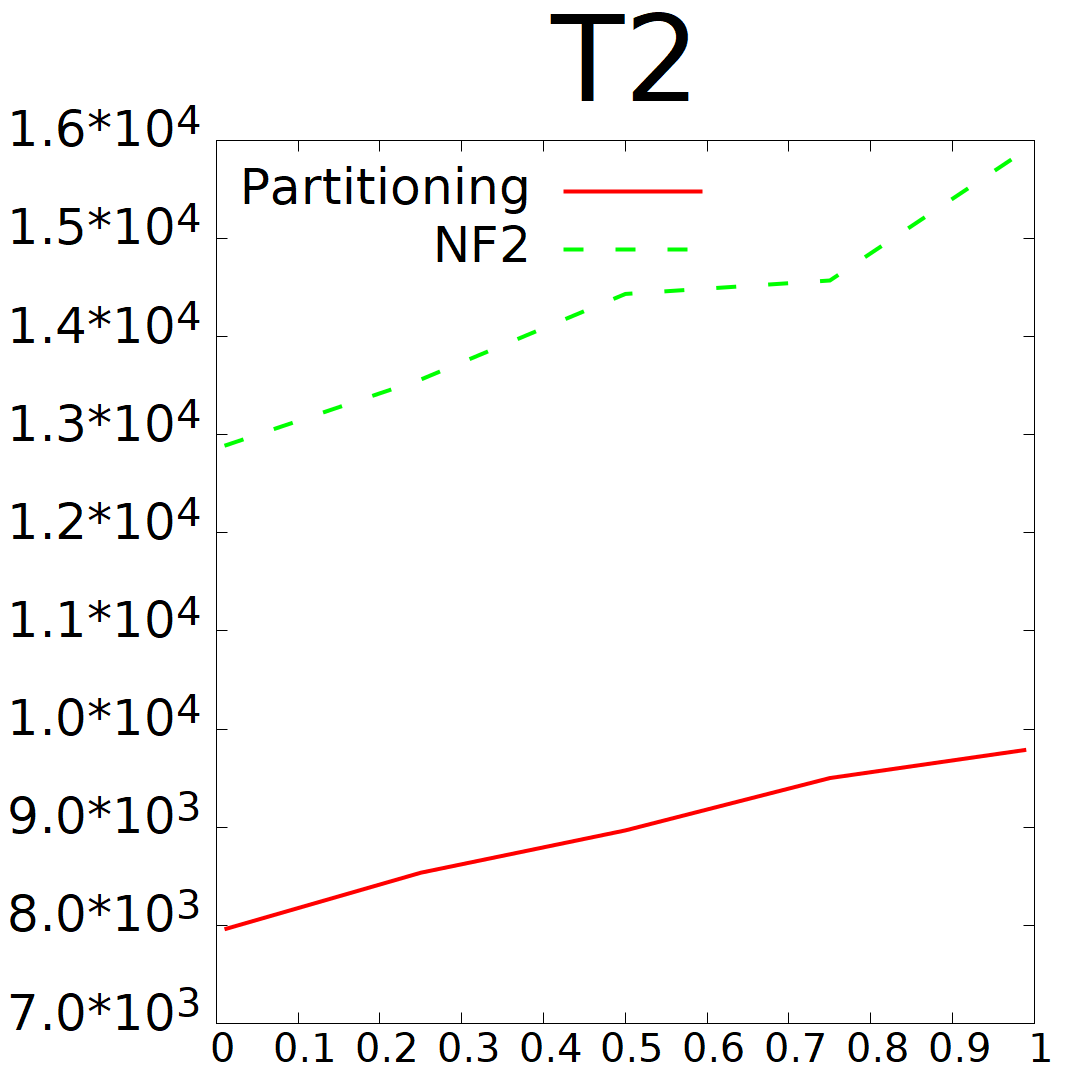}
\includegraphics[scale=0.085]{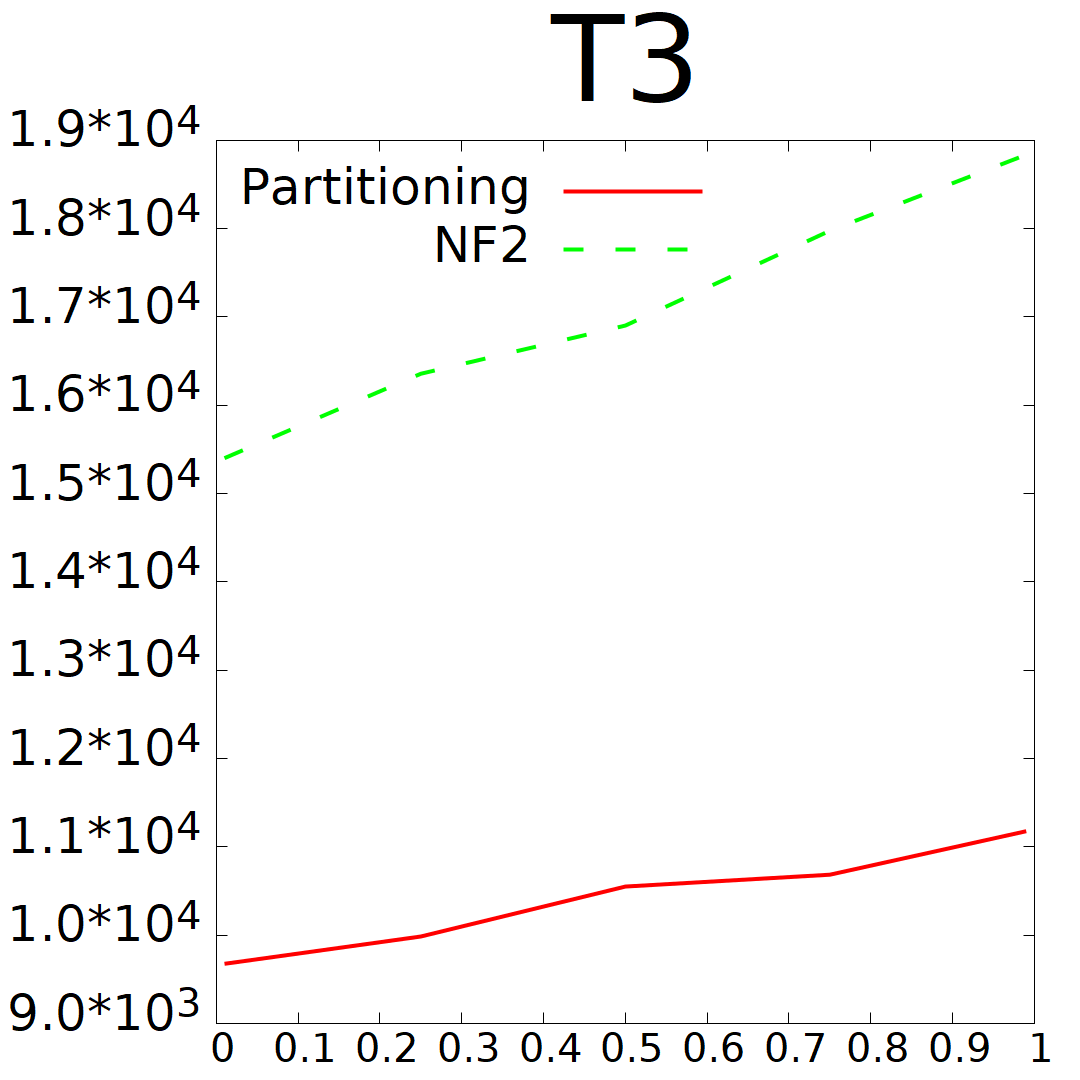}
\includegraphics[scale=0.085]{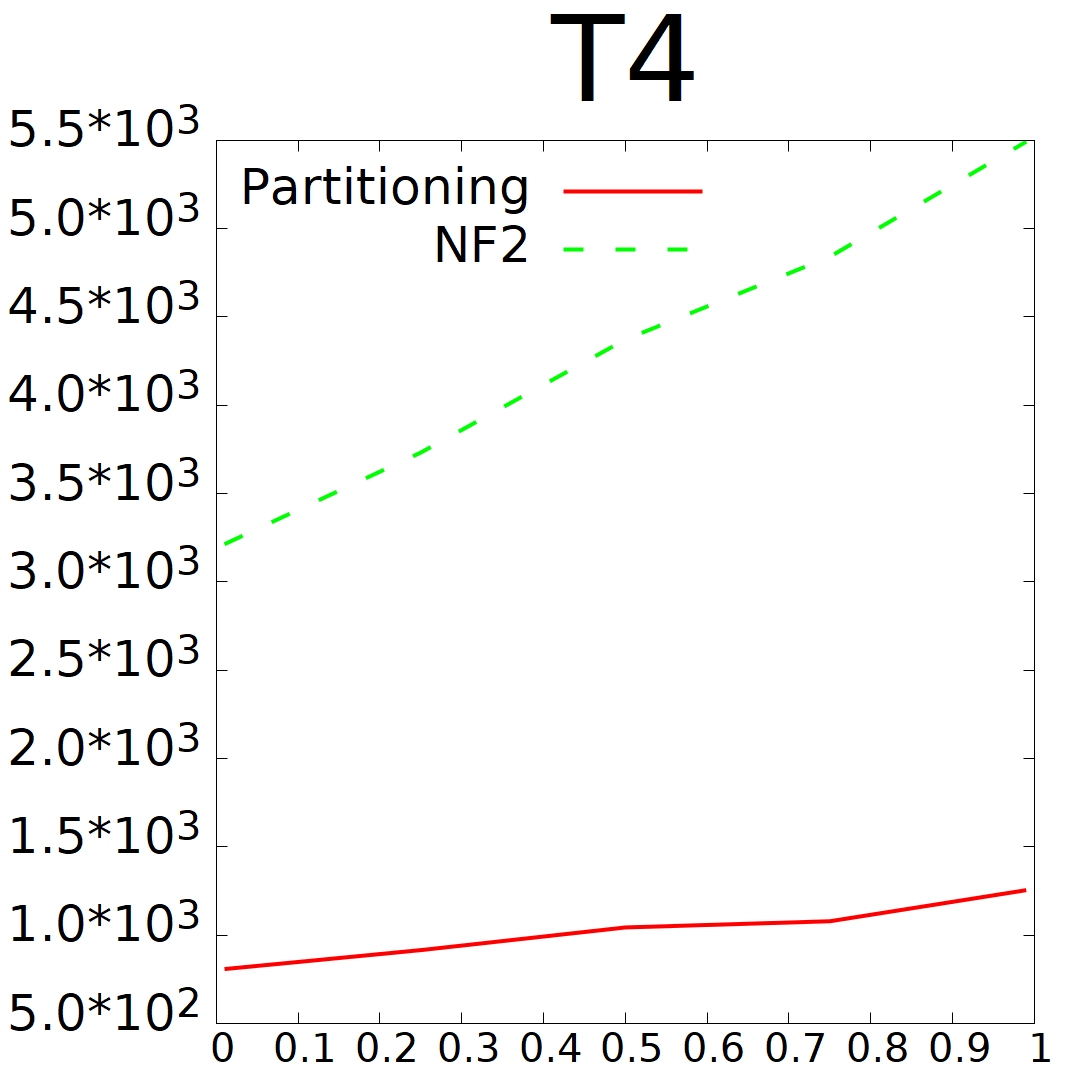}
\includegraphics[scale=0.085]{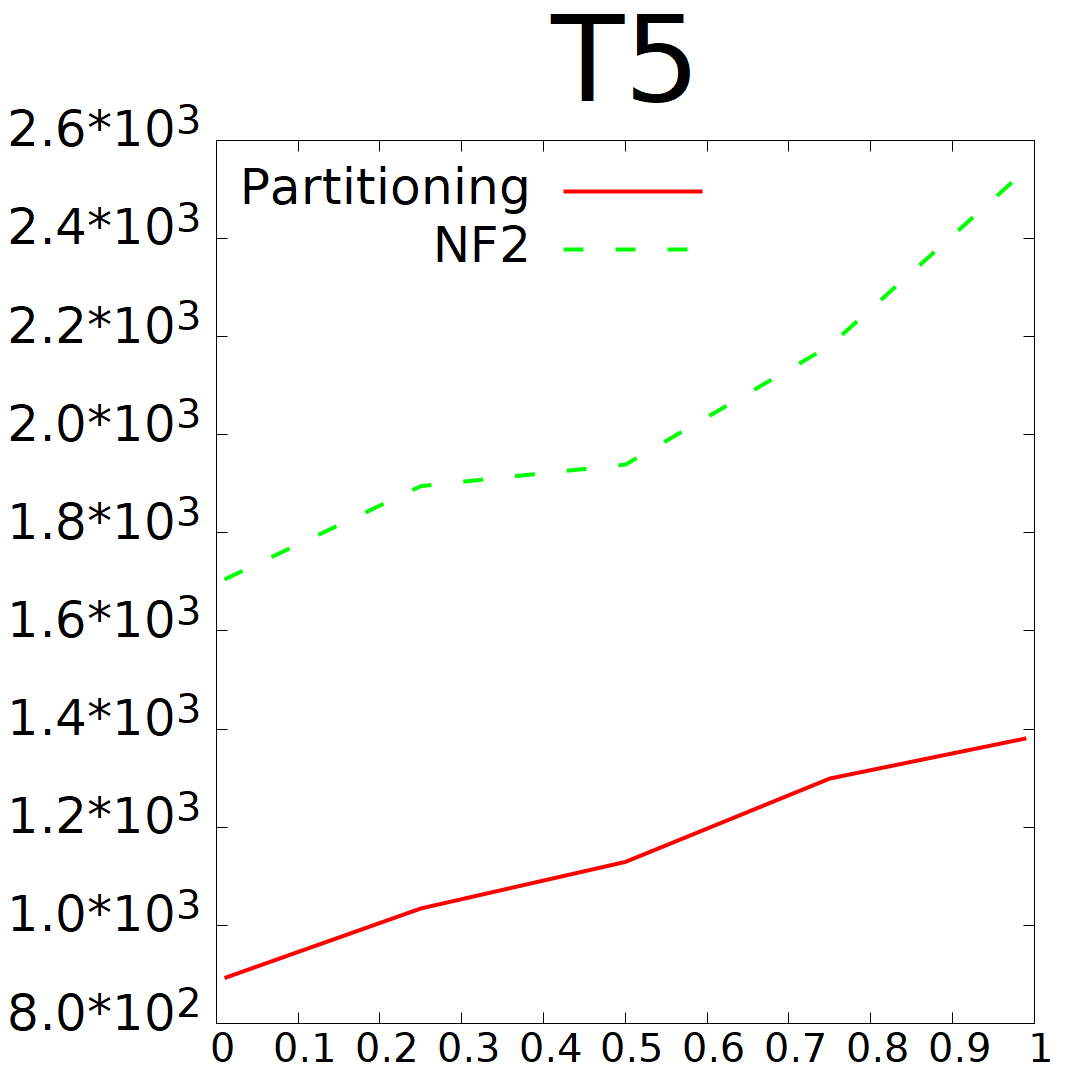}
\end{center}
\caption{Number of variables evaluation impact}\label{fig:eval-variables}
\begin{center}
\includegraphics[scale=0.085]{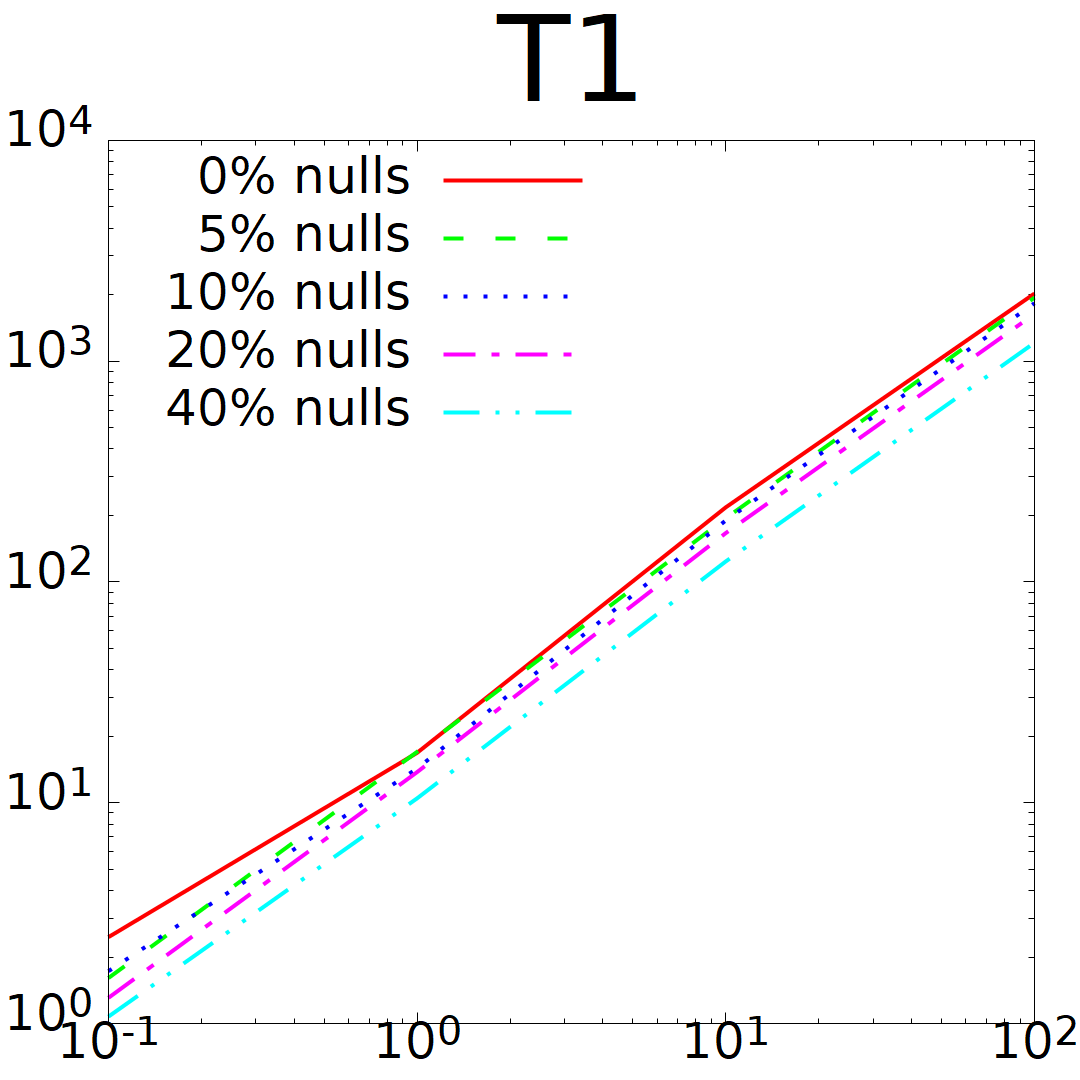}
\includegraphics[scale=0.085]{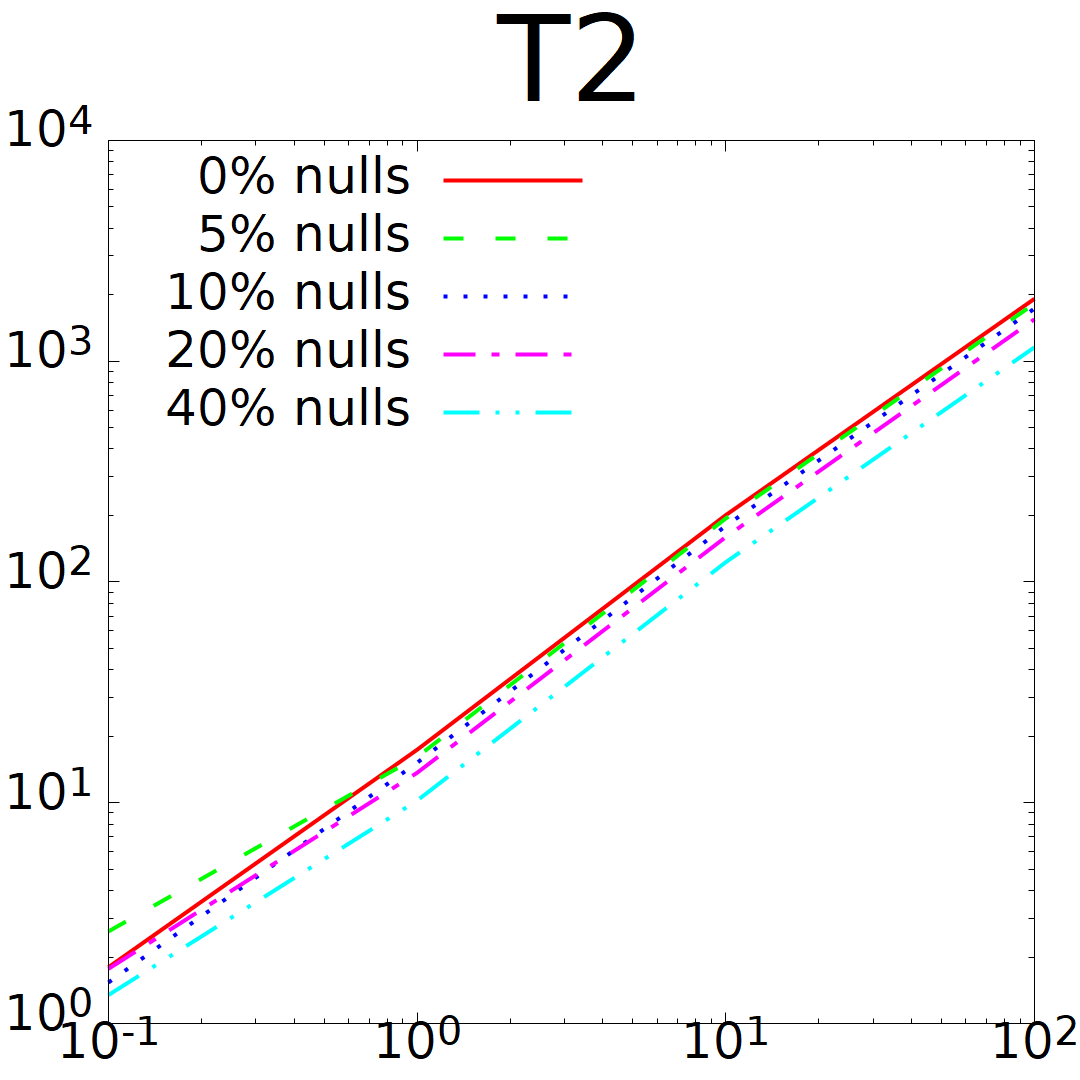}
\includegraphics[scale=0.085]{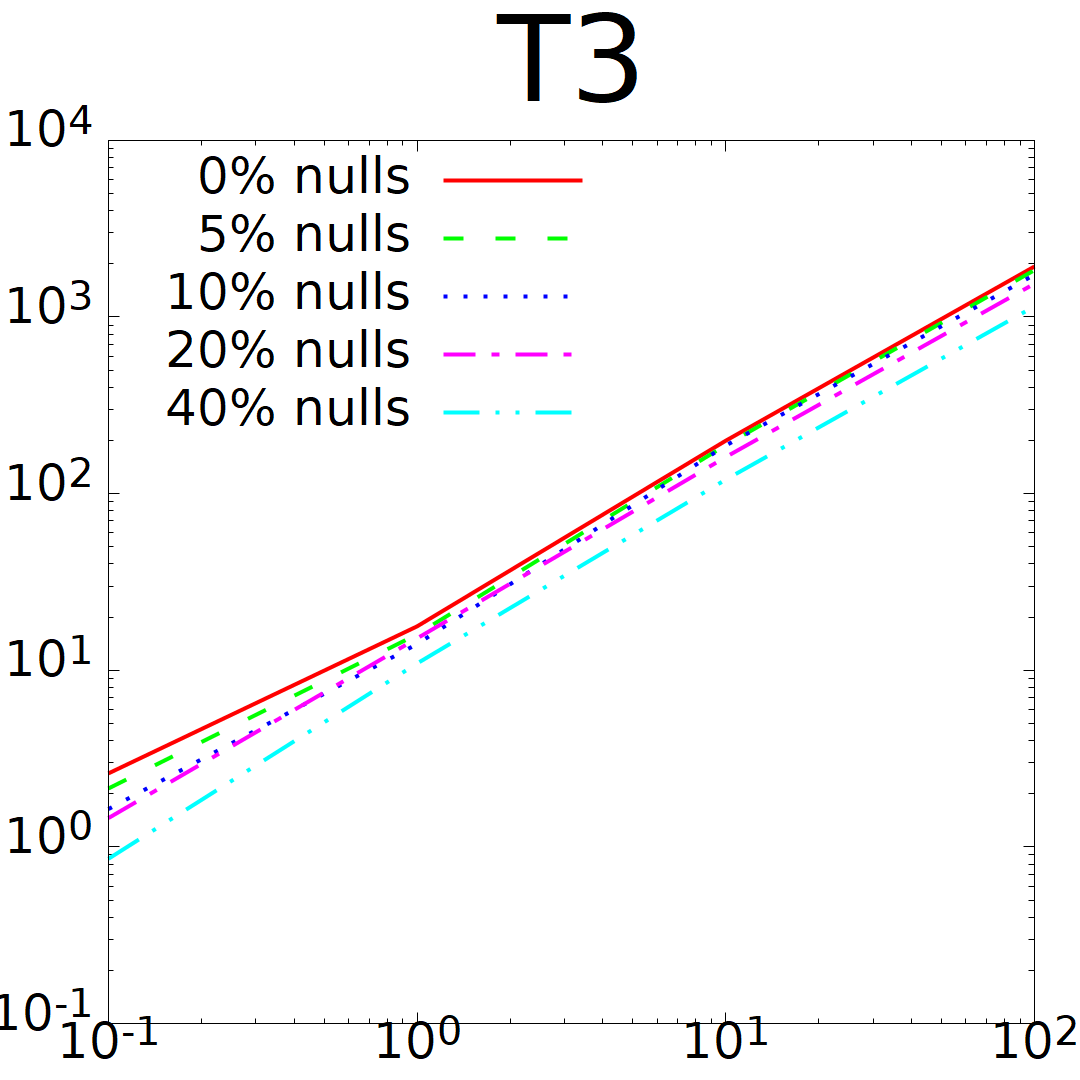}
\includegraphics[scale=0.085]{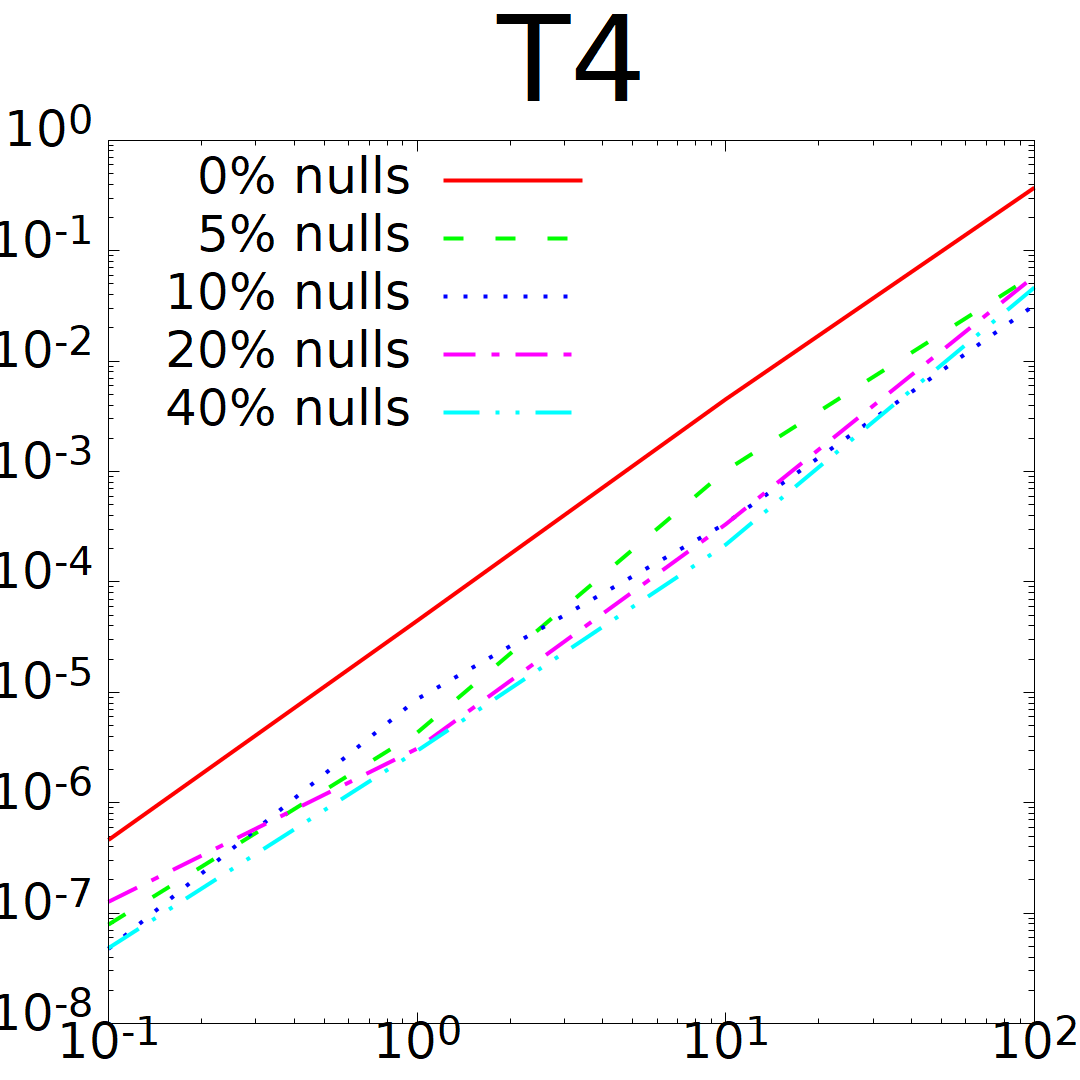}
\includegraphics[scale=0.085]{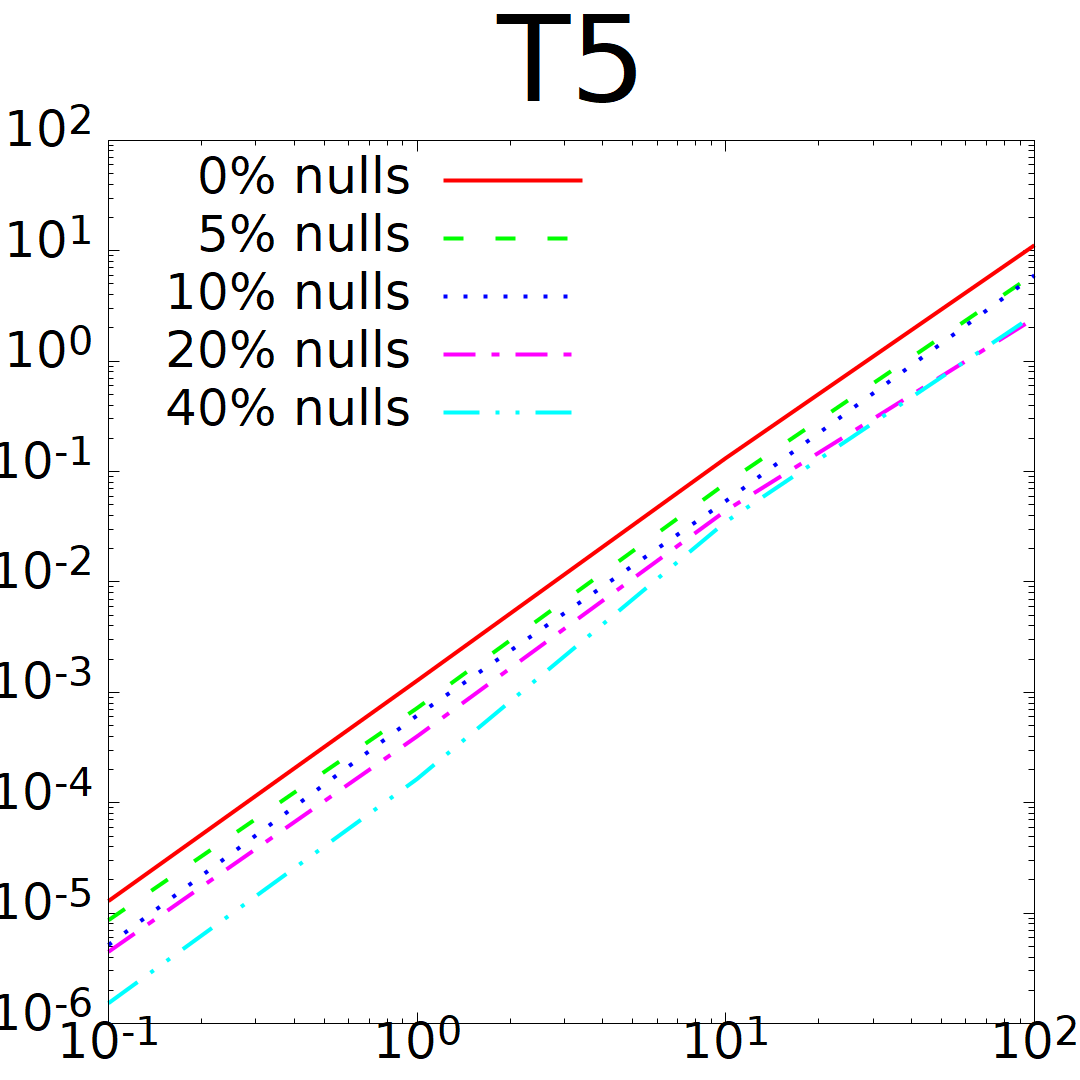}
\end{center}
\caption{Distortion evaluation impact}\label{fig:eval-distortion}
\end{figure*}

We considered the following questions:
\begin{enumerate}[Q1.]
    \item How does the time taken for symbolic query evaluation using \sparsenftwo and \partitioning vary depending on data size?
    \item How does the time taken for equation generation vary depending on data size?
    \item How does the time taken by OSQP for solving compare to that needed for equation generation?
    \item How does overall time taken vary depending on the number of variables?    
    \item How does the measured discordance vary depending on the amount of distortion in the data?
    \end{enumerate}
Q1 and Q2 measure the performance of our system without considering the time taken by OSQP.  Q3 determines whether our system produces QP problems that are feasible for OSQP to solve, because such problems could be encoded in several different ways.  Q4 assesses whether and how performance depends on the amount of source data being symbolic, while Q5 investigates how discordance behaves when data that we know to be consistent is distorted to different degrees.

Although there are several benchmarks for entity resolution and evaluation of the distance between descriptive data, there is not any available benchmark with multiple sources of overlapping numerical data suitable for our system, so we adopted a microbenchmarking approach with synthetic data and simple queries.  We defined a simple schema with tables $R : A,B \tri C,D$ and $S : B \tri E,F$ and a random data generator that populates this schema, for a given parameter $n$, by generating $n$ rows for $S$ and for each such row $(b,e,f)$, generating between 0 and $\sqrt n$ rows for $R$ with the same $B$ field.  Thus on average the resulting database contains $n + \frac{n}{2}\sqrt n$ rows in total.  We generated databases for $n\in\{100;1{,}000;10{,}000\}$; note that $n=10{,}000$ actually corresponds to approximately $510{,}000$ rows.  For each $n$, we performed five trials using five different randomly-generated datasets and took the median running time (or for Q5, median distortion) over these five runs.
We consider the following queries to exercise the most complex cases of the translation of Section~\ref{sec:implementation}:
\[\small
\begin{array}{rcl}
q_1 & = & R' \Join S' \\
q_2 & = & \varepsilon_{W := C + D}(R')\\
q_3 & = & \varepsilon_{X := W*C}(\varepsilon_{W := 1}(R'))\\
q_4 & = & \gamma_{A;C}(R')\\
q_5 & = & \gamma_{B;C}(R')\\
q_6 &=& R \uplus_Z R'\\
q_7 &=& \kappa_Z(q_6)
\end{array}
\begin{array}{rcl}
T_1 &=& \kappa_Z(q_1 \uplus_Z (R \Join S))\\
T_2 & = & \kappa_Z(q_2 \uplus_Z \varepsilon_{W := C + D}(R))\\
T_3 & = & \kappa_Z(q_3 \uplus_Z \varepsilon_{X := W*C}(\varepsilon_{W := 1}(R)))\\
T_4 & = & \kappa_Z(q_4 \uplus_Z \gamma_{A;C}(R))\\
T_5 & = & \kappa_Z(q_5 \uplus_Z \gamma_{B;C}(R))
\end{array}\]

Given two source tables $R, S$, in a database generated as explained above, we create \emph{observation tables} $R_o$, $S_o$ by distorting them as follows: For each row, we randomly replace each value field with NULL with some probability (i.e., $p = 0.01$) and otherwise add a normally-distributed distortion. Next, symbolic views $R', S'$ of both distorted tables are defined, as outlined in Section~\ref{section.symbolicEvaluation}.
Once we have these two versions of the tables (i.e., the source $R$, $S$ and the distorted, symbolic one $R'$, $S'$), we considered two modes of execution of these queries: in the first mode ($q_1\dots q_7$), we simply evaluate the query over a symbolic input (i.e., $R'$ and $S'$) and construct the result; in the second mode ($T_1\dots T_5$), we evaluate the result of aligning the distorted query result with the result over the original source tables (i.e., $R$ and $S$).  Thus, for example, for $T_1$ we generate the equations resulting from the fusion expression $q_1 \sqcup (R \Join S)$, actually implemented like $\kappa_Z(q_1 \uplus_Z (R \Join S))$.
Finally, the resulting system of equations is solved, subject to the metric giving each error variable $x$ a weight of $x^2$ and each null variable a weight of 0.

For Q1, executions are summarized in Fig.~\ref{fig:eval-results}, where reported times include the time to receive the symbolic query results.
These show that the \partitioning and \sparsenftwo have broadly similar performance; despite \sparsenftwo's comparative simplicity, its running time is often faster with the exceptions being $q_4$ and $q_5$, the two aggregation queries.  Particularly for $q_4$, aggregation can result in large symbolic expressions which are not always handled efficiently by the \sparsenftwo sparse vector operations using PostgreSQL arrays; we experimented with several alternative approaches to try to improve performance without success.  Thus, in cases where the symbolic expressions do not grow large, \sparsenftwo seems preferable.

For Q2 and Q3, we measured the time taken for equation generation and for OSQP solving for each query, using different database sizes as described above.  The results are shown in Fig.~\ref{fig:solve-results}.  In Fig.~\ref{fig:eqgen-results}, the time taken for equation generation, including querying and serializing the resulting OSQP problem instances, is shown (again in logarithmic scale).  The OSQP solving times for \partitioning and \sparsenftwo are coincident and so not shown.  In  Fig.~\ref{fig:solve-percentage-results}, the percentage of time spent on equation generation and on OSQP solving for the largest database instance ($n=10{,}000$) is shown, and we can appreciate that they are always in a similar order of magnitude so neither can be claimed to be a bottleneck in front of the other.  

For Q4, we considered a fixed database size (n=1000) and modified the data generation process and specifications so that for each input table, each row was treated as symbolic with some probability $p_{sym}$.  We considered $p_{sym} \in \{0.01, 0.25, 0.5, 0.75, 0.99\}$.  Only the values in these symbolic rows were augmented with variables and only these rows were distorted.  We reran the evaluation for Q2 and Q3 to compute the total time in each case, for both encodings, in order to assess how the performance varies as the number of variables/symbolic fields in the input increases.  Figure~\ref{fig:eval-variables} shows the results, in each case reporting the median time observed out of five runs.  For both \partitioning and \sparsenftwo strategies, the total time increases roughly linearly.  We further inspected the results for equation generation and solving time and found that generally the solving times for problems generated by \partitioning and \sparsenftwo were close to each other, thus the difference in performance (especially in the case of $T_4$) is mostly due to difference in query evaluation times for equation generation, in line with the general trends noticed in Figure~\ref{fig:eqgen-results}. 
Thus, the scalability of the approach is not compromised by our addition to the model, but follows the expected behaviour of regular ground queries (without variables).

For Q5 we again considered a fixed database size (n=1000) and separately varied the probability of replacing a value with null ($p_{null}$) and the standard deviation of the normally-distributed noise ($\sigma$).  We would expect increasing the number of NULLs to decrease the discordance (all else being equal) because null variables carry no weight, while increasing the standard deviation of the distortion should increase the discordance.  We considered $p_{null} \in \{0, 0.05, 0.1, 0.2, 0.4\}$ and $\sigma \in \{0.1, 1, 10, 100\}$.  For each combination of parameters we evaluated five randomly-generated inputs and computed the distance found by OSQP, taking the median discordance in each case.  The results are shown in Figure~\ref{fig:eval-distortion}.  We report only once the results obtained, because the discordance value found does not depend on the implementation strategy. These results confirm that increasing the amount of distortion ($\sigma$) generally increases the discordance, while increasing the number of NULLs ($p_{null}$) tends to decrease discordance (because it introduces degrees of freedom to the problem that do not incur any penalty in the cost metric).

\subsection{Case study}\label{section.CaseStudy}

We might use our tool to get a best-fit database.
However, this would only be useful if sources are close to each other (and hence to reality).
If they are relatively discordant (like the blind men describing the elephant), all we can aim at is to measure and study the evolution of such discordancy.
Thus, we applied our prototype to the study of challenging COVID-19 data, which is publicly available, and see from that the improvement of reporting in different countries during the pandemic.
More specifically, we considered two different sources:
\begin{description}
\item[Johns Hopkins University (JHU)] The Center for Systems Science and Engineering (CSSE) at JHU was gathering COVID-19 data since the very beginning of the pandemic and became a referent worldwide \cite{jhu}. On the one hand, we have used its daily time series at country level\footnote{\url{https://github.com/CSSEGISandData/COVID-19/tree/master/csse\_covid\_19\_data/csse\_covid\_19\_time\_series}} containing both cases and deaths. Unfortunately, on the other hand, regional data is scattered in different files in the JHU repository, so we used a more compact version.\footnote{\url{https://github.com/coviddata/coviddata}}
\item[EuroStats]
As second data source for comparison, we used the weekly European mortality by EuroStats,\footnote{\url{https://ec.europa.eu/eurostat/databrowser/view/demo\_r\_mwk2\_ts/default/table}} following the Nomenclature of Territorial Units for Statistics (NUTS).\footnote{\url{https://ec.europa.eu/eurostat/web/nuts/background}}
\end{description}

JHU was going through a continuous consolidation and cleaning process, but still resulted in quite poor quality. Obviously, EuroStats data are of much higher quality and more reliable. Indeed, the weekly mortality per country appears to be historically quite stable (less than 5.5\% coefficient of variation for the six countries of our study).
Hence, we took the weekly mortality of the five years previous to the pandemic as ground truth.
However, for some countries, most recent figures were either tagged as provisional or estimated.
While we considered the former to be an administrative issue and still part of the error-free ground truth, we put the latter together with the mortality of 2020/2021 in an \emph{s-table}, and treated those data in the same way as the ones coming from JHU.

\begin{table} \tiny
    \begin{center}
    \begin{tabular}{|l|r|r|r|l|l|}\hline
        \multicolumn{1}{|c}{\textbf{Table}} & \multicolumn{1}{|c}{\textbf{Loc.}} & \multicolumn{1}{|c}{\textbf{Times}} & \multicolumn{1}{|c}{\textbf{Rows}} & \multicolumn{1}{|c}{\textbf{First}} & \multicolumn{1}{|c|}{\textbf{Last}} \\ \hline \hline
        EU(\underline{r,w},\#d) & 222 & 1,043 & 152,938 & 2000W01 & 2019W52 \\ \hline
        EUe(\underline{r,w},\#d) & 222 & 73 & 15,001 & 2000W01 & 2021W20 \\ \hline
        EU(\underline{c,w},\#d) & 33 & 1,043 & 26,177 & 2000W01 & 2019W52 \\ \hline
        EUe(\underline{c,w},\#d) & 34 & 1,116 & 4,125 & 2000W01 & 2021W20 \\ \hline
        JHU(\underline{c,d},\#c,\#d) & 197 & 479 & 94,363 & 20200119 & 20210521 \\ \hline
        JHU(\underline{r,d},\#c,\#d) & 550 & 385 & 211,365 & 20200129 & 20210216 \\
        \hline
    \end{tabular}
    \end{center}
    \caption{Summary of the tables in the experiment}\label{table:coviddescriptive}
\end{table}

We loaded the different data in a PostgreSQL data\-base with Pentaho Data Integration.
These were divided in the six tables shown in Table~\ref{table:coviddescriptive}, together with the counters of different locations and times, number of rows, and first and last time point available.
Data was split firstly according to the source (namely EuroStats or JHU).
Ground truth mortality (i.e., until the end of 2019 and free of errors) is in \emph{ground} tables $EU$, while estimates and data of 2020/2021 are in \emph{s-tables} $\mathit{EUe}$.
Different \emph{s-tables} are also generated for different geographic granularities (namely region $r$ or country $c$), and relevantly, data from Eurostats is available per week $w$, while data from JHU is available daily $d$.
Both location and temporal dimensions result in different (underlined) key attributes for the corresponding tables.
From EuroStats, we only used the number of deaths $\#d$, while from JHU we took both COVID-19 cases $\#c$ and deaths $\#d$.
Attribute $\#d$ is declared as free of variables in both $EU$ \emph{ground} tables and its instances are consequently constants.
Values coming from EuroStats correspond exactly to the reported ones, but to mitigate the noise (e.g., cases not reported during weekends being moved to the next week by some regions) in those coming from JHU, we followed the common practice of taking the average in the previous seven days for both cases and deaths.

\begin{figure*}[tb]
	\noindent\begin{minipage}{\linewidth}
	    \small\centering
		\begin{tikzpicture}[
		vertex/.style = {shape=rectangle,rounded corners,draw},
		edge/.style={-> = latex'},
		node distance=0.25cm and .85cm
		]
    \node (EUrw) [vertex] {EU(\underline{r,w},\#d)};
    \node (EUrwe) [vertex, below=of EUrw] {EUe(\underline{r,w},\#d)};
    \node (union1) [vertex, right=of EUrw] {$\cup$};
        \draw[edge] (EUrw) -- (union1);
        \draw[edge] (EUrwe) -- (union1);
    \node (sel1) [vertex, right=of union1] {$\sigma_{w>=2020W06\ and\ w<=2021W06}$};
        \draw[edge] (union1) -- (sel1);
    \node (sel2) [vertex, below=of sel1] {$\sigma_{y>2014\ and\ y<2020}$};
        \draw[edge] (union1) -- (sel2);
    \node (gb1) [vertex, right=of sel2] {$\gamma_{r,woy;avg(\#d)}$};
        \draw[edge] (sel2) -- (gb1);
    \node (join1) [vertex, above=of gb1] {$\Join$};
        \draw[edge] (sel1) -- (join1);
        \draw[edge] (gb1) -- (join1);
    \node (expr1) [vertex, right=of join1] {$\varepsilon_{s:=\#d-avg(\#d)}$};
        \draw[edge] (join1) -- (expr1);
    \node (gb1bis) [vertex, right=of expr1] {$\gamma_{c,w;sum(s)}$};
        \draw[edge] (expr1) -- (gb1bis);
    \node (rename0) [vertex, below=of gb1bis] {$\rho_{sum(s)\mapsto s}$};
        \draw[edge] (gb1bis) -- (rename0);
    \node (EUcw) [vertex, below=of EUrwe] {EU(\underline{c,w},\#d)};
    \node (EUcwe) [vertex, below=of EUcw] {EUe(\underline{c,w},\#d)};
    \node (union2) [vertex, right=of EUcw] {$\cup$};
        \draw[edge] (EUcw) -- (union2);
        \draw[edge] (EUcwe) -- (union2);
    \node (sel3) [vertex, right=of union2] {$\sigma_{w>=2020W06\ and\ w<=2021W06}$};
        \draw[edge] (union2) -- (sel3);
    \node (sel4) [vertex, below right=of union2] {$\sigma_{y>2014\ and\ y<2020}$};
        \draw[edge] (union2) -- (sel4);
    \node (gb2) [vertex, right=of sel4] {$\gamma_{c,woy;avg(\#d)}$};
        \draw[edge] (sel4) -- (gb2);
    \node (join2) [vertex, above=of gb2] {$\Join$};
        \draw[edge] (sel3) -- (join2);
        \draw[edge] (gb2) -- (join2);
    \node (expr2) [vertex, right=of join2] {$\varepsilon_{s:=\#d-avg(\#d)}$};
        \draw[edge] (join2) -- (expr2);
    \node (project2) [vertex, right=of expr2] {$\pi_{s}$};
        \draw[edge] (expr2) -- (project2);
    \node (dunion1) [vertex, right=of project2] {$\uplus_{z}$};
        \draw[edge] (rename0) -- (dunion1);
        \draw[edge] (project2) -- (dunion1);
    \node (JHUcd) [vertex, below=of EUcwe] {JHU(\underline{c,d},\#c,\#d)};
    \node (JHUrd) [vertex, node distance=0.7, below=of JHUcd] {JHU(\underline{r,d},\#c,\#d)};
    \node (gb4) [vertex, above right=of JHUrd] {$\gamma_{c,d;sum(\#d)}$};
        \draw[edge] (JHUrd) -- (gb4);
    \node (rename5) [vertex, right=of gb4] {$\rho_{sum(\#d)\mapsto \#d}$};
        \draw[edge] (gb4) -- (rename5);
    \node (project2) [vertex, right=of JHUcd] {$\pi_{\#d}$};
        \draw[edge] (JHUcd) -- (project2);
    \node (dunion3) [vertex, right=of project2] {$\uplus_{z}$};
        \draw[edge] (project2) -- (dunion3);
        \draw[edge] (rename5) -- (dunion3);
    \node (coal3) [vertex, right=of dunion3] {$\kappa_{z}$};
        \draw[edge] (dunion3) -- (coal3);
    \node (gb3) [vertex, right=of coal3] {$\gamma_{c,w;sum(\#d)}$};
        \draw[edge] (coal3) -- (gb3);
    \node (rename1) [vertex, right=of gb3] {$\rho_{sum(\#d)\mapsto s}$};
        \draw[edge] (gb3) -- (rename1);
    \node (gb5) [vertex, right=of JHUrd] {$\gamma_{r,w;sum(\#d)}$};
        \draw[edge] (JHUrd) -- (gb5);
    \node (gb6) [vertex, below=of gb5] {$\gamma_{r,w+3;sum(\#c)}$};
        \draw[edge] (JHUrd) -- (gb6);
    \node (rename2) [vertex, right=of gb5] {$\rho_{sum(\#d)\mapsto s}$};
        \draw[edge] (gb5) -- (rename2);
    \node (expr3) [vertex, right=of gb6] {$\varepsilon_{s:=1.5\%*sum(\#c)}$};
        \draw[edge] (gb6) -- (expr3);
    \node (projection3) [vertex, right=of expr3] {$\pi_{s}$};
        \draw[edge] (expr3) -- (projection3);
    \node (dunion4) [vertex, right=of rename2] {$\uplus_{z}$};
        \draw[edge] (rename2) -- (dunion4);
        \draw[edge] (projection3) -- (dunion4);
    \node (coal4) [vertex, right=of dunion4] {$\kappa_{z}$};
        \draw[edge] (dunion4) -- (coal4);
    \node (gb7) [vertex, right=of coal4] {$\gamma_{c,w;sum(s)}$};
        \draw[edge] (coal4) -- (gb7);
    \node (rename4) [vertex, above=of gb7] {$\rho_{sum(s)\mapsto s}$};
        \draw[edge] (gb7) -- (rename4);
    \node (dunion5) [vertex, right=of rename1] {$\uplus_{z}$};
        \draw[edge] (rename1) -- (dunion5);
        \draw[edge] (rename4) -- (dunion5);
    \node (dunion2) [vertex, below=of dunion1] {$\uplus_{z\prime}$};
        \draw[edge] (dunion1) -- (dunion2);
        \draw[edge] (dunion5) -- (dunion2);
    \node (coal2) [vertex, right=of dunion2] {$\kappa_{z, z\prime}$};
        \draw[edge] (dunion2) -- (coal2);

		\end{tikzpicture}
		\captionof{figure}{COVID data alignment ($\kappa_z(R \uplus_z S)$ is the implementation of fusion operation $R \sqcup S$ in Definition~\ref{definition.fusion})} \label{fig:COVIDquery}
	\end{minipage}
\end{figure*}

Fig.~\ref{fig:COVIDquery} shows a logical representation of our alignment of the sources.  Notational elements are introduced to facilitate the understanding, like ``avg'' instead of the ``sum/count'' actually used in the current prototype.
Dimensional tables like $\textit{date}$  and $\textit{firstadminunit}$ and their corresponding joins to facilitate selections over $\textit{year}$ and week of year ($\textit{woy}$), or the relationships between countries and regions, are omitted for the sake of simplicity.
This alignment reflects the knowledge about the behaviour of COVID-19 pandemic, but other alternative alignments could have been easily explored with \sysname. On the first hand,  we take $EU$ and $\mathit{EUe}$ tables and generate the weekly surplus of deaths after the sixth week of 2020 by subtracting from the declared amounts, the average deaths in the last five years for the same week.
This is done both per region and country, since these values are not always concordant (even if coming from the same source).
Then, regional results are aggregated per country and merged in the same table with the information provided already at that level using a discriminated union to keep track of the different origins.
On the other hand, looking now at JHU tables, we aggregate regional data in three different ways: deaths per country and day, also deaths per region and week, and finally cases per region and week with a lag of three weeks (we will empirically justify this concrete value later).
Under the assumption of Case-Fatality Ratio of $1.5\%$ (observed median on June 22nd, 2021 is 1.7\% according to JHU\footnote{\url{https://coronavirus.jhu.edu/data/mortality}}), such transformation is applied to the cases before merging and coalescing the weekly regional cases and deaths.
Daily deaths reported per country and those obtained after aggregating regions are also coalesced and then aggregated per week.
Both branches of JHU data are finally merged with a discriminated union into a single table.
Finally, the four branches (namely EuroStats regional data, EuroStats country data, JHU regional data aligning cases and deaths, and JHU regional data coalesced with JHU country data) are merged into a single table with a discriminated union and finally coalesced to generate the overall set of equations.
\begin{table} \small
    \begin{center}
    \begin{tabular}{|c|r|r|r|r|r|r|r|}\hline
        \multicolumn{1}{|c}{\textbf{Country}} & \multicolumn{1}{|c}{\textbf{\#Sys}} & \multicolumn{1}{|c}{\textbf{\#Eqs}} & \multicolumn{1}{|c}{\textbf{\#Vars}} &  \multicolumn{1}{|c}{\textbf{Gener.}} & \multicolumn{1}{|c|}{\textbf{Solve}} \\ \hline \hline
        DE & 37 & 50 & 247 & 2.77s & 0.24s \\ \hline
        ES & 37 & 54 & 278 & 2.79s & 0.24s \\ \hline
        IT & 37 & 60 & 322 & 2.80s & 0.24s \\ \hline
        NL & 37 & 42 & 187 & 2.73s & 0.24s \\ \hline
        SE & 37 & 59 & 306 & 2.68s & 0.25s \\ \hline
        UK & 30 & 26 & 75 & 2.73s & 0.24s \\
        \hline
    \end{tabular}
    \end{center}
    \caption{Characteristics of the equations per country}\label{table:covidequations}
\end{table}

We restricted our analysis to only the six countries in Table~\ref{table:covidequations}, chosen because of their relevance in the pandemic and availability of regional data in both EuroStats and JHU.
Regarding the time, we only considered until February 2021, to avoid the impact of vaccination.
As previously explained, to avoid the cost of errors is scaled to some extent by the magnitude of the value, we replaced uncertain values $v$ in the ground tables with $v\cdot(1+x)$ (or simply $x$ if $v=0$) where $x$ is an error variable.
For each country and week, our alignment generates a different system of intertwined equations, which is solved minimizing the discord (i.e., sum of squared error variables as proposed in \cite{DBLP:journals/sigkdd/LiGMLSZFH15}).

In the table, we can see for each country, the number of systems of equations with the maximum number of variables\footnote{We ignored 2020W53, because of its exceptional nature (nonexistent for other years).} (i.e., all possible data is available, what happens between weeks 2020W26 and 2021W06, except for UK whose data is only available in EuroStats until 2020W51), as well as the equations and variables per system in those cases. The average time in seconds to generate each system of equations as a Python input to OSQP and solve it are also reported.

\begin{figure}
    \begin{center}
    \includegraphics[width=10cm]{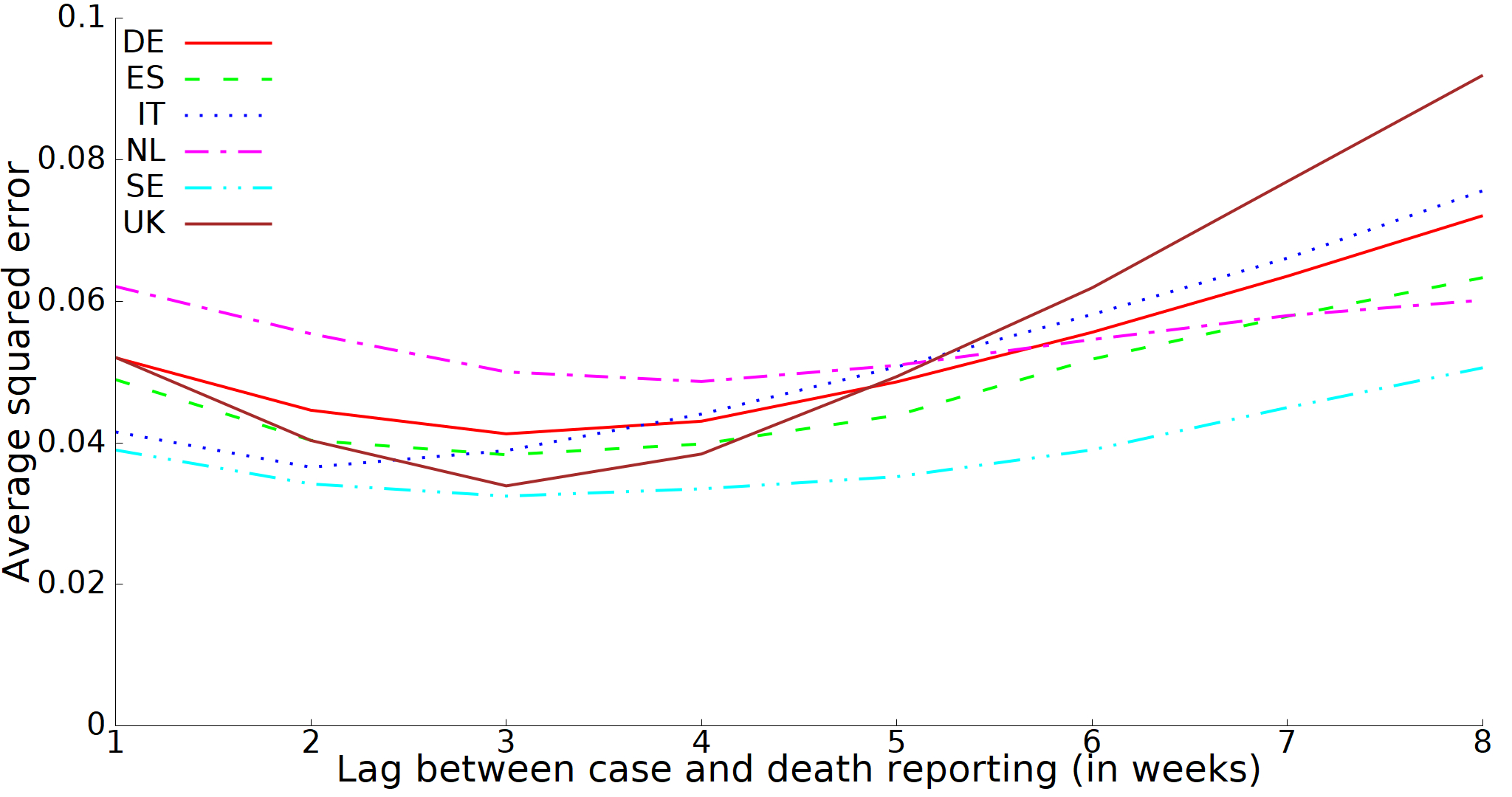}
    \end{center}
    \caption{Error per different alignments of $\#c$ and $\#d$}
    \label{fig:alignments}
\end{figure}

\begin{figure*}[t]
    \centering
    \begin{minipage}[b]{10cm}
        \center
        \includegraphics[width=\textwidth]{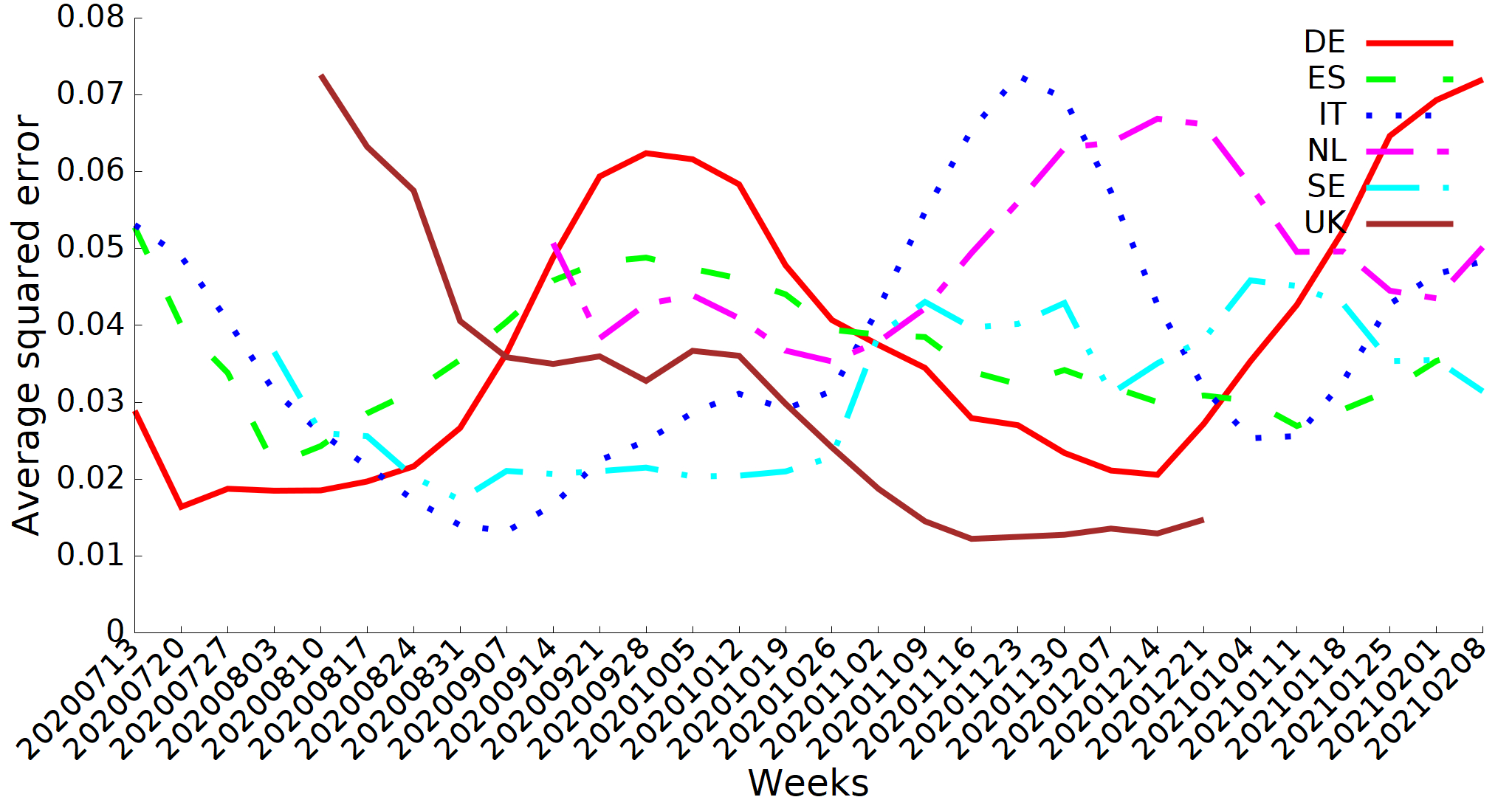}
        \subcaption{Running average of weekly errors considering regional data}
        \label{fig:errorsPerRegion}
    \end{minipage}
    \hspace{0.2cm}
    \begin{minipage}[b]{10cm}
        \center \includegraphics[width=\textwidth]{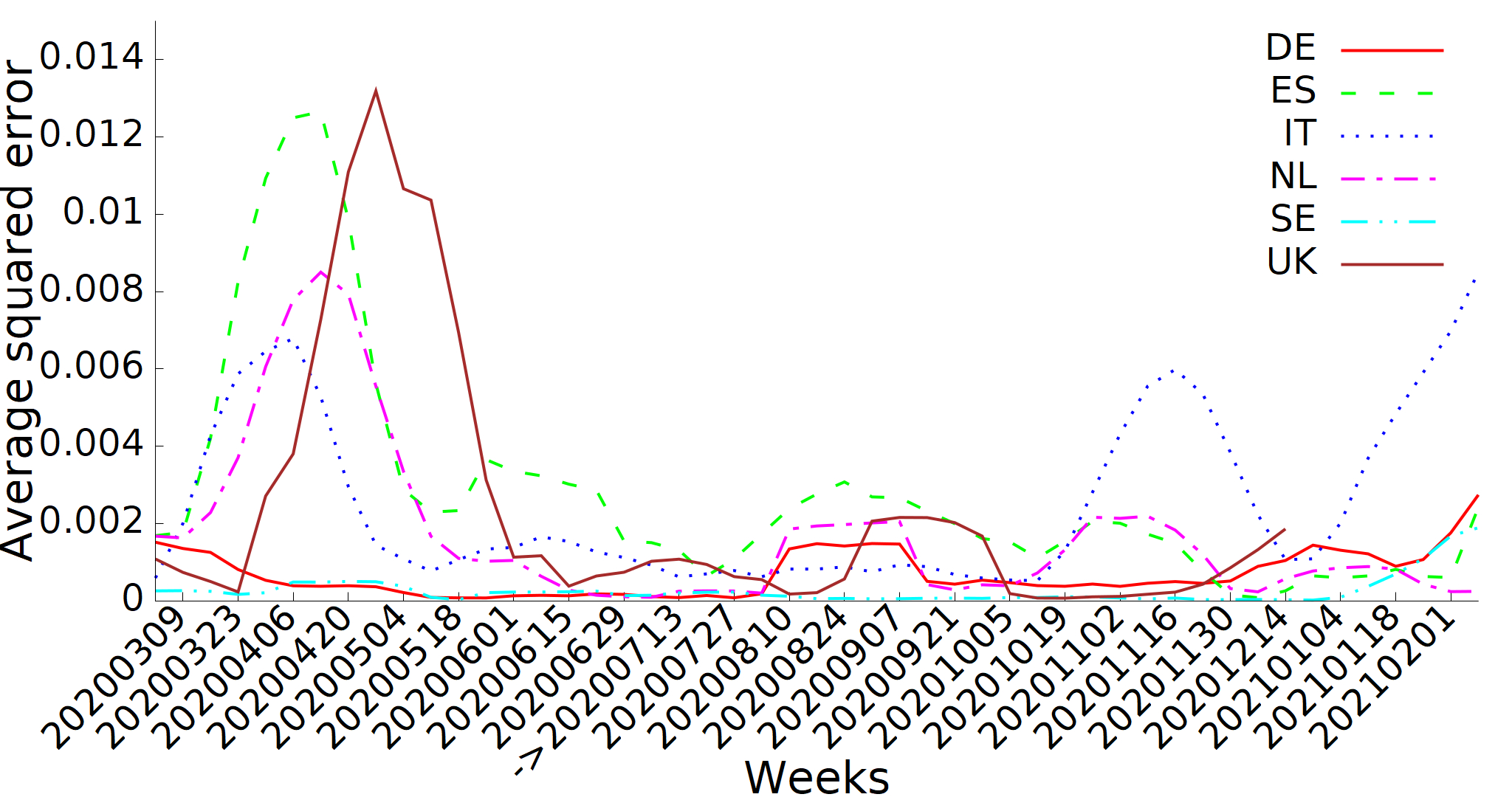}
        \subcaption{Running average of weekly errors not considering regional data}
        \label{fig:errorsPerCountry}
    \end{minipage}
 \caption{Discordancy analysis of COVID-19 data aligning $\#c$ and $\#d$ with a lag of three weeks}
    \label{fig:ConcordancyAnalysis}
\end{figure*}

The line charts in Fig.~\ref{fig:alignments} and~\ref{fig:ConcordancyAnalysis} plot in the vertical axis discordance (i.e., sum of squared errors in our case) divided by the number of variables (i.e., average squared error per variable), to make them more comparable, since depending on the number of regions, different countries generate more or less variables (see Table~\ref{table:covidequations}).
Firstly, Fig.~\ref{fig:alignments} varies the lag between reported cases and deaths, for values from one to eight weeks.
We can see that the average discordance is minimized in all cases between two and four weeks (three in the average).
Thus, in the other charts, we used a lag of three weeks between cases and deaths, which minimizes the average squared error of all six countries.

Fig.~\ref{fig:errorsPerRegion} shows the evolution of the discordance since 2020W26 until the last week being considered. We can appreciate that during the first weeks reporting regional data, countries adjusted and eventually improved their COVID-19 surveillance mechanisms. However, all of them except UK are too sensitive to the increase of cases and their concordancy with real deaths is clearly affected by the arrival of the second wave after summer and the third one at the end of the year (we can clearly appreciate the two peaks in the five other countries).  Unfortunately the UK data reporting to Eurostats stopped on December 31, 2020 due to Brexit, so we cannot see from the Eurostats data whether the UK's error remained low during the rest of the infection wave in early 2021.

Finally, Fig.~\ref{fig:errorsPerCountry} shows the clearer but less computationally challenging evolution of discordancy without considering regionally reported data (the small pointer in the horizontal axis indicates the horizontal coincidence of both charts). We can see a clear peak of discordancy during the first wave that eventually improves, just to be more or less affected again by the second and third waves depending on the country. As a derived calculation of the observed discordancy, we can take the Pearson correlation coefficient between those and the running average of the number of cases (i.e., DE: 36\%; ES: 80\%; IT: 50\%; NL: 73\%; SE:	23\%; UK: 93\%). Thus, we can observe that in the case of ES, NL and UK, more than 70\% of variation in the discordancy can be explained by changes in the number of cases.

Without \sysname, this study had been simply impossible, because manually generating the corresponding SQL and Python code had been too difficult for a human being. Instead, we generated them automatically and with all the correctness guarantees from a relatively simple algebraic sequence of operations.
Alternatively, we could have also somehow easily defined the corresponding static assertions over JHU data (indicating that they must exactly coincide) and count how many times they were violated.
Nevertheless, given the poor quality of the source, all we had gotten is a flat line indicating that all values are simply inconsistent in each and every week with regard to those from EuroStats.
Thus, our novel discord measuring mechanism could be easily further integrated in a more complex truth discovery iterative method to obtain the trustworthiness weight of each source.
We consider this to be a potential direction for future work.

\section{Related work}\label{sec:relatedwork}

Out of the many papers identified in a recent systematic literature review on Information Fusion techniques~\cite{DBLP:journals/inffus/GutierrezRPLS22}, just one~\cite{DBLP:series/isrl/TreBMB13} was found making use of consistency to evaluate the quality of input data. However, it only uses the difference between pairs of numbers as the basis for this evaluation. Moreover, this is not actually done with the purpose of evaluating the overall quality of the sources, but rather the coincidence between instances with matching purpose.

There is existing work such as~\cite{dyreson.incomplete}, \cite{DBLP:conf/icde/BaikousiRV11} and
\cite{DBLP:journals/dss/GolfarelliT14} on measuring differences in the descriptive multidimensional data and their structure.
Instead, we aim at evaluating the reliability of the numerical indicators, given some required alignment declaration (e.g., aggregation or scale correction). At this point, it is important to highlight that, even if some work like \cite{DBLP:journals/jdwm/OukidBBB16} proposes to treat textual data as indicators (allowing to aggregate them too), we restricted ourselves to numerical measures, whose discrepancies cannot be evaluated using string similarity metrics. The latter would instead be part of a preliminary step of entity matching over dimensional descriptors.

Thus, the problems defined above are related to Consistent Query Answering (CQA)~\cite{chomicki.repair}, which tries to identify the subset of a database that fulfills some integrity constraints, and corresponds to the problem of identifying certain answers under open world assumption~\cite{baader.handbook}.
In CQA, distance between two database instances is captured by symmetric difference of tuples. However, in our case, the effects of an alignment are not only reflected in the presence/absence of a tuple, but also in the values it contains.
This leads to the much closer Database Fix Problem (DFP)~\cite{bertossi.complexity,bohannon.cost-based}, which aims at determining the existence of a fix at a bounded distance measuring variations in the numerical values.
Both DFP as well as CQA become undecidable in the presence of aggregation constraints.
Nonetheless, these have been used to drive deduplication \cite{chaudhuri.levaraging}. However, our case is different since we are not questioning correspondences between entities to create aggregation groups, but instead trying to quantify their (in)consistency in the presence of complex transformations. 

Another known result in the area of DFP is that prioritizing the repairs by considering preferences or priorities (like the data sources in our case) just increases complexity.
An already explored idea is the use of where-provenance in the justification of the new value \cite{geerts.llunatic}, but with pure direct value imputation (without any data transformation).
In contrast, we consider that there is not any master data, but multiple contradictory sources, and we allow aggregates, while \cite{geerts.llunatic} only uses equalities (neither aggregation nor any real arithmetic) between master and target DBs.
Related to this, in the area of machine learning, we have \cite{DBLP:journals/pvldb/SchleichGZS21}, which aims at finding counterfactual explanations for a prediction. The purpose in this case is not to directly change the data, but to tell the user what should have been done to get a different prediction. Like in our case, this is treated as an optimization problem.

From the perspective of incompleteness in multidimensional databases, attention is paid to missing values in the measures.~\cite{DBLP:journals/tkde/PalpanasKM05} presents an approach to maximize entropy, and~\cite{bimonte.linearProgramming} a linear program\-ming-based framework to impute missing values  under constraints generated by sibling data at the same aggregation level, and parent data in higher levels.
We could consider the later a special case of our approach, with a single data source and predefined alignment.

In the context of data fusion, although our purpose is not to merge records, but only evaluate how far they are from one another, we can still position our work according to the characteristics in \cite{DBLP:journals/jiis/CanalleSL21} as follows:
\begin{description}
\item[Data types] we consider are continuous (a.k.a. quantitative);
\item[Heterogeneity] of data types is not considered in our work, as we focus on dealing with purely numerical attributes;
\item[Single-truths] (i.e., each attribute has a single value in reality) is assumed;
\item[Source quality] is the focus of our work by measuring their discrepancies;
\item[Copying between sources] is not considered (i.e., instead, we consider they provide their values independently);
\item[Object relationships] can be naturally expressed in the form of symbolic expressions sharing variables;
\item[Object popularity and difficulty] can be considered in the cost functions.
\end{description}

The setting we have described in the context of data fusion and truth discovery shares also many motivations with previous work on provenance.
The semiring provenance model~\cite{green.provenance} is particularly related, explaining why why-provenance~\cite{buneman01icdt} is not enough (e.g., in the case of alternative sources for the same data) and we need how-provenance to really understand how different inputs contribute to the result.
They propose the use of polynomials to capture such kind of provenance.
Further, \cite{amsterdamer.aggregates} extended the semiring provenance model to aggregations by mixing together annotations and values, but the fine-grained provenance information may become  large.
However, to the best of our knowledge no practical implementations of it exist.  In contrast, our approach does not have row-level annotations recording the conditions that make a row present in the result, limits aggregation to value fields, and considers only sum and averaging forms of aggregation, but we have provided practical implementations of this more limited model.
As noted earlier, our s-tables are similar in some respects to c-tables studied in incomplete information databases~\cite{imielinski84jacm}. Our data model and queries are more restricted in some ways, due to the restriction to finite maps, and because we do not allow for conditions affecting the presence of entire rows, but our approach supports aggregation, which is critical for our application area and which was not handled in the original work on c-tables.
Similarly, \emph{attribute-level uncertainty bounds} (AU-BDs) in \cite{DBLP:conf/sigmod/FengGHK21} allow to annotate values with intervals (and a selected guess) encoding a set of possible worlds. However, this does not help to find the most likely world (beyond the provided selected guess).

There have been implementations of semiring provenance or c-tables in systems such as Orchestra~\cite{DBLP:journals/sigmod/IvesGKTTTJP08}, ProQL~\cite{DBLP:conf/sigmod/KarvounarakisIT10}, ProvSQL~\cite{senellart.provsql}, and Mimir~\cite{DBLP:journals/corr/NandiYKGFLG16}.  In Orchestra provenance annotations were used for update propagation in a distributed data integration setting.  ProQL and ProvSQL implement the semiring model but do not allow for symbolic expressions in data or support aggregation. Mimir is a system for querying uncertain and probabilistic data based on c-tables; however, in Mimir symbolic expressions and conditions are not actually materialized as results, instead the system fills in their values with guesses in order to make queries executable on standard RDBMSs. Thus, Mimir's approach to c-tables would not suffice for our needs since we need to generate the symbolic constraints for the QP solver to solve.  On the other hand, our work shows how some of the symbolic computation involved in c-tables can be implemented in-database and it would be interesting to consider whether this approach can be extended to support full c-tables in future work.

We have reduced the concordancy evaluation problem to quadratic programming, a well-studied optimization problem. Solvers such as OSQP~\cite{osqp} can handle systems with thousands of equations and variables.  However, we have not made full use of the power of linear/quadratic programming.  For example, we could impose additional inequalities on unknowns, to ensure that certain error or null values have to be positive or within some range.  Likewise, we have defined the cost function in one specific way but quadratic programming permits many other cost functions, like using different weights for each variable or with additional linear cost factors. As suggested at the end of the last section, it may be worthwhile to combine \sysname with other truth discovery techniques to simultaneously estimate the weights needed for the cost function and the guessed true values of the uncertain data.

As noted in Section~\ref{sec:example}, we have focused on the problem of evaluating concord/discord among data sources and not on using the different data sources to estimate the actual values being measured (like \cite{DBLP:journals/inffus/MotroA06}).  It would be interesting to extend our framework by augmenting s-tables and queries with a probabilistic interpretation, so that the optimal solution found by quadratic programming produces statistically meaningful consensus values (similarly to~\cite{mayfield.eracer}).

\section{Conclusions}\label{sec:conclusions}

In most data integration and cleaning scenarios, it is assumed that there is some source of ground truth available (i.e., master data or user input).  However, in many realistic data fusion settings, such as epidemiological surveillance, ground truth is not obtainable and we need to integrate discordant data sources with different levels of trustworthiness, completeness and self-consistency.  In such scenarios, we still would like to be able to flexibly and efficiently measure how close the observed data is to our idealized expectations.  Thus, we proposed definitions of \emph{concordance} and \emph{discordance} capturing respectively when data sources we wish to fuse are compatible with one another, and measuring how far away the observed data are from being concordant.  Consequently, we can compare measurements over time to understand whether the different sources are becoming more or less consistent with one another.  We showed how to flexibly and efficiently solve this problem by extending multidimensional relational queries with symbolic evaluation, and gave two relational implementations of this approach reducing it to linear programming or quadratic programming problems that can be solved by an off-the-shelf library.  We explored the performance of the two approaches via microbenchmarks to assess the scalability in data size and number of variables,  illustrated the value of this information using a case study based on COVID-19 case and death reporting from 2020-2021, and found that the error calculated for six European countries at different times correlates with intuition.

Different cost functions, alternatives in the management of NULL values and zeros, as well as alternatives for variables generation need to be carefully analyzed. However, the most appropriate one will be case-dependant and so we plan to do this separately.
Moreover, our approach to symbolic evaluation of multidimensional queries appears to have further applications which we plan to explore next, such  supporting other forms of uncertainty expressible as linear constraints, and adapting our approach to produce statistically meaningful estimates of the consensus values.

\section*{Acknowledgments}
The work of A. Abelló has been done under project PID2020-117191RB-I00 funded by MCIN/AEI/ 10.13039 /501100011033.
The work of J. Cheney was supported by ERC Consolidator Grant Skye (grant number 682315).

\bibliographystyle{plain}
\bibliography{references}

\begin{thebibliography}{10}

\bibitem{AlbertoJames.DOLAP}
Alberto Abell{\'{o}} and James Cheney.
\newblock Measuring discord among multidimensional data sources.
\newblock In {\em DOLAP}, pages 96--100. {ACM}, 2022.

\bibitem{abello.encyclopedia}
Alberto Abell{\'{o}} and Oscar Romero.
\newblock Online analytical processing.
\newblock In Ling Liu and M.~Tamer {\"{O}}zsu, editors, {\em Encyclopedia of
  Database Systems, 2nd Edition}, page 2558–2563. Springer, 2018.

\bibitem{abiteboul.foundations}
Serge Abiteboul, Richard Hull, and Victor Vianu.
\newblock {\em Foundations of Databases}.
\newblock Addison-Wesley, 1995.

\bibitem{amsterdamer.aggregates}
Yael Amsterdamer, Daniel Deutch, and Val Tannen.
\newblock {Provenance for aggregate queries}.
\newblock In {\em PODS}, pages 153--164. {ACM}, 2011.

\bibitem{baader.handbook}
Franz Baader, Diego Calvanese, Deborah~L. McGuinness, Daniele Nardi, and
  Peter~F. Patel{-}Schneider, editors.
\newblock {\em {The Description Logic Handbook: Theory, Implementation, and
  Applications}}.
\newblock Cambridge University Press, 2003.

\bibitem{DBLP:conf/icde/BaikousiRV11}
Eftychia Baikousi, Georgios Rogkakos, and Panos Vassiliadis.
\newblock Similarity measures for multidimensional data.
\newblock In {\em ICDE}, pages 171--182. {IEEE}, 2011.

\bibitem{DBLP:conf/lpnmr/Bertossi19}
Leopoldo~E. Bertossi.
\newblock Repair-based degrees of database inconsistency.
\newblock In {\em {LPNMR}}, volume 11481 of {\em LNCS}, pages 195--209.
  Springer, 2019.

\bibitem{bertossi.complexity}
Leopoldo~E. Bertossi, Loreto Bravo, Enrico Franconi, and Andrei Lopatenko.
\newblock {Complexity and Approximation of Fixing Numerical Attributes in
  Databases Under Integrity Constraints}.
\newblock In {\em DBPL}, volume 3774 of {\em LNCS}, pages 262--278. Springer,
  2005.

\bibitem{bimonte.linearProgramming}
Sandro Bimonte, Libo Ren, and Nestor Koueya.
\newblock A linear programming-based framework for handling missing data in
  multi-granular data warehouses.
\newblock {\em Data Knowl. Eng.}, 128:101832, 2020.

\bibitem{DBLP:journals/csur/BleiholderN08}
Jens Bleiholder and Felix Naumann.
\newblock Data fusion.
\newblock {\em {ACM} Comput. Surv.}, 41(1):1:1--1:41, 2008.

\bibitem{DBLP:journals/tsmc/Bloch96}
Isabelle Bloch.
\newblock Information combination operators for data fusion: a comparative
  review with classification.
\newblock {\em {IEEE} Trans. Syst. Man Cybern. Part {A}}, 26(1):52--67, 1996.

\bibitem{bohannon.cost-based}
Philip Bohannon, Michael Flaster, Wenfei Fan, and Rajeev Rastogi.
\newblock {A Cost-Based Model and Effective Heuristic for Repairing Constraints
  by Value Modification}.
\newblock In {\em {SIGMOD}}, pages 143--154. {ACM}, 2005.

\bibitem{buneman01icdt}
Peter Buneman, Sanjeev Khanna, and Wang-Chiew Tan.
\newblock {Why and Where: {A} Characterization of Data Provenance}.
\newblock In {\em {ICDT}}, volume 1973 of {\em LNCS}, pages 316--330. Springer,
  2001.

\bibitem{DBLP:journals/jiis/CanalleSL21}
Gabrielle~Karine Canalle, Ana~Carolina Salgado, and Bernadette~Farias
  L{\'{o}}scio.
\newblock A survey on data fusion: what for? in what form? what is next?
\newblock {\em J. Intell. Inf. Syst.}, 57(1):25--50, 2021.

\bibitem{chaudhuri.levaraging}
Surajit Chaudhuri, Anish~Das Sarma, Venkatesh Ganti, and Raghav Kaushik.
\newblock {Leveraging aggregate constraints for deduplication}.
\newblock In {\em SIGMOD}, pages 437--448. {ACM}, 2007.

\bibitem{chomicki.repair}
Jan Chomicki.
\newblock {Consistent Query Answering: Five Easy Pieces}.
\newblock In {\em ICDT}, volume 4353 of {\em LNCS}, pages 1--17. Springer,
  2007.

\bibitem{jhu}
Ensheng Dong, Hongru Du, and Lauren Gardner.
\newblock An interactive web-based dashboard to track {COVID-19} in real time.
\newblock {\em The Lancet}, 20:533--534, 2020.

\bibitem{dyreson.incomplete}
Curtis~E. Dyreson, Torben~Bach Pedersen, and Christian~S. Jensen.
\newblock Incomplete information in multidimensional databases.
\newblock In Maurizio Rafanelli, editor, {\em Multidimensional Databases:
  Problems and Solutions}, pages 282--309. Idea Group, 2003.

\bibitem{CaseStudy}
Maria Esteva, Weijia Xu, Nevan Simone, Amit Gupta, and Moriba Jah.
\newblock Modeling data curation to scientific inquiry: A case study for
  multimodal data integration.
\newblock In {\em JCDL}, page 235–242. ACM, 2020.

\bibitem{fan.relative}
Wenfei Fan and Floris Geerts.
\newblock Relative information completeness.
\newblock In {\em PODS}, pages 97--106. {ACM}, 2009.

\bibitem{DBLP:conf/sigmod/FengGHK21}
Su~Feng, Boris Glavic, Aaron Huber, and Oliver~A. Kennedy.
\newblock Efficient uncertainty tracking for complex queries with
  attribute-level bounds.
\newblock In {\em SIGMOD}, pages 528--540. {ACM}, 2021.

\bibitem{DBLP:journals/corr/abs-2208-01644}
Marek Gagolewski.
\newblock Data fusion: Theory, methods, and applications.
\newblock {\em CoRR}, abs/2208.01644, 2022.

\bibitem{geerts.llunatic}
Floris Geerts, Giansalvatore Mecca, Paolo Papotti, and Donatello Santoro.
\newblock {The {LLUNATIC} Data-Cleaning Framework}.
\newblock {\em {PVLDB}}, 6(9):625--636, 2013.

\bibitem{DBLP:journals/dss/GolfarelliT14}
Matteo Golfarelli and Elisa Turricchia.
\newblock A characterization of hierarchical computable distance functions for
  data warehouse systems.
\newblock {\em Decis. Support Syst.}, 62:144--157, 2014.

\bibitem{green.provenance}
Todd~J. Green, Gregory Karvounarakis, and Val Tannen.
\newblock {Provenance semirings}.
\newblock In {\em PODS}, pages 31--40. {ACM}, 2007.

\bibitem{DBLP:journals/inffus/GutierrezRPLS22}
Ra{\'{u}}l Guti{\'{e}}rrez, V{\'{\i}}ctor Ramp{\'{e}}rez, Horacio Paggi,
  Juan~Alfonso Lara, and Javier Soriano.
\newblock On the use of information fusion techniques to improve information
  quality: Taxonomy, opportunities and challenges.
\newblock {\em Inf. Fusion}, 78:102--137, 2022.

\bibitem{imielinski84jacm}
Tomasz Imielinski and Witold {Lipski Jr.}
\newblock Incomplete information in relational databases.
\newblock {\em J. {ACM}}, 31(4):761--791, 1984.

\bibitem{DBLP:journals/sigmod/IvesGKTTTJP08}
Zachary~G. Ives, Todd~J. Green, Grigoris Karvounarakis, Nicholas~E. Taylor, Val
  Tannen, Partha~Pratim Talukdar, Marie Jacob, and Fernando C.~N. Pereira.
\newblock The {ORCHESTRA} collaborative data sharing system.
\newblock {\em {SIGMOD} Rec.}, 37(3):26--32, 2008.

\bibitem{DBLP:conf/sigmod/KarvounarakisIT10}
Grigoris Karvounarakis, Zachary~G. Ives, and Val Tannen.
\newblock Querying data provenance.
\newblock In {\em SIGMOD}, pages 951--962. {ACM}, 2010.

\bibitem{DBLP:journals/sigkdd/LiGMLSZFH15}
Yaliang Li, Jing Gao, Chuishi Meng, Qi~Li, Lu~Su, Bo~Zhao, Wei Fan, and Jiawei
  Han.
\newblock A survey on truth discovery.
\newblock {\em {SIGKDD} Explor.}, 17(2):1--16, 2015.

\bibitem{DBLP:journals/lmcs/LivshitsK22}
Ester Livshits and Benny Kimelfeld.
\newblock The {Shapley} value of inconsistency measures for functional
  dependencies.
\newblock {\em Log. Methods Comput. Sci.}, 18(2), 2022.

\bibitem{mayfield.eracer}
Chris Mayfield, Jennifer Neville, and Sunil Prabhakar.
\newblock {{ERACER:} a database approach for statistical inference and data
  cleaning}.
\newblock In {\em SIGMOD}, pages 75--86. {ACM}, 2010.

\bibitem{DBLP:journals/inffus/MotroA06}
Amihai Motro and Philipp Anokhin.
\newblock Fusionplex: resolution of data inconsistencies in the integration of
  heterogeneous information sources.
\newblock {\em Inf. Fusion}, 7(2):176--196, 2006.

\bibitem{DBLP:journals/csur/MountantonakisT19}
Michalis Mountantonakis and Yannis Tzitzikas.
\newblock Large-scale semantic integration of linked data: {A} survey.
\newblock {\em {ACM} Comput. Surv.}, 52(5):103:1--103:40, 2019.

\bibitem{DBLP:journals/corr/NandiYKGFLG16}
Arindam Nandi, Ying Yang, Oliver Kennedy, Boris Glavic, Ronny Fehling, Zhen~Hua
  Liu, and Dieter Gawlick.
\newblock Mimir: Bringing {CTables} into practice.
\newblock {\em CoRR}, abs/1601.00073, 2016.

\bibitem{DBLP:journals/jdwm/OukidBBB16}
Lamia Oukid, Omar Boussaid, Nadjia Benblidia, and Fadila Bentayeb.
\newblock {TLabel}: {A} new {OLAP} aggregation operator in text cubes.
\newblock {\em Int. {J.} Data Warehousing and Mining}, 12(4):54--74, 2016.

\bibitem{DBLP:journals/tkde/PalpanasKM05}
Themis Palpanas, Nick Koudas, and Alberto~O. Mendelzon.
\newblock Using datacube aggregates for approximate querying and deviation
  detection.
\newblock {\em {IEEE} Trans. Knowl. Data Eng.}, 17(11):1465--1477, 2005.

\bibitem{DBLP:conf/ecai/ParisiG20}
Francesco Parisi and John Grant.
\newblock On measuring inconsistency in relational databases with denial
  constraints.
\newblock In {\em ECAI}, volume 325, pages 857--864. {IOS} Press, 2020.

\bibitem{pierce02tapl}
Benjamin~C. Pierce.
\newblock {\em Types and programming languages}.
\newblock {MIT} Press, 2002.

\bibitem{DBLP:journals/pvldb/SchleichGZS21}
Maximilian Schleich, Zixuan Geng, Yihong Zhang, and Dan Suciu.
\newblock Geco: Quality counterfactual explanations in real time.
\newblock {\em PVLDB}, 14(9):1681--1693, 2021.

\bibitem{senellart.provsql}
Pierre Senellart, Louis Jachiet, Silviu Maniu, and Yann Ramusat.
\newblock {ProvSQL: Provenance and Probability Management in PostgreSQL}.
\newblock {\em {PVLDB}}, 11(12):2034--2037, 2018.

\bibitem{spielman01stoc}
Daniel Spielman and Shang-Hua Teng.
\newblock Smoothed analysis of algorithms: Why the simplex algorithm usually
  takes polynomial time.
\newblock In {\em STOC}, page 296–305. ACM, 2001.

\bibitem{osqp}
B.~Stellato, G.~Banjac, P.~Goulart, A.~Bemporad, and S.~Boyd.
\newblock {OSQP}: an operator splitting solver for quadratic programs.
\newblock {\em Mathematical Programming Computation}, 12(4):637--672, 2020.

\bibitem{DBLP:series/isrl/TreBMB13}
Guy~De Tr{\'{e}}, Daan~Van Britsom, Tom Matth{\'{e}}, and Antoon Bronselaer.
\newblock Automated cleansing of {POI} databases.
\newblock In {\em Quality Issues in the Management of Web Information},
  volume~50 of {\em Intelligent Systems Reference Library}, pages 55--91.
  Springer, 2013.

\bibitem{vavasis01optimization}
Stephen~A. Vavasis.
\newblock Complexity theory: Quadratic programming.
\newblock In {\em Encyclopedia of Optimization}, pages 304--307. Springer,
  2001.

\bibitem{DBLP:journals/fcsc/YuLDF16}
Minghe Yu, Guoliang Li, Dong Deng, and Jianhua Feng.
\newblock String similarity search and join: a survey.
\newblock {\em Frontiers Comput. Sci.}, 10(3):399--417, 2016.

\end{thebibliography}

\appendix
\section{Proofs}\label{appendix.correctness}

We state the desired correctness property for the type system formally as follows:
\begin{theorem}
Suppose $\Sigma \vdash q : K \tri V$ is derivable using the rules in Fig.~\ref{fig:wf}.  Then $q$ denotes a function from $Inst(\Sigma)$ to relations $R : K \tri V$.
\end{theorem}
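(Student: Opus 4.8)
The plan is to prove the statement by induction on the derivation of the well-formedness judgment $\Sigma \vdash q : K \tri V$, which, since the rules in Fig.~\ref{fig:wf} are syntax-directed, amounts to structural induction on $q$. For each typing rule I would assume, as induction hypothesis, that every immediate subquery denotes a total function carrying each $I \in Inst(\Sigma)$ to a relation of its stated signature, and then verify that the corresponding semantic clause of Fig.~\ref{fig:evaluation}, applied to such results, again yields a total function producing a relation $R : K \tri V$. Concretely, in each case I must check three things: that the result is always defined (totality), that its attribute set is exactly $K \cup V$ with key-fields valued in $D$ and value-fields valued in $\RR$, and --- the crux --- that it satisfies the functional dependency $K \to V$.

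The base case is immediate: a relation name $R$ denotes $I(R)$, which matches $\Sigma(R) = K \tri V$ by the definition of an instance matching a schema. The cases for selection, projection-away, renaming, and difference are routine: each clause returns either a subset of the argument relation (selection, difference), the argument with some value-columns removed (projection-away), or a bijective relabelling (renaming). Since none of these introduces new tuples sharing a key nor alters the values on key-fields (renaming aside, where the dependency transports along the bijection), the FD $K \to V$ is inherited directly from the induction hypothesis; totality holds because the defining comprehensions are defined on every ground input, using that selection conditions $c$ mention only key-fields and hence evaluate to a Boolean on each ground tuple.

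The informative cases are join, discriminated union, derivation, and aggregation, and here the syntactic side-conditions in the typing rules are exactly what make the dependency go through. For $q_1 \Join q_2$, two result tuples that agree on $K_1 \cup K_2$ in particular agree on $K_1$, hence on $V_1$ by $q_1$'s FD, and likewise agree on $K_2$, hence on $V_2$ by $q_2$'s FD; the hypothesis $V_1 \cap V_2 = \emptyset$, together with the convention that key- and value-names are kept disjoint, ensures the result signature $K_1 \cup K_2 \tri V_1, V_2$ is well-formed and that the two value-assignments cannot conflict. For $q_1 \uplus_B q_2$ the fresh discriminant $B$ records the originating operand, so any two result tuples agreeing on $K,B$ come from the same side and agree on $V$ by that side's FD. For derivation $\varepsilon_{A := e}(q)$ the key-fields are untouched and the new value-field $A$ is a function of the existing value-fields, so equal keys still force equal values, while well-formedness of $e$ over $K,V$ guarantees it evaluates to a real. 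For aggregation $\gamma_{K';V'}(q)$ the result is a finite map essentially by construction: the comprehension emits exactly one tuple per distinct $K'$-value, whose $V'$-entries are the corresponding $\mathrm{sum}$ over the group, so $K' \to V'$ holds trivially and each aggregated value lies in $\RR$.

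I expect the main obstacle to be careful bookkeeping rather than mathematical depth: articulating, for the join and discriminated-union cases, precisely why the disjointness condition $V_1 \cap V_2 = \emptyset$ and the freshness of $B$ are both necessary and sufficient to preserve the functional dependency, and confirming that the attribute sets computed in each conclusion of Fig.~\ref{fig:wf} coincide with those produced by the semantics of Fig.~\ref{fig:evaluation}. Once the invariant ``$K \to V$ with the stated attribute typing'' is phrased crisply, each inductive step reduces to a short verification, and totality of the denotation is immediate, since the only deliberately partial operation, fusion $\sqcup$, is excluded from the query syntax $q$.
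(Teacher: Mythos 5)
Your proof is correct and follows essentially the same route as the paper's: induction on the well-formedness derivation, verifying in each case that the semantic clause preserves the finite-map property, with the discriminated union, aggregation, and projection-away cases singled out as the ones where the typing side-conditions do the work. Your treatment is in fact somewhat more detailed than the paper's sketch (which does not spell out the join and derivation cases), but there is no difference in substance.
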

\begin{proof}
The proof is by induction on the structure of derivations of $\Sigma \vdash q : K \tri V$.  Most cases are standard or straightforward.  The interesting cases are those where constraints are necessary to  preserve the finite map property: for example, projection-away and discriminated union.  We sketch the reasoning in each case.

For projection-away, we may only discard value fields, so the key fields in the result are the same as those in the input relation resulting from evaluating the subquery.  Hence, the finite map property is preserved.

For discriminated union, it is clear from the definition of the semantics that the keys of the result are a tagged disjoint union of the keys of the two input relations, which are both finite maps.  Hence the result is a finite map satisfying the FD $K,B \to V$.

For aggregation, the result may drop both key and value fields, but the value fields will be aggregated (summed) according to whatever keys remain, so the result will be a finite map $K' \tri V'$ by construction.
\end{proof}
Note that some of the constraints on queries are not necessary to ensure well-formed queries produce valid finite maps, but are only needed to ensure that symbolic evaluation is correct on s-tables.  For example, if selection conditions were allowed on value fields (that might be symbolic), then the presence of a tuple with symbolic tested fields in the output would depend on the unknown variable values.  This conditional membership is not supported in s-tables, but was considered in c-tables in which the presence of a tuple in the output can be constrained by a formula.  While this would be an interesting extension, we do not have a pressing need for this capability in OLAP tools whereas it would significantly complicate the formalism and  implementation.

Fig.~\ref{fig:spec-wf} presents additional well-formedness rules for alignment specifications.  The judgment $\Sigma \vdash \Delta : \Omega$ says that in schema $\Sigma$, the definition of views $\Delta$ is well-formed and produces a result matching schema $\Omega$; that is, the new tables defined in $\Delta$ are as specified in $\Omega$.
This well-formedness judgment satisfies the following correctness property:
\begin{theorem}
Suppose $\Spec =[\Sigma,\Omega,\Delta]$ is an alignment specification and $\Sigma \vdash \Delta : \Omega$ holds.  Then we may interpret $\Delta$ as a partial function from instances of $\Sigma$ to instances of $\Omega$.
\end{theorem}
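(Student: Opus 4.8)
The plan is to prove this by induction on the derivation of $\Sigma \vdash \Delta : \Omega$ using the rules of Fig.~\ref{fig:spec-wf}, leaning on the previously established fact that each well-formed query $\Sigma \vdash q : K \tri V$ denotes a (total) function from $Inst(\Sigma)$ to relations of type $K \tri V$. The only source of partiality is the fusion operator $\sqcup$, which is undefined exactly when the union of its arguments violates the functional dependency $K \to V$; so the proof amounts to threading this partiality through the sequential structure of $\Delta$.

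For the base case, the derivation is $\Sigma \vdash \cdot : \cdot$, and the empty specification denotes the (total) function sending every instance of $\Sigma$ to the unique instance of the empty schema. For the inductive step, the derivation ends with the second rule, so $\Delta = (S := q_1 \sqcup \cdots \sqcup q_n,\, \Delta')$ and $\Omega = (S{:}K \tri V, \Omega')$, with premises $\Sigma \vdash q_j : K \tri V$ for each $j$ and $\Sigma, S{:}K\tri V \vdash \Delta' : \Omega'$. Given $I \in Inst(\Sigma)$, I would first evaluate each $q_j(I)$, which is a well-defined relation of type $K \tri V$ by the query-denotation theorem. Then I would attempt the fusion $R = q_1(I) \sqcup \cdots \sqcup q_n(I)$: if it is undefined, $\Delta$ is undefined on $I$; otherwise $R$ matches $S{:}K\tri V$, so $(I, S := R)$ is an instance of the extended schema $\Sigma, S{:}K\tri V$, to which the induction hypothesis applies, yielding (when defined) an instance $J'$ of $\Omega'$. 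The output is then $(S := R,\, J')$, an instance of $\Omega$. Since this is a composition of partial functions, $\Delta$ as a whole denotes a partial function from $Inst(\Sigma)$ to $Inst(\Omega)$.

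The one point needing care is that the iterated fusion $q_1(I) \sqcup \cdots \sqcup q_n(I)$ is well-defined independently of how the binary $\sqcup$ operations are associated. Here I would observe that whenever a fusion $R' \sqcup R''$ is defined its value is just $R' \cup R''$, and that any subrelation of a relation satisfying $K \to V$ again satisfies $K \to V$. Consequently every intermediate pairwise fusion along any bracketing is defined precisely when the total union $\bigcup_j q_j(I)$ satisfies $K \to V$, and in that case its value equals that union regardless of order; so the $n$-ary fusion is unambiguous. I expect this associativity/commutativity check to be the main (though still routine) obstacle, together with the bookkeeping needed to confirm that the schema extension in the recursive premise correctly captures the dependency of each view on the previously defined tables.
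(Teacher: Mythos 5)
Your proof is correct and follows essentially the same route as the paper's: the paper's (one-sentence) argument likewise interprets $\Delta$ by constructing the instance of $\Omega$ one relation at a time via the partial fusion operation, appealing to associativity and commutativity of fusion for well-definedness of the $n$-ary fusion. Your write-up simply supplies the details the paper leaves implicit, including the correct observation that any bracketing of $q_1(I) \sqcup \cdots \sqcup q_n(I)$ is defined exactly when the total union satisfies $K \to V$, in which case all bracketings yield that union.
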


The proof is straightforward; the interpretation of $\Delta$ attempts to construct the instance of $\Omega$ one relation at a time, using the (partial) fusion operation in each step.  Fusion is associative and commutative, so the result is well-defined.

\subsection{Linearity}

In order to ensure that constraints generated by symbolic evaluation and fusion/coalescing are valid linear programming problems, we need to restrict the s-tables/s-instances to include linear expressions only and also restrict queries so that derivation steps only involve linear combinations of attributes.  Subject to these restrictions, we can verify that for well-formed queries and alignment specifications, the resulting s-tables, s-instances, and constraints are linear as well.
\begin{theorem}\label{thm:linear-query}
Suppose $q$ is well-formed, satisfying $\Sigma \vdash q : K \tri V$, and all occurrences of derivations in $q$ are linear expressions. Suppose in addition $I$ is a well-formed linear s-instance.  Then, $q(I)$ is a well-formed linear s-table $q(I) : K \tri V$.
\end{theorem}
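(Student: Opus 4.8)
The plan is to argue by induction on the structure of the derivation of $\Sigma \vdash q : K \tri V$ (equivalently, on the structure of $q$). The earlier well-formedness theorem already guarantees that $q(I)$ is a valid finite map of type $K \tri V$, so the only new obligation is to show that every value-field entry of $q(I)$ is a linear (affine) expression in the variables $X$, i.e. of the form $a_0 + a_1 x_1 + \cdots + a_n x_n$. For the base case $q = R$ we have $q(I) = I(R)$, whose entries are linear because $I$ is assumed to be a linear s-instance. For the inductive step I would invoke the induction hypothesis on each immediate subquery to learn that its value entries are linear, and then check that the operator at the root preserves this property.

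The bulk of the cases are routine because the operator either copies value entries unchanged or forms finite sums of them. Concretely, selection $\sigma_c$, projection-away $\hat{\pi}_W$, renaming $\rho_{A\mapsto B}$, and difference $q\backslash q'$ all leave the surviving value entries syntactically unchanged, so linearity is immediate from the induction hypothesis; join $q_1\Join q_2$ and discriminated union $q_1\uplus_B q_2$ merely relocate value entries taken verbatim from one of the inputs (the new key-field introduced by $\uplus_B$ is a constant key value, not a value-field), so again nothing becomes nonlinear; and aggregation $\gamma_{K';V'}$ replaces each output value entry by the sum, over a finite group of input tuples, of linear entries. Since a finite sum of affine expressions in $X$ is again affine, these cases follow, and this is precisely where it matters that the only aggregation primitive is \textsc{sum}.

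The one case that carries genuine content, and the place where the linearity hypothesis is used, is derivation $\varepsilon_{A:=e}(q)$. Here the expression grammar for $e$ in principle allows multiplication $e\cdot e'$ and division $e/e'$, either of which could turn affine inputs into nonlinear (or non-polynomial) outputs; this is exactly why the theorem restricts attention to queries in which every derivation expression $e$ is linear, i.e. an affine combination $c_0 + \sum_i c_i B_i$ of value-fields $B_i \in V$ with real constants $c_i$. By the induction hypothesis each field $B_i$ of $q(I)$ holds an affine expression $\ell_i(X)$, so evaluating $e$ on a row substitutes these to give $c_0 + \sum_i c_i\, \ell_i(X)$, which after collecting terms is again affine in $X$ because the affine expressions form a vector space closed under scalar multiplication and addition. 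Hence the new field $A$ is linear and $\varepsilon_{A:=e}(q)(I)$ is a linear s-table. I expect this derivation case, together with the observation that it forces the linear restriction on $e$, to be the main and indeed the only nontrivial obstacle; every other operator preserves linearity for the structural reasons above.
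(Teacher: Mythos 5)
Your proof is correct and follows essentially the same route as the paper's: induction on the well-formedness derivation, with the routine cases dismissed structurally and the real content located in the derivation case (where the linearity restriction on $e$ is used, since affine expressions are closed under substitution of affine expressions) and the aggregation case (where closure of affine expressions under finite sums is what makes SUM safe). Your write-up is in fact somewhat more explicit than the paper's sketch, but there is no difference in approach.
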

\begin{proof}
The proof is by induction on the structure/well-formed\-ness derivation of $q$.
Many cases, e.g. the base case $q=R$, selection, projection, join,  renaming etc. are straightforward. The case of derivation follows because derivations are required to be linear expressions over fields.  The case of aggregation follows because the only kind of aggregation allowed is SUM, and adding together any number of linear expressions is still a linear expression.
\end{proof}

\begin{theorem}
Suppose $\Spec = [\Sigma,\Omega,\Delta]$ is well-formed, satisfying $\Sigma \vdash \Delta : \Omega$, and all occurrences of derivations in $\Delta$ are linear expressions. Suppose in addition $I$ is a well-formed linear s-instance matching $\Sigma$.  Then, the result of evaluating $\Delta$ on $I$, a constrained s-instance $(J,\phi)$, is linear, that is, $J$ is a linear s-instance matching $\Omega$, and $\phi$ is a conjunction of linear constraints.
\end{theorem}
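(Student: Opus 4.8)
The plan is to prove this by induction on the number of view definitions in $\Delta$, reusing Theorem~\ref{thm:linear-query} for the individual query evaluations and establishing a separate lemma that coalescing preserves linearity for both the resulting table and the generated constraints. Since $\Delta$ defines the tables $T_1,\ldots,T_m$ in order, and the queries defining $T_i$ may refer to $T_1,\ldots,T_{i-1}$, the induction hypothesis supplies that the partially-constructed instance $J_{<i} = [T_1,\ldots,T_{i-1}]$ is a linear s-instance, so that the combined instance $I,J_{<i}$ is linear and each query in the definition of $T_i$ falls under the hypotheses of Theorem~\ref{thm:linear-query}.

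First I would isolate the key new step: showing that if $(R,\phi)$ is a constrained s-table in which $R$ is linear and $\phi$ is a conjunction of linear equations (possibly empty, hence vacuously linear), then $\kappa_D(R,\phi) = (T,\phi\wedge\psi)$ is again linear. Inspecting the definition of coalescing, the value entries of $T$ are of two kinds. For tuples in $R^+$ they are fresh LLUN variables $L_{t[K\backslash D],B_i}$, which are linear by definition; for tuples in $R\backslash R^+$ they are the original value entries of $R$, which are linear by assumption. Hence $T$ is a linear s-table. The generated constraint $\psi = \bigwedge_{t\in R^+}\bigwedge_{B_i\in V} L_{t[K\backslash D],B_i}=t[B_i]$ is a conjunction of equations, each equating a single variable $L$ to a value entry $t[B_i]$ of $R$; since $t[B_i]$ is a linear expression, each such equation is linear, and conjoining $\psi$ with the already-linear $\phi$ keeps the constraint a conjunction of linear equations.

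With this lemma in hand, the inductive step is routine. For $T_i := q_1 \sqcup \cdots \sqcup q_k$, Theorem~\ref{thm:linear-query} together with the induction hypothesis gives that each $q_j$ evaluates to a linear s-table on $I,J_{<i}$. Fusion is implemented via discriminated union followed by coalescing, $R \sqcup S = \kappa_D(R \uplus_D S)$, applied repeatedly when $k>2$; discriminated union only adjoins a new key-field and leaves every value entry unchanged, so it preserves linearity, and then the coalescing lemma yields a linear s-table $T_i'$ together with a linear constraint $\phi_i$ (the repeated application for $k>2$ is sound because each intermediate fusion result is again linear, feeding the next coalescing step). Setting $J = [T_1 := T_1', \ldots, T_m := T_m']$ and $\phi = \bigwedge_{i=1}^m \phi_i$, we obtain a linear s-instance $J$ matching $\Omega$ (the matching of schemas following from the well-formedness judgment $\Sigma \vdash \Delta : \Omega$) together with a conjunction of linear constraints $\phi$, as required.

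I expect the main obstacle to be bookkeeping rather than mathematics: carefully threading the sequential dependency of the view definitions through the induction so that $I,J_{<i}$ is genuinely a linear s-instance at each step, and confirming that the LLUN variables introduced across different coalescing steps are fresh, so that neither linearity nor well-formedness is disturbed. The substantive content — that coalescing never produces a nonlinear value entry or a nonlinear equation — is precisely the coalescing lemma above, and it goes through because the only expressions coalescing ever introduces are fresh variables and equalities between such a variable and an existing (linear) value entry.
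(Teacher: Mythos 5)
Your proposal is correct and follows essentially the same route as the paper: induction over the sequence of view definitions, invoking Theorem~\ref{thm:linear-query} for each $q_i$, and inspecting the definition of coalescing to see that it only introduces fresh LLUN variables and equations between such a variable and an existing linear entry. The only difference is presentational — you spell out the coalescing inspection as an explicit lemma where the paper merely asserts it can be checked — which, if anything, makes the argument more complete.
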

\begin{proof}
The proof is by induction on the structure/well-formed\-ness derivation of $\Sigma \vdash \Delta : \Omega$.  The base case is immediate.  For the inductive step where $\Delta$ consists of a binding $T := q_1 \sqcup \cdots \sqcup q_n$ followed by another sequence $\Delta'$, note that in this case $\Omega$ must be of the form $T : K \tri V,\Omega'$ and the well-formedness relations $\Sigma \vdash q_i : K \tri V$ (for each $i$) and $\Sigma,T : K \tri V \vdash \Delta' : \Omega'$ must hold.
Using Theorem~\ref{thm:linear-query}, since each $q_i$ is well-formed in $\Sigma$ satisfying $\Sigma \vdash q_i : K \tri V$, we know that each $q_i$ preserves linearity, so $q_i(I)$ is linear for each $i$.  Moreover, the fusion of all of the $q_i$'s can be expressed as an $n$-way coalescing of $q_i(I)$ and we can inspect the definition of coalescing to check that its result $(T',\phi)$ is a linear s-instance and a conjunction of linear constraints.  Since $\Sigma,T : K \tri V \vdash \Delta' : \Omega'$ holds, we can apply  the induction hypothesis using the specification $[(\Sigma,T:K \tri V), \Delta',\Omega']$ and to $I$ extended with $T = T'$, since $I$ and $T'$ are both linear.
Thus, we can conclude that the final s-instance and constraint $(J,\phi')$ obtained are linear also.
\end{proof}

We assume from now on that the s-tables are linear and the queries only involve linear expressions in derivation steps.

\subsection{Correctness of symbolic evaluation}

We require that symbolic evaluation correctly abstracts ground evaluation, in the sense that evaluating symbolically and then filling in ground values yields the same results as evaluating on fully ground input tables.  We also expect that, as s-tables represent sets of ground tables, the symbolic evaluation of query operations over tables correctly reflects the possible sets of ground tables resulting from the query operation.

These properties are closely related, and similar to the standard correctness properties used for incomplete information representations such as c-tables~\cite{imielinski84jacm}.  However, there is an important difference: in the classical setting, the variables representing unknown values are ``scoped'' at the level of tables.  That is, if table $R$ and $S$ both mention some variable $x$, the occurrences in $R$ and respectively $S$ are \emph{local} to the respective table, and the value of $x$ in $R$ may not have anything to do with that in $S$.  In our case, however, we wish to reason about situations where unknown values propagate from source tables in $I$ through view definitions in $J$, and we definitely do \emph{not} want the variables appearing in different tables to be unrelated; instead, we want the variables to have \emph{global} scope.

To prove the main correctness property, we first need a lemma concerning the behavior of the individual operators.
\begin{lemma}\label{lem:commutes}
  Each s-table operation commutes with valuations:
  \begin{enumerate}
  \item $\sigma_c(h(R)) = h(\sigma_c(R))$
\item $\hat{\pi}_W(h(R)) = h(\hat{\pi}_W(R))$
\item $h(R) \Join h(S) = h(R\Join S)$
\item $h(R) \uplus_B h(S) = h(R \uplus_B S)$
\item $h(R) \backslash h(S)  = h(R\backslash S)$
\item $\rho_{B \mapsto B'}(h(R)) = h(\rho_{B \mapsto B'}(R))$
\item $\varepsilon_{B := e}(h(R)) = h(\varepsilon_{B := e}(R))$
\item $\gamma_{K';V'}(h(R)) = h(\gamma_{K';V'}(R))$
  \end{enumerate}
\end{lemma}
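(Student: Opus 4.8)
The plan is to prove each of the eight identities by unfolding the definitions of the operators from Fig.~\ref{fig:evaluation} on both sides and checking equality. The unifying observation, which I would record once at the outset and then invoke uniformly, is that a valuation $h$ affects only value-fields---replacing each symbolic expression by its value---while leaving every key-field untouched, and that it acts tuple-by-tuple. Concretely: $h(R) = \{h(t) \mid t \in R\}$, $h(t)[U] = h(t[U])$ for any attribute set $U$, and $h(t).A = t.A$ whenever $A$ is a key-field. These facts let me push $h$ through set-builder notation and commute it past restrictions to attribute sets.

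For the six \emph{structural} operations---selection, projection-away, join, discriminated union, difference, and renaming---commutativity is then essentially immediate, because the only data these operators inspect in order to decide which tuples or columns appear in the output are key-fields, which $h$ fixes. For instance, in the selection case the condition $c$ mentions only key-fields (by the well-formedness restriction), so $c(h(t)) = c(t)$ and hence $\sigma_c(h(R)) = \{h(t) \mid t \in R,\ c(t) = \mathit{true}\} = h(\sigma_c(R))$. The join matches tuples on shared key-fields, which $h$ preserves, and merely carries value-fields along to be evaluated by $h$; the discriminated union only adjoins a constant tag $0$ or $1$ in the new key-field $B$, which $h$ fixes; and difference deletes tuples according to their key-fields' presence in $S$. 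In each case one unfolds the right-hand side, pushes $h$ inside the set-builder using the preliminary facts, and matches the left-hand side.

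The two cases carrying genuine arithmetic content are derivation and aggregation, and I expect these to be the main (though still routine) obstacle. For derivation, the claim reduces to the fact that $h$ commutes with expression evaluation, i.e. $h(e(t)) = e(h(t))$ for every expression $e$ generated by the grammar for $e$. I would prove this by a short induction on the structure of $e$: the base cases are constants $\alpha$ (fixed by $h$) and attribute references $A$ (where $h(t).A = h(t.A)$ by definition of $h$ on tuples), and the inductive cases follow because $h$ respects $+,-,\cdot,/$ as operations on reals. Given this, $\varepsilon_{B:=e}(h(R)) = \{(h(t)).B := e(h(t)) \mid t \in R\} = \{h(t.B := e(t)) \mid t \in R\} = h(\varepsilon_{B:=e}(R))$. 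For aggregation, the grouping is performed on key-fields, which $h$ preserves, so the groups coincide on both sides; it then suffices to observe that $h$ is additive, $h(\sum Z) = \sum h(Z)$, whence $h(\mathit{sum}(R,t,K',V')) = \mathit{sum}(h(R),h(t),K',V')$ and the identity follows. I would isolate the expression-evaluation lemma and the additivity of $h$ as the only truly arithmetic ingredients; every other case is bookkeeping about key-fields being untouched by $h$.
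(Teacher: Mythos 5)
Your proposal is correct and follows essentially the same route as the paper's proof: both rest on the observation that $h$ fixes key-fields and acts only on value-fields, which disposes of the six structural operators immediately, and both reduce derivation to the commutation of $h$ with expression evaluation and aggregation to the additivity of $h$ over sums of grouped value-fields. Your version is somewhat more explicit than the paper's (which treats only selected cases and states the rest as straightforward), but the decomposition and the key facts invoked are the same.
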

\begin{proof}
We consider selected cases; the rest are straightforward.

For part (1), we need to show that the result of a selection applied to a grounded symbolic table $h(R)$ is the same as performing the selection symbolically and then applying the grounding valuation.  This is the case because the selection condition cannot mention value fields, and so the decision whether to select a tuple cannot depend on symbolic fields that might be affected by $h$.

For part (2), again since projection-away can only affect value fields, the key fields are unaffected so performing the projection-away on the grounded table $h(R)$ is the same as grounding the projected-away symbolic table.

For part (3) the reasoning is similar to that for selection, since joins can only involve common key fields and not value fields.

For part (4) again since the discriminant field $B$ is a key field which cannot be affected by $h$, the result is immediate.

Parts (5) and (6) are likewise straightforward.

For part (7), the expression $e$ used in the derivation will not mention any variables in $X$ (but only attributes of $R$), so $h$ will commute with $e$, and the desired result follows by calculation.

Finally, for part (8) we again note that since key fields may not be symbolic, the keys of the result of aggregating the grounded $h(R)$ are the same as for aggregating the symbolic table $R$, and by unfolding definitions we can check that the results obtained are the same.
\end{proof}
We now prove a slight generalization of Theorem~\ref{thm:commutation}:
\begin{theorem}
For any well-formed query $q$ such that $\Sigma \vdash q : K \tri V$, any s-instance $I\in Inst(\Sigma)$ with variables $X$ and any valuation $h : \mathbb{R}^X$, we have $h(q(I)) =q(h(I))$.  Moreover, $q\banana{I} = \banana{q(I)}$.
\end{theorem}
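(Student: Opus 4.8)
The plan is to prove the pointwise equality $h(q(I)) = q(h(I))$ first, by structural induction on the well-formedness derivation $\Sigma \vdash q : K \tri V$, and then to obtain the set-level statement $q\banana{I} = \banana{q(I)}$ as an immediate corollary by unfolding the definitions of $\banana{\cdot}$ and $q\banana{\cdot}$. The entire per-operator work has in fact already been discharged by Lemma~\ref{lem:commutes}, so the induction amounts to threading that lemma through the structure of $q$.

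For the base case, $q = R$ is a relation name, so $q(I) = I(R)$ and hence $h(q(I)) = h(I(R))$; on the other hand $q(h(I)) = (h(I))(R)$, which by the definition of how a valuation acts on an instance is exactly $h(I(R))$. For the inductive step, the outermost operator of $q$ is one of $\sigma_c, \hat{\pi}_W, \Join, \uplus_B, \backslash, \rho, \varepsilon, \gamma$, and I would in each case apply the corresponding part of Lemma~\ref{lem:commutes} to push $h$ through the outermost operator and then the induction hypothesis to the immediate subquery (or subqueries). For a unary example, $q = \sigma_c(q')$ gives $h(q(I)) = h(\sigma_c(q'(I))) = \sigma_c(h(q'(I)))$ by Lemma~\ref{lem:commutes}(1), and this equals $\sigma_c(q'(h(I))) = q(h(I))$ by the induction hypothesis. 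Binary operators are handled identically, applying the hypothesis to each argument; e.g. $q = q_1 \Join q_2$ yields $h(q_1(I) \Join q_2(I)) = h(q_1(I)) \Join h(q_2(I)) = q_1(h(I)) \Join q_2(h(I))$ using Lemma~\ref{lem:commutes}(3). Since every operator of the calculus occurs in the lemma, the induction is complete.

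The set-level statement then follows by calculation: unfolding definitions gives $q\banana{I} = \{q(h(I)) \mid h \in \RR^X\}$ and $\banana{q(I)} = \{h(q(I)) \mid h \in \RR^X\}$. By the pointwise equality already established, $q(h(I)) = h(q(I))$ for every $h$, so the two sets are generated by the same family of ground tables indexed by $h$ and are therefore equal.

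The main subtlety — and the one genuine difference from the classical c-table correctness proof — is the \emph{global} scoping of variables. Because variables in $I$ are shared across all tables and a valuation $h$ fixes every variable simultaneously, the commuting property must be stated and proved for a single global $h$ threaded uniformly through the entire instance, rather than for table-local valuations as in c-tables. This is exactly what makes the pointwise equality lift cleanly to the set-level equality; once the statement is phrased this way, nothing in the inductive argument depends on which table a given variable appears in, and the remaining steps are routine.
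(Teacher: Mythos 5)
Your proposal is correct and follows essentially the same route as the paper: structural induction on the well-formedness derivation using Lemma~\ref{lem:commutes} for each operator case, followed by the same unfolding calculation for the set-level equality. Your remark on global variable scoping matches the paper's own discussion of why no renaming is needed.
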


\begin{proof}
The second part (corresponding to Theorem~\ref{thm:commutation}) follows from the first.
The proof of the first part is by induction on the structure/ well-formedness judgement of $q$.  Each case (except for the base case of a relation name) corresponds to part of Lemma~\ref{lem:commutes}.  For the second part, we reason as follows:
\[\begin{array}{rclr}
q\banana{I} &=& \{q(J) \mid J \in \banana{I}\} = \{q(h(I)) \mid h : \mathbb{R}^X\} = \\
& & \{h(q(I)) \mid h : \mathbb{R}^X\} = \banana{q(I)} & \hphantom{00000000}
\end{array}
\]
\end{proof}
This is similar to the standard results about c-tables showing that they  form a strong representation system for relational queries over incomplete databases~\cite{imielinski84jacm}.  The main difference is that, as previously said, in our setting, the variables mapped by $h$ are globally scoped.  This means that to correctly simulate operations that take multiple tables, such as joins and unions, we do not need to rename the variables to avoid unintended overlap.  In fact, this would be incorrect: suppose $I(R) = \{(a:1, b:x)\}$ and $I(S) = \{(a:1,c:x)\}$.  Then using our semantics, the join $R \Join S$ evaluated in $I$ is $\{(a=1,b=x,c=x)\}$ which represents all tuples where $a= 1$ and the $b$ and $c$ fields are equal real numbers.  In contrast, using c-table semantics, the variables in $R$ and $S$ would be renamed so the join result would be some variant of $\{(a:1,b:x',c:y')\}$ which represents all tuples whose $a$ component is 1, with no constraint relating $b$ and $c$.

\end{document}